\def\AND{\mathsf{AND}}
\def\OR{\mathsf{OR}}
\newcommand{\red}[1]{{\color{red}{#1}}}
\newcommand{\jnote}[1]{\footnote{{\bf \color{green}Jarek}: {#1}}}
\newtheorem*{rep@theorem}{\rep@title}
\newcommand{\newreptheorem}[2]{
\newenvironment{rep#1}[1]{
 \def\rep@title{#2 \ref{##1}}
 \begin{rep@theorem}\itshape}
 {\end{rep@theorem}}}
\theoremstyle{plain}
\def\inner{\mathsf{in}}
\def\outer{\mathsf{out}}
\begin{document}

\title{Fourier growth of structured $\mathbb{F}_2$-polynomials\\ and applications}

\author{
Jaros\l{}aw B\l{}asiok\\
Columbia University\\
jb4451@columbia.edu
\and
Peter Ivanov\\
Northeastern University\\
ivanov.p@northeastern.edu
\and
Yaonan Jin
\\
Columbia University\\
yj2552@columbia.edu
\and
Chin Ho Lee\\
Harvard University\\
chlee@seas.harvard.edu
\and
Rocco A. Servedio\\
Columbia University\\
rocco@cs.columbia.edu
\and
Emanuele Viola\\
Northeastern University\\
viola@ccs.neu.edu
}

\maketitle

\abstract{
We analyze the \emph{Fourier growth}, i.e.~the $L_1$ Fourier weight at level $k$ (denoted $L_{1,k}$),  of various well-studied classes of ``structured'' $\F_2$-polynomials.  This study is motivated by applications in pseudorandomness, in particular recent results and conjectures due to \cite{CHHL,CHLT,CGLSS} which show that upper bounds on Fourier growth (even at level $k=2$) give unconditional pseudorandom generators. 

Our main results on Fourier growth are as follows:

\begin{itemize}

  \item We show that any symmetric degree-$d$ $\F_2$-polynomial $p$ has $L_{1,k}(p) \leq \Pr[p=1] \cdot O(d)^k$, and this is tight for any constant $k$. This quadratically strengthens an earlier bound that was implicit in \cite{RSV13}.
\ignore{		\red{We also establish lower bounds showing that this upper bound is best possible for constant $k$.\jnote{I find this statement a bit confusing --- somehow I read it as ``for every $d$ and $k$, and arbitrarily large $n$, there is a symmetric degree $d$ polynomial on $n$ variables with $L_{1,k} \geq d^k$'' --- which it clearly is not.} }
}

\item We show that any read-$\Delta$ degree-$d$ $\F_2$-polynomial $p$ has $L_{1,k}(p) \leq \Pr[p=1] \cdot (k \Delta d)^{O(k)}$.

\item We establish a composition theorem which gives $L_{1,k}$ bounds on disjoint compositions of functions that are closed under restrictions and admit $L_{1,k}$ bounds. 

\end{itemize}

Finally, we apply the above results to obtain new unconditional pseudorandom generators and new correlation bounds for various classes of $\F_2$-polynomials.
}

\thispagestyle{empty}

\newpage

\setcounter{page}{1}


\section{Introduction} \label{sec:intro}

\subsection{Background:  $L_1$ Fourier norms and Fourier growth}

Over the past several decades, Fourier analysis of Boolean functions has emerged as a fundamental tool of great utility  across many different areas within theoretical computer science and mathematics.  Areas of application include (but are not limited to) combinatorics, the theory of random graphs and statistical physics, social choice theory, Gaussian geometry and  the study of metric spaces, cryptography, learning theory, property testing, and many branches of computational complexity such as hardness of approximation, circuit complexity, and pseudorandomness.  The excellent book of O'Donnell \cite{ODonnell:book} provides a broad introduction. In this paper we follow the notation of \cite{ODonnell:book}, and for a Boolean-valued function $f$ on $n$ Boolean variables and $S \subseteq [n]$, we write $\widehat{f}(S)$ to denote the Fourier coefficient of $f$ on $S$.

Given the wide range of different contexts within which the Fourier analysis of Boolean functions has been pursued, it is not surprising that many different quantitative parameters of Boolean functions have been analyzed in the literature. In this work we are chiefly interested in the \emph{$L_1$ Fourier norm at level $k$}:

\begin{definition}[$L_1$ Fourier norm at level $k$]
	The \emph{$L_1$ Fourier norm of a function $f\colon\pmone^n \to \zo$ at level $k$} is the quantity
	\begin{align*}
		L_{1,k}(f) := \sum_{S \subseteq [n] : \abs{S}=k} \abs{\hf(S)} .
	\end{align*}
	For a function class $\calF$, we write $L_{1,k}(\calF)$ to denote $\sup_{f \in \calF} L_{1,k}(f)$.
\end{definition}

As we explain below, strong motivation for studying the $L_1$ Fourier norm at level $k$ (even for specific small values of $k$ such as $k=2$) is given by exciting recent results in unconditional \emph{pseudorandomness}.
More generally, the notion of \emph{Fourier growth} is a convenient way of capturing the $L_1$ Fourier norm at level $k$ for every $k$:

\begin{definition}[Fourier growth]
	A function class $\calF \subseteq \{ f\colon\pmone^n \to \zo \}$ has \emph{Fourier growth $L_1(a,b)$} if there exist constants $a$ and $b$ such that $L_{1,k}(\calF) \le a \cdot b^k$ for every $k$.
\end{definition}

The notion of Fourier growth was explicitly introduced by  Reingold, Steinke, and Vadhan in \cite{RSV13} for the purpose of constructing pseudorandom generators for space-bounded computation (though we note that the Fourier growth of DNF formulas was already analyzed in \cite{Man95}, motivated by applications in learning theory). In recent years there has been a surge of research interest in understanding the Fourier growth of different types of functions \cite{Tal17,GSTW16,CHRT,Lee,GRZ,Tal,SSW,GRT}.  One strand of motivation for this study has come from the study of quantum computing;
in particular, bounds on the Fourier growth of $\AC^0$~\cite{Tal17} were used in the breakthrough result of Raz and Tal~\cite{RazT19} which gave an oracle separation between the classes $\BQP$ and $\PH$.
More recently, in order to achieve an optimal separation between quantum and randomized query complexity, several researchers~\cite{Tal,BS,SSW} have studied the Fourier growth of decision trees, with the recent work of~\cite{SSW} obtaining optimal bounds.  Analyzing the Fourier growth of other classes of functions has also led to separations between quantum and classical computation in other settings~\cite{GRT,GRZ,GirishTW21}.

Our chief interest in the current paper arises from a different line of work which has established powerful applications of Fourier growth bounds in pseudorandomness.
We describe the relevant background, which motivates a new conjecture that we propose on Fourier growth, in the next subsection.

\subsection{Motivation for this work:  Fourier growth, pseudorandomness, $\F_2$-polynomials, and the \cite{CHLT} conjecture}

\paragraph{Pseudorandom generators from Fourier growth bounds.}
Constructing explicit, unconditional pseudorandom generators (PRGs) for various classes of Boolean functions is an important goal in complexity theory.  
In the recent work \cite{CHHL}, Chattopadhyay, Hatami, Hosseini, and Lovett introduced a novel framework for the design of such PRGs.
Their approach provides an explicit pseudorandom generator for any class of functions that
is closed under restrictions and has bounded Fourier growth:

\begin{theorem} [PRGs from Fourier growth: Theorem~23 of \cite{CHHL}] \label{thm:CHHL}
	Let ${\cal F}$ be a family of $n$-variable Boolean functions that is closed under restrictions and has Fourier growth $L_1(a,b)$.  Then there is an explicit pseudorandom generator that $\eps$-fools ${\cal F}$ with seed length 
	$O(b^2\log(n/\eps)(\log \log n + \log(a/\eps))).$
\end{theorem}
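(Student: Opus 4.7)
My plan is to follow the \emph{polarizing random walk} framework introduced by Chattopadhyay, Hatami, Hosseini, and Lovett. The idea is to reduce the construction of a full Boolean pseudorandom generator to a much weaker object, an \emph{$L_1$-fractional PRG}: a distribution $D$ on a scaled cube $[-\alpha,\alpha]^n$ whose induced expectation on the multilinear extension $\widetilde f$ of any $f \in \calF$ approximates $\widehat{f}(\emptyset)$ within some small error $\delta$. Given such a $D$, one iterates it via a bounded random walk $X_0 = \vec 0$ and $X_{t+1} = X_t + (\mathbf{1} - X_t^2)^{1/2} \odot Y_t$ with independent $Y_t \sim D$ (all operations coordinate-wise); after $T$ steps $X_T$ lies in $\pmone^n$ with probability close to $1$, and its distribution $\eps$-fools $\calF$.

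First I would construct the fractional PRG. Take $D$ to be a $\gamma$-biased distribution on $\pmone^n$ rescaled by $\alpha$. Expanding the multilinear extension and using the Fourier-growth hypothesis $L_{1,k}(f) \le a b^k$ bounds the fractional error by
\[
	\left| \mathbb{E}_{y \sim D}[\widetilde f(y)] - \widehat f(\emptyset) \right|
	\;\le\; \gamma \sum_{k \ge 1} \alpha^k L_{1,k}(f)
	\;\le\; a \gamma \sum_{k \ge 1} (\alpha b)^k
	\;\le\; 2 a \gamma,
\]
provided $\alpha b \le 1/2$. Choosing $\alpha = \Theta(1/b)$ and $\gamma = \Theta(\eps/(aT))$ makes the per-step error at most $\eps/T$, at a per-step seed cost of $O(\log(n/\gamma))$ bits.

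Next I would analyze the polarizing walk. The coordinates of $X_t$ form a $[-1,1]$-valued martingale; a direct calculation gives $\mathbb{E}[\mathbf{1} - X_{t+1}^2 \mid X_t] = (1 - \alpha^2)(\mathbf{1} - X_t^2)$ coordinate-wise, so that $\mathbb{E}\,\|\mathbf{1} - X_T^2\|_1 \le n(1-\alpha^2)^T \le n \cdot e^{-\alpha^2 T}$. Choosing $T = \Theta(\alpha^{-2} \log(n/\eps)) = \Theta(b^2 \log(n/\eps))$ ensures the walk hits $\pmone^n$ except on an $\eps$-measure event. Crucially, at every intermediate step the conditional expectation $\mathbb{E}[f \mid X_t]$ is the multilinear extension of a convex combination of restrictions of $f$; because $\calF$ is closed under restrictions, the per-step fractional fooling guarantee of $D$ applies uniformly, and the per-step errors telescope to total error $\eps$.

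The main obstacle is balancing the polarization rate against the per-step accuracy of the fractional PRG: making $\alpha$ smaller improves the Fourier-truncation bound but forces $T \gtrsim 1/\alpha^2$ more iterations, blowing up the seed, while making $\alpha$ larger violates $\alpha b \le 1/2$. The optimal choice $\alpha = \Theta(1/b)$ forced by the Fourier growth is the source of the $b^2$ factor in the seed length. Obtaining the $\log\log n$ (rather than $\log n$) per-step cost in the final bound additionally requires using a cheaper almost-$k$-wise independent distribution (with $k = O(\log(n/\eps))$) in place of the full $\gamma$-biased distribution, which exactly fools Fourier weight up to level $k$ while the Fourier growth bound suppresses the contribution from higher levels.
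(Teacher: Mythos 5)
This theorem is not proved in the present paper; it is imported verbatim as Theorem~23 of \cite{CHHL}, so there is no internal proof to compare against. Your sketch faithfully reproduces the polarizing-random-walk framework of \cite{CHHL}: construct a fractional PRG by scaling a small-bias distribution by $\alpha = \Theta(1/b)$ so that the Fourier-growth hypothesis geometrically suppresses the error at each level, then drive a coordinate-wise martingale walk to $\pmone^n$ in $T = \Theta(b^2 \log(n/\eps))$ steps, using closure under restrictions to justify applying the per-step fractional-fooling bound conditionally at each intermediate point. This is the correct high-level structure, and your identification of the tradeoff forcing $\alpha \approx 1/b$ (and hence the $b^2$ in the seed) is exactly right.

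One quantitative detail to tighten: your proposed choice of $k = O(\log(n/\eps))$ for the almost-$k$-wise distribution is not what makes the $\log\log n$ per-step cost work out. With $\alpha b = 1/2$, the contribution from Fourier levels above $k'$ to the fractional error is at most $a \cdot 2^{-k'}$, so it suffices to take $k' = O(\log(aT/\eps)) = O(\log(a/\eps) + \log b + \log\log(n/\eps))$; the per-step seed for a $\delta$-almost $k'$-wise independent distribution is then $O(\log\log n + k' + \log(1/\delta))$, which together with the choice $\delta = \Theta(\eps/(aT))$ gives per-step cost $O(\log\log n + \log(a/\eps))$ after absorbing the lower-order $\log b$ and $\log\log(n/\eps)$ terms (one may assume $b \le n$ and $a/\eps \ge 2$ without loss of generality). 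Taking $k' = \Theta(\log(n/\eps))$ as you wrote would make the per-step cost $\Omega(\log n)$ and spoil the final bound. Also, a small point of rigor in the middle step: the claim that $\E[f \mid X_t]$ is ``the multilinear extension of a convex combination of restrictions of $f$'' should really be phrased via the lemma in \cite{CHHL} that the local Fourier coefficients $\widehat{\tilde f_x}(S)$ at any $x \in (-1,1)^n$ are averages over random restrictions $\rho$ of $\widehat{f_\rho}(S)$; it is this identity, combined with closure under restrictions, that lets you reuse the fractional-PRG guarantee at every step.
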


Building on \Cref{thm:CHHL}, in \cite{CHLT} Chattopadhyay, Hatami, Lovett, and Tal showed that in fact it suffices to have a bound just on $L_{1,2}({\cal F})$ in order to obtain an efficient PRG for ${\cal F}$:

\begin{theorem} [PRGs from $L_1$ Fourier norm bounds at level $k=2$: Theorem~2.1 of \cite{CHLT}] \label{thm:CHLT}
	Let ${\cal F}$ be a family of $n$-variable Boolean functions that is closed under restrictions and has $L_{1,2}({\cal F}) \leq t$.   Then there is an explicit pseudorandom generator that $\eps$-fools ${\cal F}$ with seed length 
	$O((t/\eps)^{2+o(1)}\cdot \polylog(n)).$
\end{theorem}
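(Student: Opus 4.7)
The plan is to follow the [CHHL] framework from Theorem~\ref{thm:CHHL} but refine the error analysis so that only the level-$2$ Fourier bound appears in the final seed length. Recall that the CHHL PRG is a composition of ``nudge'' steps $x \mapsto x + \rho^{(i)} \odot y^{(i)}$, where each $\rho^{(i)} \in \{0,1\}^n$ is a $p$-biased mask playing the role of a random restriction, and each $y^{(i)} \in \pmone^n$ is a $\delta$-biased string; after $r$ such steps the output fools the target class $\calF$. CHHL bound the one-step Fourier error by $\sum_S |\widehat{f}(S)| \cdot |E[(\rho \odot y)^S]|$ and then sum the level-$k$ Fourier growth bound $L_{1,k}(f)(p\delta)^k$ across all $k$. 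Our goal is to pay only for the $k=2$ term.

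First I would set up the fractional PRG exactly as in CHHL and split the one-step error into $k=1$, $k=2$, and $k \ge 3$ contributions. The $k=1$ term vanishes in expectation over $\rho$, and a standard hybrid argument across the $r$ rounds absorbs its fluctuations into a lower-order error. The $k=2$ term is bounded directly by $t \cdot (p\delta)^2$, using the hypothesis. Second, for the tail $k \ge 3$, the key observation is that the $p$-biased mask $\rho$ effectively restricts $f$ to a random subcube, and since $\calF$ is closed under restrictions the restricted functions lie in $\calF$ and inherit the $L_{1,2} \le t$ bound. Exploiting this I would prove an inequality of the form $E_\rho[L_{1,k-1}(f|_\rho)] \le C p \cdot L_{1,k}(f)$, which upon iteration reduces a level-$k$ coefficient sum to a level-$2$ coefficient sum at the cost of a factor $(Cp)^{k-2}$. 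Thus the tail is geometric in $p$ and, for $p$ a sufficiently small absolute constant, is absorbed into the $k=2$ bound. Third, I would optimize $p$, $\delta$, and $r$: choosing $p$ constant, $\delta \approx \eps/(t \cdot \polylog(n/\eps))$ so that $t(p\delta)^2 \cdot r \le \eps$, and $r = \polylog(n/\eps)$ to drive the hybrid error down, yields a total seed length of $O((t/\eps)^{2+o(1)}\cdot \polylog(n))$ once composed with a small-bias generator of seed length $O(\log(n/\delta))$ per round.

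The main obstacle is executing the second step cleanly, i.e.\ making rigorous the ``restriction peels a level of Fourier mass while keeping us in $\calF$'' intuition, so that higher-order Fourier contributions are genuinely paid for by the $L_{1,2}$ hypothesis alone. Proving the recursive inequality $E_\rho[L_{1,k-1}(f|_\rho)] \le O(p) \cdot L_{1,k}(f)$ requires a careful Cauchy--Schwarz/H\"older computation and depends on the mask distribution being almost $k$-wise independent rather than merely $p$-biased. Getting the constants tight enough that the final exponent is $2 + o(1)$ rather than a larger constant involves a delicate balancing of $p$, $\delta$, and $r$; the $o(1)$ slack in the exponent is exactly what this balancing absorbs.
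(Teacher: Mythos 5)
The theorem in question is quoted from \cite{CHLT} and is not proved in this paper, so there is no in-paper proof to compare against; I will evaluate your proposal on its own terms. The key technical step you propose is incorrect: the recursive inequality $\E_\rho[L_{1,k-1}(f_\rho)] \le Cp\, L_{1,k}(f)$ fails, because a random restriction does not ``peel'' exactly one Fourier level. The level-$(k-1)$ Fourier mass of the restricted function $f_\rho$ receives contributions from \emph{all} levels $\ge k-1$ of $f$, not just level $k$. Take $f(x) = x_1 x_2$ and $k = 3$: then $L_{1,3}(f) = 0$, yet for a $p$-random restriction $\rho$ (each coordinate kept alive independently with probability $p$), $L_{1,2}(f_\rho) = 1$ whenever both $x_1$ and $x_2$ survive, so $\E_\rho[L_{1,2}(f_\rho)] = p^2 > 0 = Cp\, L_{1,3}(f)$. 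Even setting aside the counterexample, the inequality (were it true) would give a \emph{lower} bound on $L_{1,k}(f)$ in terms of restricted lower-level mass; to absorb the tail $\sum_{k\ge3} L_{1,k}(f)(p\delta)^k$ into the level-$2$ hypothesis you would need an \emph{upper} bound on $L_{1,k}(f)$ in terms of level-$2$ quantities, which is a different statement and not generally available.

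The argument in \cite{CHLT} does exploit closure under restrictions, but not by converting higher Fourier levels of $f$ into lower ones. Their mechanism is the derivative--restriction correspondence (the same identity that appears as \Cref{claim:derivatives} in this paper): partial derivatives of the multilinear extension of $f$ at an arbitrary point $\beta$ of the polarizing walk equal, after rescaling, expected Fourier coefficients of restrictions $f_{R_\beta}$. Since $f_{R_\beta} \in \calF$, the level-$2$ derivative mass at \emph{every} point of the walk is controlled by $L_{1,2}(\calF) \le t$ directly, with no claim needed about $L_{1,k}(f)$ for $k \ge 3$. The $(t/\eps)^2$ factor in the seed length then emerges from a second-moment/concentration analysis of the walk rather than from resumming a geometric Fourier tail --- this is precisely why the $\eps$-dependence is polynomial here, in contrast to the logarithmic dependence in \Cref{thm:CHHL}, where full Fourier growth is available.
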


Observe that while \Cref{thm:CHLT} requires a weaker structural result than \Cref{thm:CHHL} (a bound only on $L_{1,2}({\cal F})$ as opposed to $L_{1,k}({\cal F})$ for all $k \geq 1$), the resulting pseudorandom generator is quantitatively weaker since it has seed length polynomial rather than logarithmic in the error parameter $1/\eps$.
Even more recently, in \cite{CGLSS} Chattopadhyay, Gaitonde, Lee, Lovett, and Shetty further developed this framework by interpolating between the two results described above. They showed that a bound on $L_{1,k}$\footnote{In fact, they showed that a bound on the weaker quantity $M_{1,k}(f) := \max_{x \in \pmone^n} \abs{\sum_{\abs{S}=k} \hf(S) x^S}$ suffices.} for any $k \ge 3$ suffices to give a PRG, with a seed length whose $\eps$-dependence scales with $k$:

\begin{theorem} [PRGs from $L_1$ Fourier norm bounds up to level $k$ for any $k$: Theorem~4.3 of \cite{CGLSS}] \label{thm:CGLSS}
  Let ${\cal F}$ be a family of $n$-variable Boolean functions that is closed under restrictions 
	and has $L_{1,k}(\calF) \le b^k$ for some $k \geq 3.$
	Then there exists a pseudorandom generator that $\eps$-fools $\calF$ with seed length 
	$
	O\left( \frac{b^{2+\frac{4}{k-2}} \cdot k \cdot \polylog(\frac{n}{\eps})}{\eps^{\frac{2}{k-2}}} \right) .
	$
\end{theorem}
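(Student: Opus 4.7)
The plan is to adapt the random walk framework of \cite{CHHL,CHLT}, tuning the parameters to exploit the level-$k$ Fourier bound. At a high level, I would construct a pseudorandom iteration $p^{(0)} = 0^n, p^{(1)}, \ldots, p^{(T)}$ in $\mathbb{R}^n$ so that, after rounding each coordinate to its sign, the endpoint is a distribution on $\{-1,+1\}^n$ along which the multilinear extension $\tilde f$ of every $f \in \calF$ has moved by at most $\eps$ from $\tilde f(0)$ in expectation.

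First, I would define each step by $p^{(t+1)} = p^{(t)} + \delta \cdot r^{(t)}$, where $r^{(t)} \in \{-1,+1\}^n$ is drawn independently from a $k$-wise $\gamma$-biased distribution of seed length $\polylog(n/\eps)$, and $\delta, \gamma, T$ are parameters to be chosen at the end. The generator's output is $\mathrm{sign}(p^{(T)}) \in \{-1,+1\}^n$.

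Next, I would bound the per-step drift $\mathbb{E}[\tilde f(p^{(t+1)}) - \tilde f(p^{(t)}) \mid p^{(t)}]$. Expanding by multilinearity gives $\sum_{S} \delta^{|S|}\, \mathbb{E}[\prod_{i \in S} r^{(t)}_i] \cdot \partial_S \tilde f(p^{(t)})$. The $k$-wise $\gamma$-bias of $r^{(t)}$ makes contributions with $|S| \le k$ negligible (bounded by $\gamma$ times the corresponding $L_1$ Fourier mass), while for $|S| > k$ each partial derivative $\partial_S \tilde f$ is the multilinear extension of a restriction of $f$ and, by closure of $\calF$ under restrictions, still satisfies $L_{1,k} \le b^k$. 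The dominant contribution turns out to come from $|S| = k$, giving a per-step drift bounded by $\delta^k b^k + \gamma$ up to polylogarithmic factors.

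Summing over $T$ steps (with a martingale concentration argument to control deviations) yields total deviation at most $T(\delta^k b^k + \gamma)$. To ensure $\mathrm{sign}(p^{(T)})$ is suitably well-distributed on $\{-1,+1\}^n$, I would take $T = \Theta(1/\delta^2)$ so that the walk attains coordinate magnitudes of order $1$. Choosing $\gamma$ negligibly small leaves the binding constraint $\delta^{k-2} b^k \lesssim \eps$, i.e., $\delta \lesssim (\eps/b^k)^{1/(k-2)}$, and hence $T \gtrsim b^{2k/(k-2)}/\eps^{2/(k-2)} = b^{2+4/(k-2)}/\eps^{2/(k-2)}$. Since each step uses $\polylog(n/\eps)$ seed bits, the total seed length matches the claim up to the factor of $k$.

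The main obstacle will be the per-step analysis: the contributions from $|S| > k$ must be summed across all higher levels simultaneously and collectively controlled by $b^k$ via restriction-closure, which calls for a telescoping or inductive argument rather than a naive level-by-level bound. This is precisely where the hypothesis that $\calF$ is closed under restrictions is exploited in full, and it is the only place where the level-$k$ Fourier information (as opposed to an all-levels bound \`a la \Cref{thm:CHHL} or a level-$2$ bound \`a la \Cref{thm:CHLT}) enters non-trivially. A secondary technicality is the rounding step $p^{(T)} \mapsto \mathrm{sign}(p^{(T)})$, whose effect must be subleading; this is handled by a standard anti-concentration / CLT-type argument once $T \delta^2 \gtrsim 1$.
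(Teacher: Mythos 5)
This theorem is imported from \cite{CGLSS} (it is their Theorem~4.3); the present paper cites it as a black box and does not supply a proof, so there is no in-paper argument to compare against. That said, your sketch correctly recalls the polarizing random-walk framework of \cite{CHHL,CHLT} on which \cite{CGLSS} builds: a small step $\delta \cdot r^{(t)}$ with $r^{(t)}$ drawn from a small-bias distribution, drift of the multilinear extension controlled by Fourier-level bounds, and $T = \Theta(1/\delta^2)$ rounds to polarize before rounding. Your parameter calculation --- forcing $T \cdot \delta^k b^k \lesssim \eps$ with $T = 1/\delta^2$, giving $\delta \lesssim (\eps/b^k)^{1/(k-2)}$ and $T \gtrsim b^{2k/(k-2)}/\eps^{2/(k-2)} = b^{2+4/(k-2)}/\eps^{2/(k-2)}$ --- reproduces the claimed seed length exactly.

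However, you have identified rather than resolved the crux. The assertion that for $|S|>k$ the derivative $\partial_S \tilde f$ ``is the multilinear extension of a restriction of $f$ and by closure still satisfies $L_{1,k}\le b^k$'' does not, on its own, bound the quantity you need: a level-$k$ bound on a restricted function does not directly control the magnitude of an individual high-order derivative at an interior point $p^{(t)} \in (-1,1)^n$, and summing naively over all restrictions that could produce a given level-$j$ coefficient incurs a $\binom{n}{j-k}$ overcount. Handling the levels $j>k$ (and, separately, $1\le j<k$, which is where $\gamma$-bias is actually exploited) from a single-level hypothesis is precisely the technical content that distinguishes \cite{CGLSS} from \cite{CHHL} and \cite{CHLT}, and is where the extra factor of $k$ in the seed length arises; your sketch flags this as ``the main obstacle'' but leaves it open. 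So: right framework, right parameter bookkeeping, but the load-bearing lemma is missing.
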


\paragraph{$\F_2$-polynomials and the \cite{CHLT} conjecture.}

The works \cite{CHHL} and \cite{CHLT} highlighted the challenge of proving $L_{1,k}$ bounds for the class of \emph{bounded-degree $\F_2$-polynomials} as being of special interest.  Let 
\[
\Poly_{n,d} := \text{~the class of all $n$-variate $\F_2$-polynomials of degree $d$.}
\]
It follows from \Cref{thm:CHLT} that even proving
\begin{equation} \label{eq:weak-conj}
	L_{1,2}(\Poly_{n,\polylog(n)}) \le n^{0.49}
\end{equation}
would give nontrivial PRGs for $\F_2$-polynomials of $\polylog(n)$ degree, improving on \cite{BoV-gen,Lov09,Viola-d}.
By the classic connection (due to Razborov \cite{Raz87}) between such polynomials and the class $\AC^0[\oplus]$ of constant-depth circuits with parity gates, this would also give nontrivial PRGs, of seed length $n^{1-c}$, for $\AC^0[\oplus]$. This would be a breakthrough improvement on existing results, which are poor either in terms of seed length \cite{FeffermanSUV10}
or in terms of explicitness \cite{ChenLW2020-almost}.

The authors of \cite{CHLT} in fact conjectured the following bound, which is much stronger than  \Cref{eq:weak-conj}:
\begin{conjecture}[\cite{CHLT}] \label{conj:CHLT}
	For all $d \geq 1$, it holds that $L_{1,2}(\Poly_{n,d})=O(d^2)$.
\end{conjecture}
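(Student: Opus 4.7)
The plan is an induction on the degree $d$, reducing level-$2$ bounds to level-$1$ bounds on derivative polynomials. By the standard identity $\widehat{\partial_i p}(\{j\}) = \hat{p}(\{i,j\})$ for $i \ne j$, we can write $L_{1,2}(p) = \tfrac{1}{2} \sum_{i=1}^{n} L_{1,1}(\partial_i p)$, where $\partial_i p = (p(x_i=1) - p(x_i=-1))/2$ does not depend on $x_i$. Decomposing $p = p_0 \oplus y_i p_1$ over $\F_2$ (with $y_i = (1-x_i)/2$ and $p_0, p_1$ independent of $x_i$, $\deg p_1 \le d-1$), a short computation using $(-1)^{p_0} = 1 - 2p_0$ yields
\[
\partial_i p \;=\; -\tfrac{1}{2}\, p_1 \cdot (-1)^{p_0} \;=\; -\tfrac{1}{2} p_1 + p_0\, p_1,
\]
so that $L_{1,1}(\partial_i p) \le \tfrac{1}{2}\, L_{1,1}(p_1^{(i)}) + L_{1,1}(p_0^{(i)} \cdot p_1^{(i)})$. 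The base case $d = 1$ is immediate since affine $\F_2$-polynomials have no level-$2$ Fourier mass.

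For the inductive step, the first summand $L_{1,1}(p_1^{(i)})$ concerns a polynomial of degree $d - 1$ and could be handled by an analogous level-$1$ sub-conjecture $L_{1,1}(\Poly_{n,d-1}) = O(d)$, established by the same pivoting argument. The second summand is the crux: $p_0^{(i)} \cdot p_1^{(i)}$ is the real-valued product (equivalently the AND) of two Boolean functions, whose $\F_2$-degree may reach $2d - 1$, so direct iteration of induction is lossy. The plan is to exploit the convolutional form $\widehat{p_0^{(i)} \cdot p_1^{(i)}} = \widehat{p_0^{(i)}} * \widehat{p_1^{(i)}}$ and seek cancellations that become visible only after summation over pivots $i$; alternatively, one could pursue a random-restriction approach that reduces $p$ to a polynomial of bounded read or one well-approximated by a symmetric polynomial, at which point the structural theorems established elsewhere in this paper would apply.

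The principal obstacle is that the target bound $O(d^2)$ is independent of $n$, whereas a degree-$d$ polynomial may have up to $\binom{n}{d}$ monomials. Any per-pivot estimate of $L_{1,1}(\partial_i p)$ that uses only local information about $p$ will accumulate an unavoidable factor of $n$ upon summation over $i$, so a successful argument must uncover cancellations that are global in nature. The gap between the paper's $(k\Delta d)^{O(k)}$ bound for read-$\Delta$ polynomials---which relies precisely on read-locality---and the conjectured $O(d^2)$ for unrestricted polynomials makes this barrier concrete. A complete proof will likely require a new global structural theorem for $\F_2$-polynomials, perhaps a regularity-style partition into a low-dimensional structured part with bounded Fourier growth and a pseudorandom residue, that goes beyond the techniques available here.
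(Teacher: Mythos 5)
This statement is not a theorem of the paper but a \emph{conjecture} (due to \cite{CHLT}) which the paper explicitly leaves open; the paper proves it only for special subclasses (symmetric, read-$\Delta$, and disjoint compositions thereof). There is therefore no proof in the paper against which to compare your attempt, and your write-up is, appropriately, not a proof but a discussion of why the conjecture is difficult. Your algebra is sound — the identity $\partial_i p = -\tfrac12 p_1 \,(-1)^{p_0}$ is correct for $\{0,1\}$-valued $p$, and the decomposition $L_{1,2}(p) = \tfrac12\sum_i L_{1,1}(\partial_i p)$ is the standard pivoting reduction — but as you yourself note, the term $L_{1,1}(p_0 p_1)$ blocks the induction: $p_0 p_1$ (i.e.\ $\AND(p_0,p_1)$) has $\F_2$-degree up to $2d-1$, so the degree does not decrease, and any per-pivot bound that ignores cancellations across $i$ will pick up a factor of $n$. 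Your diagnosis of the obstruction (the need for a global, not local, structural argument) matches the state of the art; the paper circumvents it only by restricting to structured polynomial classes where disjointness (read-$\Delta$) or symmetry gives the necessary global control. In short: the proposal correctly identifies this as an open problem and is honest that the sketch does not close it, which is the right assessment here.
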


\paragraph{Extending the \cite{CHLT} conjecture.}
Given \Cref{conj:CHLT}, and in light of \Cref{thm:CGLSS}, it is natural to speculate that an even stronger result than \Cref{conj:CHLT} might hold.  We  consider the following natural generalization of the \cite{CHLT} conjecture, extending it from $L_{1,2}(\Poly_{n,d})$ to $L_{1,k}(\Poly_{n,d})$:

\begin{conjecture} \label{conj:CHLT-general}
	For all $d,k \geq 1$, it holds that  $L_{1,k}(\Poly_{n,d}) = O(d)^k$.
\end{conjecture}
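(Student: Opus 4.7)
The plan is to attempt an induction on the degree $d$, leveraging the structural tools already developed in this paper. The base case $d=1$ is immediate: every degree-$1$ $\F_2$-polynomial is either constant or an affine XOR of the form $\bigoplus_{i \in T} x_i \oplus c$, whose Fourier expansion as a $\{0,1\}$-valued function is supported only on $\emptyset$ and on $T$, so $L_{1,k}(p) \le 1/2$ for every $k \ge 1$.

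For the inductive step, I would fix a variable $x_i$ appearing in some monomial of maximum degree and decompose $p = x_i \cdot q \oplus r$, where $\deg(q) \le d-1$ and $r$ does not depend on $x_i$. The Fourier coefficients of $p$ relate to those of its two restrictions $p|_{x_i \gets 0} = r$ and $p|_{x_i \gets 1} = q \oplus r$ via
\begin{equation*}
\widehat{p}(S) \;=\; \tfrac{1}{2}\Bigl(\widehat{r}(S \setminus \{i\}) \,\pm\, \widehat{q \oplus r}(S \setminus \{i\})\Bigr),
\end{equation*}
with the sign depending on whether $i \in S$. Summing $|\widehat{p}(S)|$ over $|S| = k$ and applying the inductive hypothesis to $r$ and $q \oplus r$ then yields a recursion, which one would have to close via an averaging over the choice of $x_i$ in order to keep the growth from blowing up in $n$.

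The main obstacle is that both restrictions still have degree up to $d$, so the naive induction does not actually lower the parameter being inducted upon; a finer induction (say, on the number of variables appearing in the top-degree monomials, or on the number of monomials) would be required, and even then the recursion constants threaten to explode. The most promising line of attack, in my view, would be to adapt Hastad--Razborov-style random-restriction arguments to $\F_2$-polynomials: after an appropriate random restriction $\rho$, argue that $p_\rho$ almost surely collapses into one of the structured classes for which this paper already yields matching $O(d)^k$ bounds---symmetric polynomials, read-$\Delta$ polynomials with small $\Delta$, or disjoint compositions covered by the composition theorem---and then lift the bound on $L_{1,k}(p_\rho)$ back to $L_{1,k}(p)$ via a weight-redistribution argument analogous to the one used in the Fourier analysis of $\AC^0$.

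The truly hard step is the existence of such a ``structural switching'' lemma for $\F_2$-polynomials. This is precisely the obstacle that has blocked strong PRGs and correlation bounds for $\AC^0[\oplus]$ for decades: no random-restriction regime is currently known that reliably simplifies a general degree-$d$ $\F_2$-polynomial into a structured subclass without destroying essential Fourier information. Consequently, a complete proof of \Cref{conj:CHLT-general} will very likely require a substantially new idea, probably drawing on higher-order Fourier analysis or additive combinatorics.
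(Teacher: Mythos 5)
The statement you were asked to prove is \Cref{conj:CHLT-general}, which the paper presents explicitly as a \emph{conjecture}, not a theorem: the authors introduce it as their proposed generalization of the \cite{CHLT} conjecture and leave it open. The paper offers no proof of it; the positive results in the paper (\Cref{thm:symm}, \Cref{thm:read-few-poly}, \Cref{thm:composition}) establish the $O(d)^k$-type bound only for structured subclasses of $\Poly_{n,d}$---symmetric polynomials, read-$\Delta$ polynomials, and disjoint compositions of functions admitting such bounds---precisely because the general case is out of reach. So there is no ``paper's own proof'' to compare your attempt against.

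As for your attempt itself, you correctly diagnose where it fails. The degree-induction collapses immediately because restricting a variable in a degree-$d$ $\F_2$-polynomial need not reduce the degree (both $p|_{x_i \gets 0}$ and $p|_{x_i \gets 1}$ may still have degree $d$), so the parameter you induct on does not decrease; you note this yourself. Your fallback proposal---a random-restriction ``structural switching lemma'' that collapses a general degree-$d$ polynomial into one of the structured subclasses handled in the paper---is appealing but, as you also concede, no such lemma is known, and finding one is closely tied to the long-standing open problems around PRGs and correlation bounds for $\AC^0[\oplus]$ that the paper discusses as motivation. Your write-up is therefore best read not as a proof proposal but as an accurate summary of why the conjecture remains open: both natural attacks have concrete, known obstructions, and the conjecture genuinely appears to require a new idea beyond the techniques in this paper.
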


A weaker bound of $2^{O(dk)}$ was conjectured in~\cite{CHHL}.
We note that if $p$ is the $\AND$ function on $d$ bits, then an easy computation shows that $L_{1,k}(p) = \Pr[p=1] \cdot {d \choose k}$.  Moreover, in \Cref{sec:reduction} we provide a reduction showing that this implies that the upper bound $O(d)^k$ in \Cref{conj:CHLT-general} is best possible for any constant $k$.

For positive results, the work \cite{CHLT} proved that $L_{1,1}(\Poly_{n,d}) \leq 4d$ and $L_{1,2}(\Poly_{n,d}) \le O(d \sqrt{n\log n})$, and already in \cite{CHHL} it was shown that $L_{1,k}(\Poly_{n,d}) \leq (2^{3d}\cdot k)^k,$ but to the best of our knowledge no other results towards \Cref{conj:CHLT} or \Cref{conj:CHLT-general} are known.  

Given the apparent difficulty of resolving \Cref{conj:CHLT} and \Cref{conj:CHLT-general} in the general forms stated above, it is natural to study $L_{1,2}$ and $L_{1,k}$ bounds for specific subclasses of degree-$d$ $\F_2$-polynomials.  This study is the subject of our main structural results, which we describe in the next subsection.

\subsection{Our results: Fourier bounds for structured $\F_2$-polynomials}

Our main results show that $L_{1,2}$ and $L_{1,k}$ bounds of the flavor of \Cref{conj:CHLT} and \Cref{conj:CHLT-general} indeed hold for several well-studied  classes of $\F_2$-polynomials, specifically \emph{symmetric} $\F_2$-polynomials and \emph{read-$\Delta$} $\F_2$-polynomials.  We additionally prove a composition theorem that allows us to combine such polynomials (or, more generally, any polynomials that satisfy certain $L_{1,k}$ bounds) in a natural way and obtain $L_{1,k}$ bounds on the resulting combined polynomials.

Before describing our results in detail, we pause to briefly explain why (beyond the fact that they are natural mathematical objects) such ``highly structured'' polynomials are attractive targets of study  given known results.  It has been known for more than ten years~\cite[Lemma~2]{BHL12} that for any degree $d<(1-\eps)n$, a \emph{random} $\F_{2}$-polynomial of degree $d$ (constructed by independently including each monomial of degree at most $d$ with probability $1/2$) is extremely unlikely to have bias larger than $\exp(-n/d)$.
 It follows that as long as $d$ is not too large, a random degree-$d$
polynomial $p$ is overwhelmingly likely to have $L_{1,k}(p)=o_n(1)$, which is much smaller than $d^{k}$.
(To verify this, consider the polynomials $p_S$ obtained by XORing $p$ with the parity function $\sum_{i\in S} x_i$.
Note that the bias of $p_S$ is the Fourier coefficient of $(-1)^p$ on $S$.
Now apply \cite[Lemma~2]{BHL12} to each polynomial $p_S$, and sum the terms.)

Since the conjectures hold true for random polynomials, it is natural
to investigate highly structured polynomials.

\subsubsection{Symmetric $\F_2$-polynomials}

A \emph{symmetric} $\F_2$-polynomial over $x_1,\dots,x_n$ is one whose output depends only on the Hamming weight of its input $x$.  Such a polynomial of degree $d$ can be written in the form
\[
p(x) := \sum_{k=0}^d c_k \sum_{\abs{S}=k, S \subseteq [n]} \prod_{i\in S} x_i ,
\]
where $c_0, \ldots, c_d \in \zo$.
While symmetric polynomials may seem like simple objects, their study can sometimes lead to unexpected discoveries; for example, a symmetric, low-degree $\F_2$-polynomial provided a counterexample
to the ``Inverse conjecture for the Gowers norm'' \cite{LovettMS11,GT}.
For another example \cite{corr}, symmetric polynomials have been conjectured to have maximal correlation with the mod 3 function.  The conjecture has been verified for degree 2, and if true for higher degrees it would imply long-sought progress in circuit complexity.

We prove the following upper and lower bounds on the $L_1$ Fourier norm at level $k$ for any symmetric polynomial:
\begin{theorem} \label{thm:symm}
	Let $p(x_1,\dots,x_n)$ be a symmetric $\F_2$-polynomial of degree $d$.
	Then $L_{1,k}(p) \le \Pr[p = 1] \cdot O(d)^k$ for every $k$.
\end{theorem}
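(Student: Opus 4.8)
The plan is to exploit the fact that a symmetric $\F_2$-polynomial of degree $d$ is a function of the Hamming weight $|x|$ that is *periodic* with period a power of two, of the form $2^{\lceil \log_2(d+1)\rceil}$. Indeed, by Lucas' theorem each elementary symmetric polynomial $e_j(x) \bmod 2$ depends only on $|x| \bmod 2^{\lceil \log_2(j+1)\rceil}$, so writing $m := 2^{\lceil \log_2(d+1)\rceil} \le 2d$, the value $p(x)$ depends only on $|x| \bmod m$. Thus $p$ is determined by a subset $A \subseteq \Z/m\Z$ with $p(x) = 1$ iff $|x| \bmod m \in A$, and $\Pr[p=1]$ is (up to lower-order terms) $|A|/m$.

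First I would set up the Fourier-analytic computation of $\hat p(S)$ for $|S|=k$. Since $p$ is symmetric, $\hat p(S)$ depends only on $k$; call it $\hat p_k$, so that $L_{1,k}(p) = \binom{n}{k}|\hat p_k|$. Writing $(-1)^{p(x)}$ — or better, working with the $\zo$-valued indicator and subtracting the mean — as a function of $j = |x| \bmod m$, and using that the Krawtchouk/character sums over a fixed-weight slice have clean closed forms, $\hat p_k$ becomes a linear combination of roots of unity: concretely $\hat p_k \approx \frac{1}{m}\sum_{j \in A} \sum_{\ell=0}^{m-1} \omega^{-j\ell}(\text{something like } (1-\omega^\ell)^k/2^k)$ where $\omega = e^{2\pi i/m}$, because evaluating a ``$|x| \bmod m$ depends only'' function amounts to a discrete Fourier transform over $\Z/m\Z$ and the parity-like characters $x \mapsto (-1)^{\sum_{i\in S}x_i}$ restricted to weight classes transform into powers of $(1 \pm 1)/2$-type factors. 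The key quantitative point is that the $\ell=0$ term contributes the mean, and each nonzero frequency $\ell$ contributes a factor bounded by $|1-\omega^\ell|^k/2^k = |\sin(\pi \ell/m)|^k \le 1$, with an overall $\frac{|A|}{m}$-type prefactor coming from the sum over $j \in A$ — this is where the crucial $\Pr[p=1]$ factor enters, rather than a bound that is merely $O(d)^k$.

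The main step is then to bound $L_{1,k}(p) = \binom{n}{k}|\hat p_k|$. Plugging in $\binom{n}{k} \le n^k/k!$ and combining with the estimate that each weight-$k$ Fourier coefficient of the indicator of ``$|x|\equiv j \pmod m$'' has magnitude roughly $2^{-n}\binom{n}{?}$ scaled appropriately — in fact the cleanest route is to observe that $p$ is computed by reading $|x| \bmod m$, and the level-$k$ Fourier weight of the modular-counting function $\mathrm{MOD}_m$-slice can be computed exactly via the identity $\sum_{|S|=k}\hat f(S)x^S$ evaluated through the generating function $\prod_i(1 + y\,x_i)$ at the $m$-th roots of unity. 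This yields $L_{1,k}(p) \le \Pr[p=1]\cdot \sum_{\ell=1}^{m-1}\big(\frac{1-\cos(2\pi\ell/m)}{2}\big)^{k/2}\cdot(\text{small})$, and since there are $m \le 2d$ frequencies each contributing at most $1$, we get $L_{1,k}(p) \le \Pr[p=1]\cdot O(d)\cdot$ (a factor that must be shown to be $O(d)^{k-1}$). Getting the full $O(d)^k$ rather than $O(d)\cdot$const requires care: the naive bound of ``$m$ frequencies times $1$'' only gives $O(d)$, so one actually needs either (a) that $|A|$ can be as large as $m$ making $\Pr[p=1]$ close to $1$ and the bound correspondingly generous, or (b) to absorb the binomial $\binom{n}{k}$ against the exponentially small individual coefficients more carefully. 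I expect the honest accounting to show the per-level coefficient $|\hat p_k|$ is at most $\Pr[p=1]\cdot O(d/n)^k \cdot k!$-ish, so that after multiplying by $\binom{n}{k}$ the $n$'s cancel and we are left with $\Pr[p=1]\cdot O(d)^k$.

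**The main obstacle** will be the last accounting step: proving that $\binom{n}{k}|\hat p_k| \le \Pr[p=1]\cdot O(d)^k$ with the right power of $d$, rather than an off-by-exponential bound. The subtlety is that a single modular frequency $\ell$ near $m/2$ gives $|\sin(\pi\ell/m)| \approx 1$, which alone would only yield an $O(d)$ factor from summing over $\ell$; to reach $O(d)^k$ one must exploit that $p$ is a genuine $\F_2$-polynomial of degree $d$ (not an arbitrary function of $|x|\bmod m$) — in particular that only $O(d)$ of the modular residues are ``active'' in a structured way, or equivalently that $p$, as a function of $|x|$, has a sparse-ish representation in the Lucas/binary-digit basis with only $\le d+1$ free coefficients $c_0,\dots,c_d$. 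Controlling the interaction between this degree constraint and the fixed-weight slice structure (rather than the full modular structure) is the crux, and I would handle it by expanding $p$ in the $c_k$'s and bounding each elementary-symmetric contribution $e_j$ separately, using that $L_{1,k}(e_j) = \Pr[e_j=1]\binom{j}{k}$ exactly and then invoking a union/triangle-inequality argument over the $\le d+1$ terms together with a correlation bound showing the $\Pr[p=1]$ factor survives the union.
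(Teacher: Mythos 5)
Your first two paragraphs are on the right track and essentially match one half of the paper's proof: both you and the paper reduce to the $\vert x\vert \bmod m$ periodicity (the paper cites \cite{BGL} rather than Lucas, but these are the same fact with $m \le 2d$ a power of two), reduce to bounding a single coefficient $\hat p(\{1,\dots,k\})$ by symmetry, and pull out $\Pr[p=1]$ by the triangle inequality over residue classes. Your roots-of-unity expansion is the right tool for the conditional expectation $\E[(-1)^{x_1+\cdots+x_k} \mid \vert x\vert \equiv \ell \bmod m]$ and corresponds to the paper's Claim \ref{claim:bias-small-m} (taken from \cite{BHLV}). However, your sketch drops the crucial factor: the character sum at the $r$-th root $\omega^r$ is $\bigl((1-\omega^r)/2\bigr)^k\bigl((1+\omega^r)/2\bigr)^{n-k}$, not just the $k$-fold piece, so each nonzero frequency contributes $\lvert\sin(r\pi/m)\rvert^k \lvert\cos(r\pi/m)\rvert^{n-k}$. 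The $\lvert\cos\rvert^{n-k} \approx e^{-(n-k)r^2\pi^2/2m^2}$ decay is what makes the sum small; the $\lvert\sin\rvert^k \le 1$ bound you emphasize gives nothing useful on its own.

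The genuine gap is that this exponential-sum route only produces the required bound in the regime $k \lesssim n/m^2$: after multiplying by $\binom{n}{k}$, you get $\binom{n}{k}\,m^{-k}e^{-\Omega(n/m^2)}$, which is $O(m)^k$ precisely because $\bigl(\tfrac{n}{km^2}\bigr)^k e^{-cn/m^2}$ is bounded when the exponential dominates, i.e.\ when $k\le n/m^2$. When $k \gtrsim n/m^2$, a single frequency $r \approx (m/\pi)\sqrt{k/n}$ can make $\binom{n}{k}\lvert\sin(r\pi/m)\rvert^k\lvert\cos(r\pi/m)\rvert^{n-k}$ as large as roughly $\binom{n}{k}^{1/2}$, and the $\Pr[p=1]$ prefactor (which is not raised to a power of $k$) does not save you. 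The paper handles the large-$k$ regime with an entirely separate argument (Lemma~\ref{lemma:bias-large-m}): a new bound on Kravchuk polynomials $\calK(n,\lfloor n/2\rfloor+s,k)$ of the form $2^{O(k)}\bigl((n/k)^{k/2} + (\lvert s\rvert/k)^k\bigr)$ (Claim~\ref{claim:kravchuk-bound}), combined with a weighted moment estimate over residue classes (Claims~\ref{claim:moment-mod-m} and~\ref{claim:moment}). Your proposal has no analogue of this.

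Your proposed fallback — expand $p = \sum_j c_j e_j$ over $\F_2$ and union-bound using \emph{``$L_{1,k}(e_j) = \Pr[e_j=1]\binom{j}{k}$''} — does not work and is based on a false identity. That formula is correct only for $\AND$ on exactly $j$ variables, i.e.\ $e_j$ when $n=j$; for $n>j$ the elementary symmetric polynomial $e_j = \sum_{\lvert S\rvert=j}\prod_{i\in S}x_i \bmod 2$ is a completely different (typically nearly balanced) function. For instance $e_2$ on $n=4$ variables has $\Pr[e_2=1]=5/8$ and $L_{1,1}(e_2)=1/2$, whereas your formula predicts $5/4$. Moreover, even if you had good bounds on each $e_j$, the $L_{1,k}$ of an $\F_2$-sum does not split as a sum: $(-1)^{p_1+p_2} = (-1)^{p_1}(-1)^{p_2}$, so the Fourier coefficients convolve rather than add, and no naive union/triangle inequality over the $c_j$'s applies. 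Your intuition that ``one must exploit extra structure from the degree constraint beyond periodicity'' is also not borne out: the paper's proof uses only the $\vert x\vert\bmod m$ periodicity, and what replaces the ``extra structure'' you anticipate is the Kravchuk-polynomial bound for the $k\gtrsim n/m^2$ regime.
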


\begin{theorem} \label{thm:sym-lb}
  For every $k$, there is a symmetric $\F_2$-polynomial $p(x_1, \ldots, x_n)$ of degree $d = \Theta(\sqrt{kn})$ such that $L_{1,k}(p) \ge (e^{-k}/2) \cdot \binom{n}{k}^{1/2} = \Omega_k(1) \cdot d^k$.
\end{theorem}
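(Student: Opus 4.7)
The plan is to construct $p$ explicitly as $p(y) := g_0(|y| \bmod M)$, where $M$ is a carefully chosen power of $2$ of order $\Theta(\sqrt{kn})$ and $g_0 : \mathbb{Z}/M\mathbb{Z} \to \{0,1\}$ is picked below. By Lucas's theorem, each mod-$2$ elementary symmetric polynomial $\binom{|y|}{j}$ with $0 \le j < M$ depends only on $|y| \bmod M$, and the matrix $[\binom{w}{j} \bmod 2]_{w,j \in [0,M)}$ is upper-triangular with unit diagonal; hence these polynomials span (over $\F_2$) every $\{0,1\}$-valued function of $|y| \bmod M$. Consequently every $p$ of the stated form is a symmetric $\F_2$-polynomial of degree at most $M-1 = \Theta(\sqrt{kn})$, and any desired $g_0$ is realizable.

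Fix $S \subseteq [n]$ with $|S|=k$. Inserting the identity $[|y| \equiv r \bmod M] = \tfrac{1}{M}\sum_l \omega^{l(|y|-r)}$ (where $\omega := e^{2\pi i/M}$) together with the tensor factorization $\sum_y \omega^{l|y|}\chi_S(y) = (1+\omega^l)^{n-k}(1-\omega^l)^k =: \alpha_l$ gives
\begin{equation*}
\widehat{p}(S) \;=\; \frac{1}{2^n} \sum_{r=0}^{M-1} g_0(r)\,R_k(r), \qquad R_k(r) \;:=\; \frac{1}{M}\sum_{l=0}^{M-1} \omega^{-lr}\alpha_l.
\end{equation*}
Since $\alpha_0 = 0$ for $k \ge 1$, we have $\sum_r R_k(r) = 0$, so the choice $g_0(r) := [R_k(r) > 0]$ gives $\sum_r g_0(r)\, R_k(r) = \tfrac{1}{2}\sum_r |R_k(r)|$. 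Moreover, the dual relation $\alpha_l = \sum_r \omega^{lr} R_k(r)$ and the triangle inequality yield $\sum_r |R_k(r)| \ge \max_l |\alpha_l|$, so $L_{1,k}(p) = \binom{n}{k}|\widehat{p}(S)| \ge \binom{n}{k}\max_l|\alpha_l|/(2\cdot 2^n)$.

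It remains to lower-bound $\max_l |\alpha_l|$. We have $|\alpha_l| = 2^n|\cos(\pi l/M)|^{n-k}\sin^k(\pi l/M)$, which over continuous $\theta = \pi l/M \in (0,\pi/2)$ is maximized at $\tan\theta^* = \sqrt{k/(n-k)}$, i.e.\ $\theta^* = \Theta(\sqrt{k/n})$, with value $2^n(k/n)^{k/2}((n-k)/n)^{(n-k)/2} = \Omega(2^n e^{-k/2}(k/n)^{k/2})$. Choosing $M$ so that some small integer $l^* \in \{1,\ldots,M-1\}$ has $\pi l^*/M$ within a universal constant factor of $\theta^*$ (concretely $l^* \approx \max(1,k/\pi)$ with $M \approx \pi l^*\sqrt{n/k}$, still of order $\sqrt{kn}$), a second-order Taylor expansion of $\log|\alpha(\theta)|$ at $\theta^*$ shows $|\alpha_{l^*}| \ge c\cdot 2^n e^{-k/2}(k/n)^{k/2}$ for an absolute $c>0$. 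Combined with the Stirling estimate $\binom{n}{k}^{1/2}(k/n)^{k/2} = \Theta(e^{k/2}(2\pi k)^{-1/4})$, this yields $L_{1,k}(p) \ge (c/2)\binom{n}{k}e^{-k/2}(k/n)^{k/2} = \Omega(k^{-1/4})\cdot \binom{n}{k}^{1/2}$, which for every $k \ge 1$ comfortably beats $(e^{-k}/2)\binom{n}{k}^{1/2}$.

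The main obstacle is the existence and quality of a good integer $l^*$ on the frequency grid: for moderate or large $k$ the continuous optimum $l^* \approx k/\pi$ is well-separated from $0$ and the concavity of $\log|\alpha(\theta)|$ at its peak absorbs the rounding error uniformly, but for very small $k \in \{1,2,3\}$ one is forced to take $l^*=1$ and tune $M$ so that $\pi/M$ sits near $\theta^*$, with the resulting absolute constant verified by a direct trigonometric computation rather than by a generic Taylor argument.
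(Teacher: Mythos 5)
Your proposal is correct in outline and takes a genuinely different route from the paper, yielding a quantitatively stronger conclusion. The paper constructs its symmetric polynomial explicitly by periodizing $h_k(x) = \sgn\bigl(\sum_{|S|=k}x^S\bigr)$ with period $2m$ around $|x|=n/2$, appeals to the BGL lemma for the degree bound, and then lower-bounds $\bigl|\sum_{|S|=k}\hf(S)\bigr|$ by writing it as $\E[h_k\sum x^S] \pm \E[(f-h_k)\sum x^S]$: the main term is handled by the $(2,4)$-hypercontractive inequality (giving $e^{-k}\binom{n}{k}^{1/2}$) and the error term by Cauchy--Schwarz plus a Chernoff bound, which is where the choice $m=\Theta(\sqrt{kn})$ enters. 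You instead treat the mod-$M$ pattern $g_0$ as a free parameter, push the computation to $\Z/M\Z$ via the roots-of-unity filter, and observe that the extremal $g_0(r) := [R_k(r)>0]$ gives $\widehat{p}(S) = \tfrac{1}{2^{n+1}}\sum_r|R_k(r)| \ge \tfrac{1}{2^{n+1}}\max_l|\alpha_l|$, reducing everything to a pointwise lower bound on $|\alpha_l| = 2^n|\cos\theta|^{n-k}\sin^k\theta$ at a grid point $\theta=\pi l/M$ near $\theta^* = \arctan\sqrt{k/(n-k)}$. This buys two things the paper's proof does not: the choice of $g_0$ is canonical (optimal over all symmetric degree-$<M$ polynomials), and the resulting constant $\Omega(k^{-1/4})$ in front of $\binom{n}{k}^{1/2}$ dominates the paper's $e^{-k}/2$ for every $k$. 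What it costs is the grid-rounding analysis: one must verify that the second derivative of $\log g$ is $\Theta(n)$ throughout a neighborhood of $\theta^*$ containing the nearest grid point, which is where $M=\Theta(\sqrt{kn})$ is used (the rounding error $\pi/(2M)$ contributes $n\cdot(\pi/2M)^2 = O(1/k) = O(1)$ to the exponent). You flag this correctly as the main obstacle, though your caution about $k\in\{1,2,3\}$ is arguably overcautious: with $M$ the power of two in $[4\sqrt{kn}, 8\sqrt{kn})$ one always has $M\theta^*/\pi \ge 4k/\pi > 1$, so the nearest grid point is automatically nonzero and within $\pi/(2M)$ of $\theta^*$, and the uniform Taylor argument covers all $k\ge 1$ with an explicit absolute constant; no separate trigonometric case analysis is needed. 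Also note that appealing to the BGL lemma (already in the paper as \Cref{lemma:BGL}) would have let you skip the Lucas/upper-triangular-matrix preamble, though that derivation is a valid self-contained alternative.
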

\Cref{thm:symm} verifies the \cite{CHLT} conjecture (\Cref{conj:CHLT}), and even the generalized version \Cref{conj:CHLT-general}, for the class of symmetric polynomials.  \Cref{thm:sym-lb} complements it by showing that the upper bounds in \Cref{thm:symm} are tight for $k = O(1)$ when $d = \Theta(\sqrt{n})$.

\ignore{
}
\Cref{thm:symm} also provides a quadratic sharpening of an earlier bound that was implicit in \cite{RSV13} (as well as providing the ``correct'' dependence on $\Pr[p=1]$).
In \cite{RSV13} Reingold, Steinke and Vadhan showed that any function $f$ computed by an oblivious, read-once, regular branching program of width $w$ has $L_{1,k}(f) \leq (2w^2)^k$. It follows directly from a result of \cite{BGL} (\Cref{lemma:BGL} below) that any symmetric $\F_2$-polynomial $p$ of degree $d$ can be computed by an oblivious, read-once, regular branching program of width at most $2d$, and hence the \cite{RSV13} result implies that $L_{1,k}(p) \leq 8^k d^{2k}.$

Subsequent to our work, Lee, Pyne, and Vadhan~\cite{LPV22} improved the $(2w^2)^k$ bound obtained in~\cite{RSV13} to $\Pr[f=1] (w-1)^k$, which recovers \Cref{thm:symm}.

\subsubsection{Read-$\Delta$ $\F_2$-polynomials}

For $\Delta \geq 1$, a \emph{read-$\Delta$} $\F_2$-polynomial is one in which each input variable appears in at most $\Delta$ monomials.
The case $\Delta=1$ corresponds to the class of \emph{read-once polynomials}, which are simply sums of monomials over disjoint sets of variables; for example, the polynomial $x_1x_2 + x_3x_4$ is read-once whereas $x_1x_2 + x_1x_4$ is read-twice.  Read-once polynomials have been studied from the perspective of pseudorandomness~\cite{LV,MRT,Lee,DHH} as they capture several difficulties in improving Nisan's generators~\cite{Nis92} for width-$4$ read-once branching programs.

We show that the $L_{1,k}$ Fourier norm of read-$\Delta$ polynomials is polynomial in $d$ and $\Delta$:
\begin{theorem} \label{thm:read-few-poly}
  Let $p(x_1, \ldots, x_n)$ be a read-$\Delta$ polynomial of degree $d$.
  Then $L_{1,k}(p) \le \Pr[p=1] \cdot O(k)^k \cdot (d\Delta)^{10k}$.
\end{theorem}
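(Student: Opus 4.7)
The plan is to combine a random restriction argument with the disjoint composition theorem established earlier in the paper, using a direct analysis of the Fourier growth of read-once polynomials as the base case. Specifically, I would apply the standard random-restriction identity: for a random restriction $\rho$ that independently keeps each variable alive with probability $q\in(0,1)$ and otherwise fixes it to a uniform random bit, one has $\hat p(S) = q^{-|S|}\mathbb{E}_\rho[\widehat{p|_\rho}(S)]$ (using that $\widehat{p|_\rho}(S)=0$ unless $S$ is contained in the alive set). Taking absolute values and summing over $|S|=k$ gives
\[
L_{1,k}(p) \;\le\; q^{-k}\cdot \mathbb{E}_\rho\bigl[L_{1,k}(p|_\rho)\bigr].
\]
I would choose $q$ to be a small constant multiple of $1/(d\Delta)$, so the prefactor $q^{-k}$ contributes a factor $(d\Delta)^{O(k)}$.

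Next, I would argue that for this choice of $q$ the restricted polynomial $p|_\rho$ is, with good probability over $\rho$, read-once on its alive variables. Indeed, each original variable lies in at most $\Delta$ monomials and each monomial has at most $d$ variables, so for any pair of distinct monomials that originally share a variable, the probability they share an \emph{alive} variable is at most $dq$. A union bound over the at most $m\cdot d\Delta$ overlapping pairs shows that the probability $p|_\rho$ fails to be read-once is bounded by $md^2\Delta\cdot q$, which is a small constant for appropriate $q$. When $p|_\rho$ is read-once, $(-1)^{p|_\rho}$ factors as a tensor product $\prod_M (-1)^{\mathrm{AND}_M}$ over disjoint AND blocks; applying the disjoint-composition theorem from earlier in the paper, combined with the single-AND calculation $L_{1,k}((-1)^{\mathrm{AND}_M}) = 2^{1-|M|}\binom{|M|}{k}$, would yield $L_{1,k}(p|_\rho) \le \Pr[p|_\rho = 1]\cdot O(kd)^k$, where the $O(k)^k$ factor arises from the ordered ways of distributing $k$ ``active'' coordinates across the disjoint blocks.

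Putting these pieces together, and observing that $\mathbb{E}_\rho[\Pr[p|_\rho=1]] = \Pr[p=1]$ since $(\rho,x)$ jointly samples uniformly from all variables, I would conclude
\[
L_{1,k}(p) \;\le\; q^{-k}\cdot \Pr[p=1]\cdot O(kd)^k \;=\; \Pr[p=1]\cdot O(k)^k\cdot (d\Delta)^{O(k)},
\]
where the $O(k)^k\cdot (d\Delta)^{O(k)}$ factor on the right follows from substituting $q^{-1}=O(d\Delta)$. Tightening constants---possibly by iterating the restriction argument to drive down the failure probability, each round of which multiplies $q^{-k}$ by another factor of $(d\Delta)^k$---yields the claimed $O(k)^k (d\Delta)^{8k}$ dependence.

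The main obstacle I anticipate is the contribution from the event that $p|_\rho$ fails to be read-once. A trivial bound such as $L_{1,k}(p|_\rho)\le \binom{n}{k}$ is much too weak, so either one must iterate the restriction (introducing additional factors of $(d\Delta)^{O(k)}$, which is most likely the source of the loose exponent $8k$) or analyze partially simplified polynomials with a small number of residual overlaps. A related subtlety is the careful handling of degree-$1$ monomials, which the restriction cannot ``break apart'' and which can create residual overlaps in $p|_\rho$; these may require separating out the linear part of $p$ via the Fourier-shift identity $\widehat{(-1)^p}(S) = \widehat{(-1)^{p_{\ge 2}}}(S\oplus L)$ (where $L$ indexes the linear monomials) and arguing about $p_{\ge 2}$ instead.
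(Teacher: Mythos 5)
Your approach---random-restrict to reduce to a read-once polynomial, then invoke the disjoint-composition theorem---is genuinely different from the paper's argument, which decomposes $p$ \emph{locally around each fixed set $S$} via the dependency graph $G_p$ and bounds each $|\widehat{f}(S)|$ individually (Lemma~\ref{lemma:read-few-key}). Unfortunately the proposal has a gap I don't think can be patched within this framework. The claim that $q = \Theta(1/(d\Delta))$ makes $p|_\rho$ read-once with good probability is false: your own union bound over overlapping monomial pairs gives failure probability $\le m d^2\Delta q$, where $m$ is the \emph{total} number of monomials in $p$. Since $m$ can be $\Theta(n)$ even for read-once $p$ (e.g.\ $p = \sum_{i\le n/2} x_{2i-1}x_{2i}$ has $m = n/2$), this is on the order of $n d q/\Delta$, which is $\gg 1$ for $q = \Theta(1/(d\Delta))$; the union bound is vacuous. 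Making the failure probability a small constant would force $q = O(1/(m d^2 \Delta))$, whence $q^{-k}$ picks up a factor $n^{\Theta(k)}$---but the target bound $O(k)^k (d\Delta)^{8k}$ is independent of $n$. This is inherent to demanding the \emph{global} event ``$p|_\rho$ is read-once'': the number of places it can fail grows linearly with $n$, while the theorem's bound does not involve $n$ at all.

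Your anticipated-obstacles paragraph shows awareness of the issue, but the proposed remedies don't resolve it. Iterating the restriction does not shrink the problem: each round leaves a read-$\Delta$ polynomial on $\Theta(qn)$ alive variables facing the same $\Omega(1)$ failure probability. And the difficulty is not merely that the trivial $\binom{n}{k}$ bound on the bad event is ``too weak'' to multiply a small failure probability---the failure probability itself is not small. A percolation-style variant at $q = \Theta(1/(d^2\Delta))$, where the post-restriction interaction graph on monomials becomes subcritical, could bound the \emph{component sizes} instead of demanding global read-once structure; but those components are read-$\Delta$ rather than read-once, so one would still need a base-case bound for small read-$\Delta$ polynomials plus careful control of the component-size distribution---morally a rewrite rather than a patch. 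The paper sidesteps all of this by only ever decomposing $p$ inside the $\mathrm{poly}(k,\Delta,d)$-sized $G_p$-neighborhood of the fixed set $S$, so $n$ never enters.
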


The work \cite{Lee} showed that read-once polynomials satisfy an $L_{1,k}$ bound of $O(d)^k$ and this is tight for every $k$, but we are not aware of previous bounds on even the $L_1$ Fourier norm at level $k=2$ for read-$\Delta$ polynomials, even for $\Delta=2$.

As any monomial with degree $\Omega(\log n)$ vanishes under a random restriction with high probability, we have the following corollary which applies to polynomials of any degree.

\begin{corollary} \label{cor:L1k-read-few-poly}
  Let $p(x_1, \ldots, x_n)$ be a read-$\Delta$ polynomial.
  Then $L_{1,k}(p) \le O(k)^{11k} \cdot (\Delta \log n)^{10k}$.
\end{corollary}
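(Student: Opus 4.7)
My plan is to decompose $p$ over $\F_2$ into its low-degree and high-degree parts, apply \Cref{thm:read-few-poly} to the low-degree part, and use Parseval to show that the high-degree part contributes only $O(1)$ to $L_{1,k}(p)$.

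First, I would set a threshold $t := \lceil (2k+2)\log(n\Delta) \rceil = O(k\log n)$ and write $p = q \oplus r$, where $q$ collects the monomials of $p$ of degree at most $t$ and $r$ collects the rest. Both $q$ and $r$ remain read-$\Delta$ polynomials (dropping monomials can only decrease the read parameter), and $\deg q \le t$. Since each variable appears in at most $\Delta$ monomials of $p$ and each monomial of $r$ uses more than $t$ variables, $r$ has at most $\Delta n / t$ monomials, each of which evaluates to $1$ on a uniform input with probability at most $2^{-t}$. A union bound then yields $\eps := \Pr[r=1] \le \Delta n \cdot 2^{-t} \le (n\Delta)^{-(2k+1)}$.

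Next, since $p(x) \ne q(x)$ exactly when $r(x)=1$, we have $\|p - q\|_2^2 = \eps$, and Parseval gives $|\hat p(S) - \hat q(S)| \le \sqrt{\eps}$ for every $S$. Summing over the $\binom{n}{k}$ subsets of size $k$,
\[
L_{1,k}(p) \;\le\; L_{1,k}(q) + \binom{n}{k}\sqrt{\eps} \;\le\; L_{1,k}(q) + 1.
\]
Finally, I would apply \Cref{thm:read-few-poly} to the degree-$t$, read-$\Delta$ polynomial $q$ to obtain $L_{1,k}(q) \le O(k)^k (t\Delta)^{8k}$; substituting $t = O(k\log n)$ (using $\Delta \le n$, so $\log(n\Delta) = O(\log n)$) and absorbing constants gives the claimed $O(k)^{9k}(\Delta \log n)^{8k}$.

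This is essentially a direct corollary of \Cref{thm:read-few-poly}, so there is no real obstacle. The only thing to balance is the choice of threshold: $t$ must be large enough (namely $\Omega(k\log n)$) that the high-degree tail is killed by the Parseval step, yet small enough ($O(k\log n)$) that the resulting bound on $L_{1,k}(q)$ retains the desired form.
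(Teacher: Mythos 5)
Your proof is correct, and it takes a genuinely different route from the paper's. The paper applies a $\rho$-random restriction $R$ to $p$, uses the identity $\E_R[\widehat{f_R}(S)] = \rho^{|S|}\hat f(S)$ to get $L_{1,k}(p) \le \rho^{-k}\E_R[L_{1,k}(p_R)]$, then splits the expectation according to whether $\deg(p_R) \ge d_{\max}$; the high-degree event is bounded by $\rho^{d_{\max}}\Delta n / d_{\max}$ using essentially the same monomial count you use, and the low-degree event is handled by \Cref{thm:read-few-poly}. You instead split $p = q + r$ over $\F_2$ \emph{before} taking any restriction, kill the high-degree tail $r$ with a union bound, and transfer the closeness of $p$ and $q$ to the Fourier side via Parseval (each coefficient differs by at most $\sqrt{\eps}$, and $\binom{n}{k}\sqrt{\eps} \le 1$ by the choice of $t$). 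Both routes hinge on the same quantitative fact (a read-$\Delta$ polynomial has at most $\Delta n/t$ monomials of degree $\ge t$), but yours avoids random restrictions entirely and avoids the $\rho^{-k}$ loss; the paper's is the more standard restriction-based template. One minor point you correctly flag: when substituting $t = O(k\log n)$ you invoke $\Delta \le n$, which is fine since for $\Delta > n$ the claimed bound already exceeds the trivial Parseval bound $L_{1,k}(p) \le n^{k/2}$, and it would be worth stating that explicitly.
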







\subsubsection{A composition theorem}

The upper bounds of \Cref{thm:symm} and \Cref{thm:read-few-poly} both include a factor of $\Pr[p=1]$.  (We observe that negating $p$, i.e.~adding 1 to it, does not change its $L_{1,2}$ or $L_{1,k}$ and keeps $p$ symmetric (respectively, read-$\Delta$) if it was originally symmetric (respectively, read-$\Delta$), and hence in the context of those theorems we can assume that this $\Pr[p=1]$ factor is at most $1/2.$) 
Level-$k$ bounds that include this factor have appeared in earlier works for other classes of functions~\cite{OS,BTW,CHRT,Tal,GirishTW21}, and have been used to obtain high-level bounds for other classes of functions~\cite{CHRT,Tal,GirishTW21} and to extend level-$k$ bounds to more general classes of functions~\cite{Lee}.\ignore{
}
Having these $\Pr[p=1]$ factors in \Cref{thm:symm} and \Cref{thm:read-few-poly} is important for us in the context of our composition theorem, which we now describe.  We begin by defining the notion of a \emph{disjoint composition} of functions:

\begin{definition} \label{def:disjoint-composition}
	Let $\calF$ be a class of functions from $\pmo^m$ to $\pmo$ and let $\calG$ be a class of functions from $\pmo^\ell$ to $\pmo$. 
	Define the class $\calH = \calF \circ \calG$ of \emph{disjoint compositions of $\calF$ and $\calG$} to be the class of all functions from $\pmo^{m \ell}$ to $\pmo$ of the form
	\[
	h(x^1, \ldots, x^m) = f(g_1(x^1), \ldots, g_m(x^m)),
	\]
	where $g_1, \ldots, g_m \in \calG$ are defined on $m$ disjoint sets of variables and $f \in \calF$.
\end{definition}

As an example of this definition, the class of \emph{block-symmetric} polynomials (i.e.~polynomials whose variables are divided into blocks and are symmetric within each block but not overall) are a special case of disjoint compositions where ${\cal G}$ is taken to be the class of symmetric polynomials. We remark that block-symmetric polynomials are known to correlate better with parities than symmetric polynomials in certain settings~\cite{GKV}.

We prove a composition theorem for upper-bounding the $L_1$ Fourier norm at level $k$ of the disjoint composition of any classes of functions that are closed under restriction and admit a $L_{1,k}$ bound of the form $\Pr[f=1] \cdot a \cdot b^k$:

\begin{theorem} \label{thm:composition}
	Let $g_1, \ldots, g_m \in \calG$ and let $f \in \calF$, where $\calF$ is closed under restrictions.
	Suppose that for every $1 \le k \le K$, we have
	\begin{enumerate}
		\item $L_{1,k}(f) \le \Pr[f=1] \cdot a_\outer \cdot b_\outer^k$ for every $f \in \calF$, and
		\item $L_{1,k}(g) \le \Pr[g=1] \cdot a_\inner \cdot b_\inner^k$ for every $g \in \calG$.
	\end{enumerate}
	Then for every $\pm 1$-valued function $h \in \calH = \calF \circ \calG$, we have that   \[
	L_{1,K}(h)
  \le  \Pr[h=1] \cdot a_\outer \cdot (a_\inner b_\inner b_\outer)^K .
	\]
\end{theorem}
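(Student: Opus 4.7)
The plan is to express the Fourier coefficients of the composition $h = f(g_1, \ldots, g_m)$ in terms of those of $f$ and the inner functions, exploiting that the $g_i$'s live on disjoint blocks of variables and are therefore independent under the uniform input distribution. First, writing each $g_i(x^i) = \mu_i + \bar g_i(x^i)$ with $\mu_i = \mathbb{E}[g_i]$ and $\bar g_i$ mean-zero, and expanding $h = \sum_T \hat f(T) \prod_{i \in T}(\mu_i + \bar g_i(x^i))$, a regrouping yields
\[
h(x) \;=\; \sum_{T_0 \subseteq [m]} \tilde f(T_0) \prod_{i \in T_0} \bar g_i(x^i), \qquad \tilde f(T_0) := \sum_{T \supseteq T_0} \hat f(T) \prod_{i \in T \setminus T_0} \mu_i.
\]
Since the $\bar g_i$'s are supported on disjoint variables and have mean zero, the Fourier coefficient of $h$ at a multi-index $(S_1, \ldots, S_m)$ with $T_0 := \{i : S_i \neq \emptyset\}$ factors cleanly as $\hat h(S_1, \ldots, S_m) = \tilde f(T_0) \prod_{i \in T_0} \hat g_i(S_i)$.

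Summing absolute values at level $K$ and grouping by $T_0$ and by how the $K$ Fourier levels split among the active blocks gives
\[
L_{1,K}(h) \;=\; \sum_{T_0 \neq \emptyset} |\tilde f(T_0)| \sum_{\substack{(k_i)_{i \in T_0} \\ k_i \ge 1,\ \sum_i k_i = K}} \prod_{i \in T_0} L_{1,k_i}(g_i).
\]
Substituting the hypothesis $L_{1,k_i}(g_i) \le \Pr[g_i=1]\cdot a_\inner b_\inner^{k_i}$ and counting the $\binom{K-1}{|T_0|-1}$ compositions of $K$ into $|T_0|$ positive parts reduces the task to controlling the weighted sum $\sum_{T_0 \neq \emptyset} |\tilde f(T_0)| \prod_{i \in T_0} \Pr[g_i=1]$, modulated by combinatorial factors $a_\inner^{|T_0|} b_\inner^K \binom{K-1}{|T_0|-1}$.

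For this last step, the key identity---which follows from the independence of the $g_i$'s on disjoint blocks---is
\[
\tilde f(T_0)\,\prod_{i \in T_0}\Pr[g_i = 1] \;=\; \mathbb{E}_{g}\!\left[\partial^{T_0} f(g_{-T_0}) \cdot \mathbf{1}\{g_i = 1\ \forall\, i \in T_0\}\right],
\]
where $\partial^{T_0} f(y_{-T_0}) := \sum_{T \supseteq T_0} \hat f(T) \prod_{i \in T \setminus T_0} y_i$ is the iterated discrete derivative. Moving the absolute value inside the expectation, the crucial observation is that $\partial^{T_0} f(g_{-T_0})$ is exactly the top-level Fourier coefficient of the restricted function $f|_{g_{-T_0}}$, which lies in $\calF$ by closure under restrictions; applying the hypothesis on $f$ then yields $|\partial^{T_0} f(g_{-T_0})| \le L_{1,|T_0|}(f|_{g_{-T_0}}) \le \Pr[f|_{g_{-T_0}}=1]\cdot a_\outer b_\outer^{|T_0|}$.

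The main obstacle is the final combinatorial accounting: after substituting this bound, exchanging the order of summation and expectation, and aggregating over all $T_0$, one must argue that the resulting quantity is bounded by $a_\outer\cdot\Pr[h=1]\cdot b_\outer^K$ times factors that combine cleanly into $(a_\inner b_\inner b_\outer)^K$. I would argue this by first noting that $\Pr[g_i=1]\le 1/2$ WLOG (since negating any $g_i$ preserves $L_{1,k}$), which enables a pointwise comparison of the mixed distribution arising from $\mathbb{E}_g[\Pr[f|_{g_{-T_0}}=1]\cdot\mathbf{1}\{g_{T_0}=\mathbf{1}\}]$ (uniform on $T_0$, biased on $[m]\setminus T_0$) with the fully biased distribution computing $\Pr[h=1]$; the generating-function estimate $\sum_{t \ge 1}(a_\inner b_\outer)^t \binom{K-1}{t-1} \le (a_\inner b_\outer)(1 + a_\inner b_\outer)^{K-1}$ then supplies the required $(a_\inner b_\outer)^K$ scaling. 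Keeping track of both of these comparisons---the pointwise distributional one and the generating-function one---and ensuring that the telescoping absorbs the intermediate binomial factors is the technical crux of the proof.
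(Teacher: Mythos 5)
Your set-up through the factorization $\hh(S)=\tilde f(T_0)\prod_{i\in T_0}\hg_i(S_i)$ is correct and is the same as the paper's \Cref{lemma:coeff-in-f-and-g} (your $\tilde f(T_0)$ is precisely $\partial_{T_0}f(\beta)$). The gap lies in the final step, where you restrict $f$ on \emph{all} of $[m]\setminus T_0$ to the fixed $\pm1$ values $g_{-T_0}$. This restriction depends on $T_0$, the restricted function $f|_{g_{-T_0}}$ has exactly $\abs{T_0}$ live variables so $L_{1,\abs{T_0}}(f|_{g_{-T_0}})$ is a single Fourier coefficient, and you must then sum a per-$T_0$ bound over all $\binom{m}{k}$ sets of size $k$. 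The pointwise comparison you sketch gives, for each $T_0$ with $\abs{T_0}=k$, at best $\prod_{i\in T_0}\Pr[g_i=1]\cdot\E_{g_{-T_0}}\bigl[\Pr[f|_{g_{-T_0}}=1]\bigr]\le 2^{-k}\Pr[h=1]$, and this is tight; summing over $T_0$ therefore costs a factor $\binom{m}{k}2^{-k}$, unbounded in $m$. For a concrete failure, take $f$ to be the AND of $m$ bits and let every $g_i$ satisfy $\Pr[g_i=1]=p\le 1/2$: one computes $\E_{g_{-T_0}}[\Pr[f|_{g_{-T_0}}=1]]=2^{-k}p^{m-k}$, so $\sum_{\abs{T_0}=k}\prod_{i\in T_0}\Pr[g_i=1]\,\E_{g_{-T_0}}[\Pr[f|_{g_{-T_0}}=1]]=\binom{m}{k}2^{-k}p^m=\binom{m}{k}2^{-k}\Pr[h=1]$, which exceeds $\Pr[h=1]$ by $\Omega(m^2)$ already at $k=2$. (The theorem itself still holds in this example, but only because the hypothesis applied to the \emph{unrestricted} AND forces $a_\outer b_\outer^k\ge 2\binom{m}{k}$; your chain of inequalities never sees this.)

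The paper avoids the $\binom{m}{k}$ blow-up by using a single, $T_0$-\emph{independent} random restriction $R_\beta$ (\Cref{claim:derivatives}): independently for each $i\in[m]$, fix $y_i$ to $\sgn(\beta_i)$ with probability $\abs{\beta_i}$ and leave it \emph{alive} otherwise. Then $\partial_{T_0}f(\beta)=\prod_{i\in T_0}\tfrac{1}{1-\abs{\beta_i}}\,\E_{R_\beta}\bigl[\widehat{f_{R_\beta}}(T_0)\bigr]$, and the crucial move is to push the sum over $T_0$ \emph{inside} the expectation:
\[
\sum_{\abs{T_0}=k}\Bigl|\E_{R_\beta}\bigl[\widehat{f_{R_\beta}}(T_0)\bigr]\Bigr|\;\le\;\E_{R_\beta}\bigl[L_{1,k}(f_{R_\beta})\bigr],
\]
a single $L_{1,k}$ bound that absorbs all $\binom{m}{k}$ sets at once; the factors $\tfrac{1}{1-\abs{\beta_i}}$ then cancel exactly against the $\tfrac{1-\abs{\beta_i}}{2}$ coming from the inner bound $L_{1,w_i}(g_i)$, leaving a clean $2^{-k}$. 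So the missing idea in your proposal is that the restriction of $f$ must leave coordinates alive at random (with per-coordinate survival probability $1-\abs{\beta_i}$) rather than deterministically restricting everything outside $T_0$; without the slack that this introduces, the counting over $T_0$ does not close.
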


See \Cref{thm:L1-of-composition} for a slightly sharper bound.
We remark that \Cref{thm:composition} does not assume any $\F_2$-polynomial structure for the functions in ${\cal F}$ or ${\cal G}$ and thus may be of broader utility.

\subsection{Applications of our results}
	Our structural results imply new pseudorandom generators and correlation bounds.
	\subsubsection{Pseudorandom generators}
	
	Combining our Fourier bounds with the polarizing framework, we obtain new PRGs for read-few polynomials.
	The following theorem follows from applying \Cref{thm:CGLSS} with some $k = \Theta(\log n)$ and the $L_{1,k}$ bound in \Cref{cor:L1k-read-few-poly}.

  \begin{theorem}
    There is an explicit pseudorandom generator that $\eps$-fools read-$\Delta$ $\F_2$-polynomials with seed length $\poly(\Delta, \log n, \log(1/\eps))$.
  \end{theorem}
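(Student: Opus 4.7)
The plan is to invoke the CGLSS framework (Theorem~\ref{thm:CGLSS}) with the Fourier growth bound from Corollary~\ref{cor:L1k-read-few-poly}. First I would verify the two structural hypotheses needed. Closure under restrictions is immediate: fixing a coordinate of a read-$\Delta$ polynomial $p$ either kills every monomial containing that variable or drops it from them, so no variable's occurrence count increases, and the restricted polynomial is still read-$\Delta$. (In particular it remains read-$\Delta$ even after random restrictions used internally by the generator.) The Fourier growth hypothesis is supplied by Corollary~\ref{cor:L1k-read-few-poly}, which gives
\[
L_{1,k}(\text{read-}\Delta) \;\le\; O(k)^{9k} \cdot (\Delta \log n)^{8k} \;=\; b^k
\qquad\text{with}\qquad b = O\!\left(k^9 \cdot (\Delta \log n)^8\right).
\]

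Next I would tune $k$. Theorem~\ref{thm:CGLSS} gives seed length
\[
O\!\left( \frac{b^{\,2+\frac{4}{k-2}} \cdot k \cdot \polylog(n/\eps)}{\eps^{\frac{2}{k-2}}} \right) .
\]
To kill the $\eps^{2/(k-2)}$ denominator and fix the exponent of $b$ at $O(1)$, I would take $k = \Theta(\log(n/\eps))$, large enough that $k - 2 \ge 2\log(1/\eps)$; then $\eps^{2/(k-2)} \ge \eps^{1/\log(1/\eps)} = 1/2$, so $1/\eps^{2/(k-2)} = O(1)$, and similarly $b^{4/(k-2)} = O(1)$. Substituting, $b^{2} = O(k^{18}(\Delta \log n)^{16}) = \poly(\Delta, \log n, \log(1/\eps))$, and the overall seed length becomes $\poly(\Delta, \log n, \log(1/\eps))$ as claimed.

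There is no serious obstacle here; the argument is a clean ``plug-and-play'' of Corollary~\ref{cor:L1k-read-few-poly} into Theorem~\ref{thm:CGLSS}. The only delicate point is the choice of $k$: if $k$ is much smaller than $\log(1/\eps)$, the $\eps^{-2/(k-2)}$ factor blows up polynomially in $1/\eps$ and destroys the $\log(1/\eps)$ dependence; if $k$ is much larger, the $b^2$ factor grows as a large power of $\log(n/\eps)$. Taking $k = \Theta(\log(n/\eps))$ (or any $k = \Theta(\log(1/\eps))$ would also suffice, since $\log n$ appears inside $b$ anyway) is the sweet spot that yields the stated polylogarithmic seed length.
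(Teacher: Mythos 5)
Your proof matches the paper's approach exactly: apply \Cref{thm:CGLSS} with the bound from \Cref{cor:L1k-read-few-poly}, after noting that read-$\Delta$ polynomials are closed under restrictions. Your choice $k = \Theta(\log(n/\eps))$ is in fact a slightly more careful instantiation than the paper's informal $k = \Theta(\log n)$, since it guarantees $\eps^{-2/(k-2)} = O(1)$ for all $\eps$ rather than only for $\eps \geq 1/\poly(n)$.
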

  For constant $\eps$, this improves on a PRG by Servedio and Tan~\cite{ServedioT-PRG-sparse}, which has a seed length of $2^{O(\sqrt{\log(\Delta n)})}$.
  (Note that read-$\Delta$ polynomials are also $(\Delta n)$-sparse.)
  We are not aware of any previous PRG for read-2 polynomials with $\polylog(n)$ seed length.

  Note that the $\OR$ function has $L_1$ Fourier norm $O(1)$.
  By expressing a DNF in the Fourier expansion of $\OR$ in its terms, it is not hard to see that the same PRG also fools the class of read-$\Delta$ DNFs (and read-$\Delta$ CNFs similarly)~\cite{ServedioT-PRG-readfew}.
  	
	
\subsubsection{Correlation bounds}

Exhibiting explicit Boolean functions that do not \emph{correlate} with low-degree polynomials is a fundamental challenge in complexity.  Perhaps surprisingly, this challenge stands in the way of progress on a striking variety of frontiers in complexity, including circuits, rigidity, and multiparty communication complexity.  For a survey of correlation bounds and discussions of these connections we refer the reader to
\cite{viola-FTTCS09,Viola-map,Viola-diago}.

For polynomials of degree larger than $\log_2 n$, the state-of-the-art remains the lower bound proved by Razborov and Smolensky in the 1980s' \cite{Raz87,Smolensky87}, showing that for any degree-$d$ polynomial $p$ and an explicit function $h$ (in fact, majority) we have:
\[
\Pr[ p(x) = h(x)] \le 1/2 + O(d/\sqrt{n}).\]

Viola~\cite{Viola20} recently showed that upper bounds on $L_{1,k}(\calF)$ imply correlation bounds between $\calF$ and an explicit function $h_k$ that is related to majority and is defined as

\[h_k(x) := \sgn\Biggl(\sum_{\abs{S}=k} x^S\Biggr).\]

In particular, proving \Cref{conj:CHLT} or related conjectures implies new correlation bounds beating Razborov--Smolensky.  The formal statement of the connection is given by the following theorem.

\begin{theorem}[Theorem~1 in \cite{Viola20}] \label{viola-L12-implies-cor}
  For every $k \in [n]$ and $\calF \subseteq \{f\colon \zo^n \to \pmone\}$, there is a distribution $D_k$ on $\zo^n$ such that for any $f \in \calF$,
	\[
	\Pr_{x \sim D_k} \left[f(x) = h_k(x)\right]
  \le \frac{1}{2} + \frac{e^k}{2 \sqrt{\binom{n}{k}}} L_{1,k}(\calF) .
	\]
\end{theorem}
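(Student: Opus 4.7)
The plan is to tilt the uniform distribution on $\pmone^n$ toward inputs where $T_k(x) := \sum_{\abs{S}=k} x^S$ is large in absolute value. Concretely, I would define $D_k$ to have density $\abs{T_k(x)}/Z$ with respect to the uniform distribution $U$ on $\pmone^n$, where $Z := \E_U[\abs{T_k}]$ is the normalizing constant. The one key pointwise identity driving everything is $h_k(x)\abs{T_k(x)} = T_k(x)$, which is immediate from $h_k = \sgn(T_k)$ (and trivial when $T_k(x) = 0$, since $D_k$ places no mass there anyway).

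Since $f,h_k \in \pmone$, we have $\Pr_{x\sim D_k}[f(x) = h_k(x)] = \tfrac{1}{2} + \tfrac{1}{2}\E_{x\sim D_k}[f(x)h_k(x)]$. Unfolding the density of $D_k$ and invoking the identity above gives
\[
\E_{x\sim D_k}[f(x)h_k(x)] \;=\; \frac{1}{Z}\,\E_{x\sim U}\bigl[f(x)\abs{T_k(x)}h_k(x)\bigr] \;=\; \frac{1}{Z}\,\E_{x\sim U}[f(x)T_k(x)].
\]
Because $T_k$ has Fourier expansion $\sum_{\abs{S}=k}x^S$, Parseval turns the numerator into $\sum_{\abs{S}=k}\widehat{f}(S)$, whose absolute value is at most $L_{1,k}(f) \le L_{1,k}(\calF)$.

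All that remains is the anticoncentration inequality $Z = \E_U[\abs{T_k}] \ge \sqrt{\binom{n}{k}}/e^k$. Since $\E_U[T_k^2] = \binom{n}{k}$ by Parseval, this is really a lower bound on $\|T_k\|_1$ in terms of $\|T_k\|_2$. A clean but slightly lossy route is to combine the log-convexity inequality $\|T_k\|_2^3 \le \|T_k\|_1 \|T_k\|_4^2$ with $(2,4)$-hypercontractivity $\|T_k\|_4^2 \le 3^k \|T_k\|_2^2$ for the degree-$k$ polynomial $T_k$, already giving $\|T_k\|_1 \ge \|T_k\|_2/3^k$ and hence the theorem with the weaker constant $3^k$ in place of $e^k$.

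Sharpening the constant from $3^k$ down to the stated $e^k$ is the main obstacle. I would pursue one of two refinements that exploit the very special structure of $T_k$: (i) a Krawtchouk-polynomial calculation, writing $T_k(x) = K_k(|\{i : x_i = -1\}|;n)$ and estimating $\E\abs{T_k} = 2^{-n}\sum_w \binom{n}{w}\abs{K_k(w;n)}$ directly; or (ii) a Gaussian-limit argument, using that the CLT for $U$-statistics makes $T_k/\sqrt{\binom{n}{k}}$ converge in distribution to $H_k(Z)/\sqrt{k!}$ at a standard Gaussian $Z$, reducing the bound to explicit Hermite identities. Either refinement plugged into the one-line identity $h_k\abs{T_k} = T_k$ yields the bound exactly as stated.
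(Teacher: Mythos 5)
Your overall structure is correct and matches the argument in the cited source \cite{Viola20}, which this paper quotes without reproving. The distribution $D_k$ is indeed the one with density proportional to $\abs{T_k(x)}$ against the uniform measure; the pointwise identity $h_k\abs{T_k}=T_k$ converts the signed correlation $\E_{D_k}[f\,h_k]$ into $\frac{1}{Z}\sum_{\abs{S}=k}\hat f(S)$, whose absolute value is bounded by $L_{1,k}(\calF)$; and the only remaining ingredient is a lower bound on $Z=\E_U[\abs{T_k}]$. All of this you have.

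The one genuine gap is precisely where you flag it, namely proving $\E_U[\abs{T_k}]\ge e^{-k}\sqrt{\binom{n}{k}}$, and the fixes you propose are much heavier than necessary. The Gaussian-CLT route would at best give an asymptotic statement as $n\to\infty$, not an inequality valid for all $n,k$, and a direct Krawtchouk estimate, while doable, is considerably more work. The efficient fix is the \emph{same} interpolation-plus-hypercontractivity calculation you already ran to get $3^k$, just optimized: instead of pairing $L_1$ and $L_4$, interpolate $L_2$ between $L_1$ and $L_q$ with $q>2$, apply $(2,q)$-hypercontractivity $\|T_k\|_q\le(\sqrt{q-1})^k\|T_k\|_2$, and send $q\to 2^+$. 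The resulting constant $(q-1)^{q/(2(q-2))}$ tends to $e$, giving $\|T_k\|_2\le e^k\|T_k\|_1$ exactly. This is Theorem~9.22 of \cite{ODonnell:book}, which the present paper explicitly invokes for the very same lower bound in the proof of \Cref{thm:sym-lb}. So the improvement from $3^k$ to $e^k$ is not a new obstacle but the limit of the inequality you already wrote down, and you can (and should) cite it as a black box rather than rederiving it.
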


For example, if $k=2$ and we assume that the answer to \Cref{conj:CHLT} is positive, then the right-hand side above becomes $1/2 + O(d^2/n)$, which is a quadratic improvement over the bound by Razborov and Smolensky.


Therefore, \Cref{thm:symm,thm:read-few-poly} imply correlation bounds between these polynomials and an explicit function that are better than $O(d/\sqrt{n})$ given in \cite{Raz87,Smolensky87}.
We note that via a connection in \cite{Viola-d}, existing PRGs for these polynomials already imply strong correlation bounds between these polynomials and the class of $\NP$.
Our results apply to more general classes via the composition theorem, where it is not clear if previous techniques applied.  For a concrete example, consider the composition of a degree-$(n^\alpha)$ symmetric polynomial with degree-$(n^\alpha)$ read-$(n^\alpha)$ polynomials.  \Cref{thm:composition} shows that such polynomial has $L_{1,2} \le n^{O(\alpha)}$.  For a sufficiently small $\alpha = \Omega(1)$, we again obtain correlation bounds improving on Razborov--Smolensky.

%
%

\subsection{Related work}

We close this introduction by discussing a recent work of Girish, Tal and Wu~\cite{GirishTW21} on parity decision trees that is related to our results.

Parity decision trees are a generalization of decision trees in which each node queries a parity of some input bits rather than a single input bit.
The class of depth-$d$ parity decision trees is a subclass of $\F_2$ degree-$d$ polynomials, as such a parity decision tree can be expressed as a sum of products of sums over $\F_2$, where each product corresponds to a path in the tree (and hence gives rise to $\F_2$-monomials of degree at most $d$).
The Fourier spectrum of parity decision trees was first studied in \cite{BTW}, which obtained
a level-$1$ bound of $O(\sqrt{d})$.
This bound was recently extended to higher levels in \cite{GirishTW21}, showing that any depth-$d$ parity decision tree $T$ over $n$ variables has $L_{1,k}(T) \leq d^{k/2} \cdot O(k \log n)^k$.



\section{Our techniques}

We now briefly explain the approaches used to prove our results.
We note that each of these results is obtained using very different ingredients, and hence the results can be read independently of each other.

\subsection{Symmetric polynomials (\Cref{thm:symm,thm:sym-lb}, \Cref{sec:symmetric})}
The starting point of our proof is a result from~\cite{BGL}, which says that degree-$d$ symmetric $\F_2$-polynomials only depend on the Hamming weight of their input modulo $m$ for some $m$ (a power of two) which is  $\Theta(d)$, and the converse is also true.  (See \Cref{lemma:BGL} for the exact statement.)
We now explain how to prove our upper and lower bounds given this.

For the upper bound (\Cref{thm:symm}), since $p(x)$ takes the same value for all strings $x$ with the same weights $\ell \bmod m$, to analyze $L_{1,k}(p)$ it suffices to analyze $\E[(-1)^{x_1 + \cdots + x_k}]$ conditioned on $x$ having Hamming weight exactly $\ell \bmod m$.

We bound this conditional expectation by considering separately two cases depending on whether or not $k \le n/m^2$.
For the case that $k \le n/m^2$, we use a (slight sharpening of a) result from \cite{BHLV}, which gives a bound of $m^{-k} e^{-\Omega(n/m^2)}$.
In the other case, that $k \ge n/m^2$, in \Cref{lemma:bias-large-m} we prove a bound of $O(km/n)^k$. This is established via a careful argument that gives a new bound on the Kravchuk polynomial in certain ranges (see \Cref{claim:kravchuk-bound}), extending and sharpening similar bounds that were recently established in \cite{CohenPT21} (the bounds of \cite{CohenPT21} would not suffice for our purposes). 

In each of the above two cases, summing over all the $\binom{n}{k}$ coefficients gives the desired bound of $O(m)^k = O(d)^k$.

For the lower bound (\Cref{thm:sym-lb}), we first consider the symmetric function $h_k(x) := \sgn(\sum_{\abs{S}=k} (-1)^{\sum_{i\in S} x_i})$.  (Note that $h_1$ is the majority function.)
As $h_k$ is the sign of a degree-$k$ real polynomial (i.e. a degree-$k$ polynomial threshold function), it follows from a consequence of the hypercontractive inequality that $L_{1,k}(h_k) \ge e^{-k} \binom{n}{k}^{1/2}$.
We define our polynomial $p$ so that on inputs $x$ whose Hamming weight $\abs{x}$ is within $m=\Theta(\sqrt{kn})$ from $n/2$, it agrees with $h_k$, and on other inputs its value is defined such that $p$ only depends on $\abs{x} \bmod 2m$.  By the converse of \cite{BGL}, $p$ can be computed by a symmetric polynomial of degree less than $2m$.
Our lower bound then follows from $L_{1,k}(f) \le \binom{n}{k}^{1/2}$ for any Boolean function $f$, and that by our choice of $m$, the two functions $p$ and $h_k$ disagree on $e^{-\Omega(k)}$ fraction of the inputs.

\subsection{Read-$\Delta$ polynomials (\Cref{thm:read-few-poly}, \Cref{sec:read-few})}
Writing $f := (-1)^p$ for an $\F_2$-polynomial $p$, we observe that the coefficient $\hf(S)$ is simply the bias of $p_S(x) := p(x) + \sum_{i \in S} x_i$.
Our high-level approach is to decompose the read-few polynomial $p_S$ into many disjoint components, then show that each component has small bias.
Since the components are disjoint, the product of these biases gives an upper bound on the bias of $p_S$.

In more detail, we first partition the variables according to the minimum degree $t_i$ of the monomials containing each variable $x_i$.
Then we start decomposing $p_S$ by collecting all the monomials in $p$ containing $x_i$ to form the polynomial $p_i$.
We observe that the larger $t_i$ is, the more likely $p_i$ is to vanish on a random input, and therefore the closer $p_i + x_i$ is to being unbiased.
For most $S$, we can pick many such $p_i$'s ($i\in S$) from $p$ so that they are disjoint.
For the remaining polynomial $r$, because $\Delta$ and $d$ are small, we can further decompose $r$ into many disjoint polynomials $r_i$.
Finally, our upper bound on $|\hf(S)|$ will be the magnitude of the product of the biases of the $p_i$'s and $r_i$'s.
We note that our decomposition of $p$ uses the structure of $S$; and so the upper bound on $\hf(S)$ depends on $S$ (see \Cref{lemma:read-few-key}).
Summing over each $\abs{\hf(S)}$ gives our upper bound.

\subsection{Composition theorem (\Cref{thm:composition}, \Cref{sec:compositions})}
As a warmup, let us first consider directly computing a degree-1 Fourier coefficient $\hh(\{(i,j)\})$ of the composition.
Since the inner functions $g_i$ depend on disjoint variables, by writing the outer function $f$ in its Fourier expansion, it is not hard to see that 
\[
  \hh(\{(i,j)\})
  = \sum_{S \ni i} \hf(S) \prod_{\ell \in S \setminus\{i\}} \E[g_\ell] \cdot \hg_i(\{j\}) .
\]
When the $g_i$'s are balanced, i.e.\ $\E[g_i] = 0$, we have $\hf(\{(i,j)\}) = \hf(\{i\}) \hg_i(\{j\})$, and it follows that $L_{1,1}(h) \le L_{1,1}(\calF) L_{1,1}(\calG)$.
To handle the unbalanced case, we apply an idea from \cite{CHHL} that lets us relate $\sum_{S \ni i} \hf(S) \prod_{\ell \in S \setminus\{i\}} \E[g_\ell]$ to the average of $\widehat{f_R}(\{i\})$, for some suitably chosen random restriction $R$ on $f$ (see \Cref{claim:derivatives}).
As $\calF$ is closed under restrictions, we can apply the $L_{1,1}(\calF)$ bound on $f_R$, which in turns gives a bound on $\sum_{S \ni i} \hf(S) \prod_{\ell \in S \setminus\{i\}} \E[g_\ell]$ in terms of $L_{1,1}(\calF)$ and $\E[g_i]$.

Bounding $L_{1,k}(h)$ for $k \ge 2$ is more complicated, as each $\hh(S)$ involves $\hf(J)$ and $\hg_i(T)$'s, where the sets $J$ and $T$ have different sizes.
We provide more details in \Cref{sec:compositions}.




\section{Preliminaries}

\noindent {\bf Notation.}
For a string $x \in \zo^n$ we write $\abs{x}$ to denote its Hamming weight $\sum_{i=1}^n x_i$.
We use $\calX_w$ to denote $\{x : \abs{x} = w\}$, the set of $n$-bit strings with Hamming weight $w$, and $\calX_{\ell \bmod m} = \bigcup_{w: w \equiv \ell \bmod m} \calX_w = \{x : \abs{x} \equiv \ell \bmod m\}$.

We recall that for an $n$-variable Boolean function $f$, the \emph{level-$k$ Fourier $L_1$ norm of $f$} is
\[
  L_{1,k}(f) = \sum_{S \subset [n]: \abs{S} = k} \abs{\hf(S)}.
\]
We note that a function $f$ and its negation have the same $L_{1,k}$ for $k \ge 1$.
Hence we can often assume that $\Pr[f=1] \le 1/2$, or replace the occurrence of $\Pr[f=1]$ in a bound by $\min\{\Pr[f=1],\Pr[f=0]\}$ for a $\zo$-valued function $f$ (or by $\min\{\Pr[f=1],\Pr[f=-1]\}$ for a $\pmo$-valued function). 
If $f$ is a $\pmo$-valued function then $\frac{1 - \abs{\E[f]}}{2}$ is equal to $\min\{\Pr[f=1],\Pr[f=-1]\}$, and we will often write $\frac{1 - \abs{\E[f]}}{2}$ for convenience.

Unless otherwise indicated, we will use the letters $p,q,r$, etc.~to denote $\F_2$-polynomials (with inputs in $\zo^n$ and outputs in $\zo$) and the letters $f,g,h,$ etc.~to denote general Boolean functions (where the inputs may be $\zo^n$ or $\pmo^n$ and the outputs may be $\zo$ or $\pmo$ depending on convenience).
We note that changing from $\zo$ outputs to $\pmo$ outputs only changes $L_{1,k}$ by a factor of $2$. 

We use standard multilinear monomial notation as follows: given a vector $\beta = (\beta_1,\dots,\beta_n)$ and a subset $T \subseteq [n]$, we write $\beta^T$ to denote $\prod_{j \in T} \beta_j$.


\section{\texorpdfstring{$L_{1,k}$ bounds for symmetric polynomials}{}} \label{sec:symmetric}

The main result of this section is \Cref{thm:L1-sym}, which gives an upper bound on $L_{1,k}(p)$ for any symmetric $\F_2$-polynomial $p$ of degree $d$, covering the entire range of parameters $1 \leq k,d \leq n$:

\begin{theorem}[Restatement of \Cref{thm:symm}] \label{thm:L1-sym}
	Let $p\colon \zo^n \to \zo$ be a symmetric $\F_2$-polynomial of degree $d$.
	For every $1 \leq k,d \leq n$, 
	\[
	  L_{1,k}(p) := \sum_{\abs{S}=k} \abs{\hp(S)} \le \Pr[p(x) = 1] \cdot O(d)^k .
  \]
\end{theorem}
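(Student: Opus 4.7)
My plan is to exploit the structural characterization from \cite{BGL}: every degree-$d$ symmetric $\F_2$-polynomial depends only on $\abs{x} \bmod m$ for some power of two $m = \Theta(d)$. Writing $L \subseteq \{0,\ldots,m-1\}$ for the set of residues on which $p$ evaluates to $1$, we can express
\[
  p(x) = \sum_{\ell \in L} \mathbf{1}\bigl[\abs{x} \equiv \ell \bmod m\bigr],
\]
so that $\Pr[p=1] = \sum_{\ell \in L} \Pr[\abs{x} \equiv \ell \bmod m]$, and for any $S$ with $\abs{S}=k$,
\[
  \hp(S) = \sum_{\ell \in L} \Pr\bigl[\abs{x} \equiv \ell \bmod m\bigr] \cdot \E\bigl[(-1)^{\sum_{i\in S} x_i} \bigm| \abs{x} \equiv \ell \bmod m\bigr].
\]
By the triangle inequality, it then suffices to establish a uniform bound
\[
  \Bigl| \E\bigl[(-1)^{\sum_{i\in S} x_i} \bigm| \abs{x} \equiv \ell \bmod m\bigr] \Bigr| \le B_{k,m}
\]
(which by symmetry depends on $S$ only through $\abs{S} = k$), holding for every $\ell$, with $\binom{n}{k} B_{k,m} = O(m)^k = O(d)^k$. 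Summing over the $\binom{n}{k}$ sets of size $k$ would then give $L_{1,k}(p) \le \binom{n}{k} B_{k,m} \Pr[p=1]$, yielding the theorem.

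The plan is then to split into two regimes depending on the relative size of $k$ and $n/m^2$. In the small-$k$ regime $k \le n/m^2$, I would invoke a mild sharpening of the bias bound of \cite{BHLV}, yielding $B_{k,m} \le m^{-k} \cdot e^{-\Omega(n/m^2)}$. Combining with $\binom{n}{k} \le (en/k)^k$, the target inequality $\binom{n}{k} B_{k,m} \le O(m)^k$ reduces to $(n/(km^2))^k \le e^{\Omega(n/m^2)}$; writing $u := n/m^2 \ge k$, this becomes $k \ln(u/k) \le O(u)$, which holds throughout $1 \le k \le u$ (the left-hand side is maximized near $k = u/e$ with value $u/e$).

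In the complementary large-$k$ regime $k > n/m^2$, the conditional bias is (up to normalization) an average of Kravchuk polynomial values $K_k(w)$ over appropriate Hamming weights $w$ in an arithmetic progression modulo $m$. The key technical step is a new pointwise estimate of the form $B_{k,m} \le O(km/n)^k$ in this regime. Granted this, $\binom{n}{k} \cdot B_{k,m} \le (en/k)^k \cdot O(km/n)^k = O(m)^k$ closes the case.

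The main obstacle is the large-$k$ Kravchuk estimate: the range of parameters we need is not covered by the bounds of \cite{CohenPT21}, so I would extend and sharpen their approach via a careful generating-function / saddle-point analysis of $K_k$, taking care to track constants because the averaging over $w$ in the arithmetic progression cannot recover losses in the pointwise bound. Once that ingredient is in hand, the reduction via \cite{BGL} and the small-$k$ regime are essentially routine consequences of existing tools.
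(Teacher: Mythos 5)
Your proposal is correct and follows essentially the same route as the paper: reduce to the mod-$m$ structure via \cite{BGL}, pull out the $\Pr[p=1]$ factor by bounding the conditional bias uniformly over residues, split on $k \lessgtr n/m^2$, handle the small-$k$ regime with the sharpened \cite{BHLV} bound, and handle the large-$k$ regime by extending the Kravchuk-polynomial estimates of \cite{CohenPT21}. The one caveat worth flagging is your remark that the averaging over the arithmetic progression ``cannot recover losses'' in the pointwise Kravchuk bound: in fact a $2^{O(k)}$ loss in the pointwise bound $\abs{\calK(n,\lfloor n/2\rfloor+s,k)}\le 2^{O(k)}\bigl((n/k)^{k/2}+(\abs{s}/k)^k\bigr)$ is perfectly acceptable, since the averaging against the binomial weights $\binom{n}{\lfloor n/2\rfloor+cm+\ell}$ is exactly what converts the $(\abs{s}/k)^k$ term into $O(m)^k$, and $2^{O(k)}$ is absorbed into the final $O(m)^k$.
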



%

\paragraph{Proof idea.}
As the polynomial $p$ is symmetric, its Fourier coefficient $\hp(S)$ only depends on $|S|$, the size of $S$.
Hence to bound $L_{1,k}$ it suffices to analyze the coefficient $\hp(\{1, \ldots, k\}) = \E_{x \sim \zo^n}[p(x) (-1)^{x_1 + \cdots + x_k}]$.

Our proof uses a  result from~\cite{BGL} (\Cref{lemma:BGL}  below), which says that degree-$d$ symmetric $\F_2$-polynomials only depend on the  Hamming weight of their input modulo $m$ for some $m = O(d)$.
Given this, since $p(x)$ takes the same value for strings $x$ with the same weights $\ell \bmod m$, we can in turn bound each $\E[(-1)^{x_1 + \cdots + x_k}]$ conditioned on $x$ having Hamming weight exactly $\ell \bmod m$, i.e.\ $x \in \calX_{\ell \bmod m}$.
We consider two cases depending on whether or not $k \le n/m^2$.
If $k \le n/m^2$, we can apply a (slight sharpening of a) result from \cite{BHLV}, which gives a bound of $m^{-k} e^{-\Omega(n/m^2)}$.
If $k \ge n/m^2$, in \Cref{lemma:bias-large-m} we prove a bound of $O(km/n)^k$.
In each case, summing over all the $\binom{n}{k}$ coefficients gives the desired bound of $O(m)^k = O(d)^k$.

We now give some intuition for \Cref{lemma:bias-large-m}, which upper bounds the magnitude of the ratio
\begin{align} \label{eq:ratio}
  \E_{x \sim \calX_{\ell \bmod m}} [(-1)^{x_1 + \cdots + x_k}]
  = \frac{\sum_{x \in \calX_{\ell \bmod m}} (-1)^{x_1 + \cdots + x_k}}{\abs{\calX_{\ell \bmod m}}}
\end{align}
by $O(km/n)^k$.
Let us first consider $k = 1$ and $m = \Theta(\sqrt{n})$.
As most strings $x$ have Hamming weight within $[n/2-\Theta(\sqrt{n}),n/2+\Theta(\sqrt{n})]$, it is natural to think about the weight $\abs{x}$ in the form of $n/2 + m\Z + \ell'$.
It is easy to see that the denominator is at least $\Omega(2^n/\sqrt{n})$,
so we focus on bounding the numerator.
Consider the quantity $\sum_{x \in \calX_{n/2 + s}} \E[(-1)^{x_1}]$ for some $s$.
As we are summing over all strings of the same Hamming weight, we can instead consider $\sum_{x \in \calX_{n/2+s}} \E_{i \sim [n]}[(-1)^{x_i}]$ (see \Cref{eq:kravchuk-identity}).
For any string of weight $n/2 + s$, it is easy to see that 
\begin{align} \label{eq:kravchuk-1}
  \E_{i \sim [n]}[(-1)^{x_i}] = (1/2 - s/n) - (1/2 + s/n) = -2s/n .
\end{align}
Therefore, in the $k=1$ case we get that
\[
  \abs[\bigg]{\E_{x \sim \calX_{\ell \bmod m}} [(-1)^{x_1 + \cdots + x_k}]}
  \le 2 \sum_c \binom{n}{n/2 + cm + \ell'} \frac{\abs{cm+\ell'}}{n} .
\]
Using the fact that $\binom{n}{n/2 + cm + \ell'}$ is exponentially decreasing in $\abs{c}$, in \Cref{claim:moment-mod-m} we show that this is at most $O(2^n/n)$.
So the ratio in \cref{eq:ratio} is at most $O(1/\sqrt{n})$, as desired, when $k=1.$

However, already for $k=2$, a direct (but tedious) calculation shows that
\begin{align} \label{eq:kravchuk-2}
  \E_{i < j}[(-1)^{x_i + x_j}]
  = \frac{4s^2 - 2ns + n}{n(n-1)} ,
\end{align}
which no longer decreases in $s$ like in \cref{eq:kravchuk-1}.
Nevertheless, we observe that this is bounded by $O(1/n + (\abs{s}/n)^2)$, which is sufficient for bounding the ratio by $O(1/n)$.
Building on this, for any $k$ we obtain a bound of $2^{O(k)} ((k/n)^{k/2} + (\abs{s}/n))^k$ in \Cref{claim:kravchuk-bound}, and by a more careful calculation we are able to obtain the desired bound of $O(km/n)^k$ on \Cref{eq:ratio}.

\subsection{Proof of \Cref{thm:L1-sym}}
We now prove the theorem.
We will use the following result from~\cite{BGL}, which says that degree-$d$ symmetric 
$\F_2$-polynomials only depend on their input's Hamming weight modulo 
$O(d)$.
\begin{lemma}[{{Corollaries~2.5 and 2.7 in~\cite{BGL}, $p=2$}}] \label{lemma:BGL}
	Let $p\colon\zo^n \to \zo$ be any function and $m$ be a power of two.
	If $p$ is a symmetric $\F_2$-polynomial of degree $d$, where $m/2 \le d < m$, then $p(x)$ only depends on $\abs{x} \bmod m$.
  Conversely, if $p(x)$ only depends on $\abs{x} \bmod m$, then it can be computed by a symmetric $\F_2$-polynomial of degree less than $m$.
\end{lemma}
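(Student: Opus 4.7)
The plan is to use the basis of elementary symmetric polynomials together with Lucas's theorem.

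First I would write any symmetric $\F_2$-polynomial of degree $d$ uniquely as $p(x) = \sum_{k=0}^d c_k e_k(x)$, where $e_k(x) := \sum_{|S|=k,\, S\subseteq[n]} \prod_{i\in S} x_i$ is the $k$-th elementary symmetric polynomial. Since $x_i \in \zo$, the multilinear monomial $\prod_{i \in S} x_i$ is the indicator that $x_i = 1$ for all $i \in S$, so $e_k(x)$ counts (modulo $2$) the $k$-subsets of the support of $x$; hence $e_k(x) \equiv \binom{\abs{x}}{k} \pmod 2$. This identity is the pivot of both directions.

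For the forward direction, I would fix $m = 2^s$ and apply Lucas's theorem to any $k < m$: $\binom{\abs{x}}{k} \equiv \prod_{j \ge 0} \binom{b_j(\abs{x})}{b_j(k)} \pmod 2$, where $b_j(\cdot)$ denotes the $j$-th binary digit. Because $k < 2^s$ forces $b_j(k) = 0$ for all $j \ge s$, the factors with $j \ge s$ all equal $\binom{b_j(\abs{x})}{0} = 1$ and drop out. So $\binom{\abs{x}}{k} \bmod 2$ depends only on the low-order $s$ bits of $\abs{x}$, i.e., on $\abs{x} \bmod m$. Summing over $0 \le k \le d < m$ then yields that $p$ depends only on $\abs{x} \bmod m$.

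For the converse, I would show that the $m$ functions $\ell \mapsto \binom{\ell}{k} \bmod 2$, for $k = 0, 1, \ldots, m-1$, span the full $m$-dimensional $\F_2$-vector space of functions $\{0, 1, \ldots, m-1\} \to \F_2$. This reduces to observing that the $m \times m$ matrix $M$ defined by $M[\ell, k] := \binom{\ell}{k} \bmod 2$ is lower triangular (since $\binom{\ell}{k} = 0$ for $k > \ell$) with ones on the diagonal, hence invertible over $\F_2$. Consequently, for any target $q\colon \{0, 1, \ldots, m-1\} \to \F_2$ there exist coefficients $c_0, \ldots, c_{m-1} \in \F_2$ such that $q(\ell) = \sum_{k=0}^{m-1} c_k \binom{\ell}{k} \bmod 2$ for every $\ell$, and then the symmetric $\F_2$-polynomial $\sum_{k=0}^{m-1} c_k e_k(x)$ has degree less than $m$ and computes $q(\abs{x} \bmod m)$ on all of $\zo^n$ (using the forward direction to ensure each $e_k$, $k<m$, is itself a function of $\abs{x} \bmod m$).

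The main technical point to get right is the clean application of Lucas's theorem; I do not anticipate a serious obstacle, since once the identity $e_k(x) \equiv \binom{\abs{x}}{k} \pmod 2$ is established, both directions reduce to invertibility of the lower-triangular binomial matrix modulo $2$.
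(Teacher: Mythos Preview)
The paper does not give its own proof of this lemma; it simply cites Corollaries~2.5 and~2.7 of \cite{BGL} (specialized to $p=2$). Your proposal is correct and is essentially the standard argument underlying those corollaries: the identity $e_k(x)\equiv\binom{|x|}{k}\pmod 2$ combined with Lucas's theorem gives the forward direction, and the lower-triangular (hence invertible) Pascal matrix over $\F_2$ gives the converse. There is nothing further to compare.
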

We will also use two bounds on the biases of parities under the uniform distribution over $\calX_{\ell \bmod m}$, one holds for $k \le n/(2d)^2 \le n/m^2$ (\Cref{claim:bias-small-m}) and the other for $k \ge n/(2d)^2 \ge n/(4m^2)$ (\Cref{lemma:bias-large-m}).
\Cref{claim:bias-small-m} is essentially taken from \cite{BHLV}.
However, the statement in \cite{BHLV} has a slightly worse bound, so for completeness we provide a self-contained proof below to give the bound of \Cref{claim:bias-small-m}.
The proof of \Cref{lemma:bias-large-m} involves bounding the magnitude of Kravchuk polynomials.
As it is somewhat technical we defer its proof to \Cref{sec:kravchuk}.

\begin{claim}[Lemma~10 in \cite{BHLV}] \label{claim:bias-small-m}
	For every $1 \le k \le n/m^2$ and every integer $\ell$,
	\[
	2^{-n} \abs[\Big]{\sum_{x \in \calX_{\ell \bmod m}} (-1)^{x_1 + \cdots + x_k}} 
  \le m^{-(k+1)} e^{-\Omega(n/m^2)},
	\]
	while for $k=0$,
	\[
	\abs[\Big]{ 2^{-n} \abs{\calX_{\ell \bmod m}} - 1/m }
  \le m^{-1} e^{-\Omega(n/m^2)} .
	\]
\end{claim}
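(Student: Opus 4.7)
The approach follows the Fourier-analytic proof of Lemma~10 in \cite{BHLV}, but with a sharper bookkeeping of the dominant Fourier mode in order to gain an extra factor of $1/m$ in the final bound.

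First I would use discrete Fourier inversion modulo $m$: setting $\omega = e^{2\pi i/m}$, the indicator $\mathbf{1}[\abs{x} \equiv \ell \bmod m]$ equals $\frac{1}{m}\sum_{j=0}^{m-1}\omega^{j(\abs{x}-\ell)}$. Substituting this into the sum and factorizing across coordinates gives
\[
\sum_{x \in \calX_{\ell \bmod m}} (-1)^{x_1 + \cdots + x_k}
= \frac{1}{m} \sum_{j=0}^{m-1} \omega^{-j\ell} (1 - \omega^j)^k (1 + \omega^j)^{n-k}.
\]
After dividing by $2^n$, applying the triangle inequality, and using $|1+\omega^j|/2 = |\cos(\pi j/m)|$ together with $|1-\omega^j|/2 = |\sin(\pi j/m)|$, the quantity I want to bound is at most $\frac{1}{m}\sum_{j=0}^{m-1}|\sin(\pi j/m)|^k|\cos(\pi j/m)|^{n-k}$. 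For $k \ge 1$ the $j=0$ term vanishes, and by the symmetry $j \mapsto m-j$ it suffices to analyze $1 \le j \le m/2$ and double.

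Next, I would bound each such term using $|\cos\theta| \le e^{-\sin^2\theta/2}$ (from $\sqrt{1-t} \le e^{-t/2}$) together with $\sin(\pi j/m) \ge 2j/m$ on $[0,\pi/2]$, which gives at most $(\pi j/m)^k\, e^{-2(n-k)j^2/m^2}$. The crucial improvement over \cite{BHLV} comes from using $|\sin(\pi j/m)| \le \pi j/m$ (rather than the trivial bound $|\sin| \le 1$), which supplies the additional factor of $m^{-k}$ I am after. The $j=1$ term then contributes at most $(\pi/m)^k e^{-\Omega(n/m^2)}$ (using $k \le n/2$), and for $j \ge 2$ the hypothesis $k \le n/m^2$ yields $j^k \le e^{kj^2} \le e^{nj^2/m^2}$, so the $j$-th term is at most $(\pi/m)^k e^{-\Omega(nj^2/m^2)}$; these sum geometrically in $j$. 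Dividing by $m$ yields the claimed $m^{-(k+1)}e^{-\Omega(n/m^2)}$ bound. The $k=0$ case is similar: the $j=0$ term contributes exactly $2^n/m$, accounting for the main $1/m$, and the tail estimate above with $k=0$ controls the remaining terms.

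The main obstacle I anticipate is careful constant accounting: I need the constant in the cosine estimate $|\cos(\pi/m)|^{n-k} \le e^{-2(n-k)/m^2}$ to strictly exceed $\ln\pi$ so that the stray factor $\pi^k$ can be absorbed into $e^{-\Omega(n/m^2)}$ using $k \le n/m^2$, while simultaneously preserving the sharpened $m^{-(k+1)}$ dependence that distinguishes this claim from the weaker bound $m^{-k}e^{-\Omega(n/m^2)}$ in \cite{BHLV}. All other steps are routine exponential-sum estimates.
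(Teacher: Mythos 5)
Your proposal is correct and follows essentially the same approach as the paper: both start from the discrete-Fourier identity (which the paper imports from \cite{BHLV} and you re-derive), both bound each summand via $\cos\theta \le e^{-\Theta(\theta^2)}$ together with $\sin\theta\le\theta$, both show the terms decay geometrically in $j$, and both absorb the stray $\pi^k$ factor into $e^{-\Omega(n/m^2)}$ using $k\le n/m^2$. The only cosmetic differences are that the paper uses $\cos\theta\le e^{-\theta^2/2}$ directly (constant $\pi^2/2$) while you use $|\cos\theta|\le e^{-\sin^2\theta/2}$ with Jordan's inequality (constant $2$), and the paper establishes geometric decay by bounding the ratio of consecutive terms by $1/2$ rather than via $j^k\le e^{kj^2}$; in your final absorption step you should use the sharper $k\le n/m^2\le n/4$ (not merely $k\le n/2$) so the effective exponent constant $3/2$ strictly exceeds $\ln\pi$, exactly the bookkeeping issue you flagged.
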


\begin{lemma} \label{lemma:bias-large-m}
  For $k \ge n/(4m^2)$, we have
	\[
	\binom{n}{k} \cdot \max_\ell \abs*{ \frac{\sum_{x \in \calX_{\ell \bmod m}}  (-1)^{x_1 + \cdots + x_k}}{\abs{\calX_{\ell \bmod m}}} }
  \le O(m)^k  .
	\]
\end{lemma}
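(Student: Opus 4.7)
The plan is to reduce the problem to bounding Kravchuk polynomials pointwise and to estimating a moment of the Hamming weight under the conditional distribution $\nu_{\ell,m}$ (the marginal of $\abs{x}$ when $x$ is uniform on $\calX_{\ell\bmod m}$). The first step is to exploit the symmetry over the $k$ selected indices: averaging $(-1)^{\sum_{i\in S}x_i}$ over a uniformly random size-$k$ subset $S$ gives $K_k(w)/\binom{n}{k}$, where $K_k$ is the $k$-th Kravchuk polynomial and $w = \abs{x}$. Consequently,
\[
\binom{n}{k}\cdot\abs*{\frac{\sum_{x\in\calX_{\ell\bmod m}}(-1)^{x_1+\cdots+x_k}}{\abs{\calX_{\ell\bmod m}}}} = \abs*{\E_{w\sim\nu_{\ell,m}}[K_k(w)]},
\]
and the task becomes bounding the right-hand side by $O(m)^k$.

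Assuming the pointwise Kravchuk bound $\abs{K_k(w)}/\binom{n}{k} \le 2^{O(k)}\bigl((k/n)^{k/2}+\abs{s}/n\bigr)^k$ asserted by Claim \ref{claim:kravchuk-bound} (with $s := w-n/2$), applying $(a+b)^k \le 2^k(a^k+b^k)$ splits the target into a deterministic piece $\binom{n}{k}(k/n)^{k^2/2}$ and a moment piece $\binom{n}{k}\,\E_{\nu_{\ell,m}}[(\abs{s}/n)^k]$. For the deterministic piece, $\binom{n}{k}\le(en/k)^k$ yields $e^k(k/n)^{k(k-2)/2}$, which is $\le e^k = O(1)^k$ whenever $k\ge 2$ (since $k\le n$), and equals $e\sqrt{n}\le 2em$ when $k=1$ under the hypothesis $n/k\le 4m^2$; in either case it is $O(m)^k$.

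For the moment piece, the plan is to establish the sub-Gaussian bound $\E_{\nu_{\ell,m}}[\abs{s}^k]\le O(nk)^{k/2}$ uniformly over $\ell$ and over $m\le n$, inheriting the Gaussian-like concentration of the unconditional binomial distribution. Combined with $\binom{n}{k}\le (en/k)^k$, this yields $\binom{n}{k}\,\E_{\nu_{\ell,m}}[(\abs{s}/n)^k] \le e^k(n/k)^{k/2}$; and since the hypothesis $k\ge n/(4m^2)$ gives $n/k\le 4m^2$, this is $\le (2em)^k = O(m)^k$, completing the bound.

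The main technical obstacle will be the sub-Gaussian moment bound under modular conditioning, uniformly over $\ell$ and all $m\le n$. When $m$ is comparable to or larger than $\sqrt{n}$, only a handful of residue classes carry non-negligible mass and the conditioning $w\equiv\ell\bmod m$ can concentrate $\nu_{\ell,m}$ on very few atoms, so the distribution is no longer close to Gaussian in a direct sense. The plan is to handle this via a summation argument in the spirit of Claim \ref{claim:moment-mod-m}: bound each $\binom{n}{n/2+cm+\ell'}$ by $\binom{n}{n/2}\cdot e^{-\Omega((cm+\ell')^2/n)}$, then compare the numerator $\sum_c \abs{cm+\ell'}^k\binom{n}{n/2+cm+\ell'}$ against the denominator $\sum_c \binom{n}{n/2+cm+\ell'}$ term by term, absorbing any residual polynomial-in-$m$ factor into the final $O(m)^k$ bound by adjusting the hidden constants.
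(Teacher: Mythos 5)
Your overall strategy is the same as the paper's: use the Kravchuk symmetry relation to rewrite the quantity as a conditional expectation of the Kravchuk polynomial, invoke a pointwise bound on $\calK(n,\lfloor n/2\rfloor+s,k)$, and then bound the conditional $k$-th moment of $\abs{s}=\abs{w-n/2}$ under $\nu_{\ell,m}$. But the two intermediate bounds you assert are both wrong as stated, and the first one is wrong in a way that would invalidate the estimate of your ``deterministic piece.'' You claim $\abs{K_k(w)}/\binom{n}{k}\le 2^{O(k)}\bigl((k/n)^{k/2}+\abs{s}/n\bigr)^k$, but this is false: at $s=0$ we have $\calK(n,n/2,k)=(-1)^{k/2}\binom{n/2}{k/2}$, so $\abs{K_k(n/2)}/\binom{n}{k}=\Theta(2^{O(k)})\cdot(k/n)^{k/2}$, whereas your claimed bound gives $2^{O(k)}(k/n)^{k^2/2}$, which is exponentially smaller for $k\ge 2$. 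What \Cref{claim:kravchuk-bound} actually gives, after dividing by $\binom{n}{k}\ge (n/k)^k$, is $2^{O(k)}\bigl((k/n)^{k/2}+(\abs{s}/n)^k\bigr)$ — note that the $k$-th power applies only to the second summand, not to the sum. With the correct bound, your deterministic piece becomes $\binom{n}{k}(k/n)^{k/2}\le e^k(n/k)^{k/2}$, which is $O(m)^k$ only because of the hypothesis $k\ge n/(4m^2)$; you would therefore need to invoke that hypothesis for every $k$, not just for $k=1$ as you do.

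Your proposed moment bound $\E_{\nu_{\ell,m}}[\abs{s}^k]\le O(nk)^{k/2}$ is also false once $m\gtrsim\sqrt{n}$: for instance with $m=n$ and $\ell=0$, $\nu_{\ell,m}$ is a point mass at $w=0$ and $\E[\abs{s}^k]=(n/2)^k$, which dwarfs $O(nk)^{k/2}$ when $k\ll n$. You do acknowledge this obstacle and propose the same summation argument as the paper's \Cref{claim:moment-mod-m}, which produces the weaker but correct bound $O(mk)^k$. Note that this weaker bound still suffices: $\binom{n}{k}\cdot\E[(\abs{s}/n)^k]\le (en/k)^k\cdot O(mk/n)^k = O(m)^k$ since $k\le n$. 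So the plan can be carried out, but only after replacing both stated intermediate bounds with the ones the paper actually proves; as written, your intermediate claims are not theorems.
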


We now use \Cref{claim:bias-small-m} and \Cref{lemma:bias-large-m} to prove \Cref{thm:L1-sym}.

\begin{proof}[Proof of \Cref{thm:L1-sym}]
	As $p$ is symmetric, all the level-$k$ coefficients are the same, so
	it suffices to give a bound on $\hp(\{1,2, \ldots, k\})$.
  Let $\tilde p\colon \{0, \ldots, n\} \to \zo$ be the function defined by $\tilde 
	p(\abs{x}) := p(x_1, \ldots, x_n)$.
	By \Cref{lemma:BGL}, we have $\tilde p(\ell) = \tilde p(\ell 
	\bmod m)$ for some $d < m \le 2d$ where $m$ is a power of $2$.
	Using the definition of $\hp(\{1, \ldots, k\})$, we have
	\begin{align*}
	\abs{ \hp(\{1, \ldots, k\}) }
	&= \abs*{ \E_{x \sim \zo^n} \bigl[ p(x) (-1)^{x_1 + \cdots + x_k} 
		\bigr] } \\
	&= \abs*{ \sum_{\ell=0}^{m-1} \tilde p(\ell) \frac{\abs{\calX_{\ell \bmod m}}}{2^n} \cdot \frac{\sum_{x \in \calX_{\ell \bmod m}}  (-1)^{x_1 + \cdots + x_k}}{\abs{\calX_{\ell \bmod m}}} } \\
	&\le \E[p] \cdot \max_{0 \le \ell \le m-1} \abs*{ \frac{\sum_{x \in \calX_{\ell \bmod m}}  (-1)^{x_1 + \cdots + x_k}}{\abs{\calX_{\ell \bmod m}}} } ,
	\end{align*}
	where we use the shorthand $\E[p] = \E_{x \sim \zo^n} [p(x)]$ in the last step.
	
  When $k \le n/(2d)^2 \le n/m^2$, by \Cref{claim:bias-small-m} (using the first bound for the numerator and the second $k=0$ bound for the denominator) we have 
	\[
	\max_{0 \le \ell \le m-1} \abs*{ \frac{\sum_{x \in \calX_{\ell \bmod m}}  (-1)^{x_1 + \cdots + x_k}}{\abs{\calX_{\ell \bmod m}}} }
	\le \frac{m^{-(k+1)} e^{-\Omega(n/m^2)}}{m^{-1} (1 - e^{-\Omega(n/m^2)})} \\
	\le O(1) \cdot m^{-k} e^{-\Omega(n/m^2)} ,
	\]
	where the last inequality holds because $1 \leq k \leq n/m^2$ and hence the $(1 - e^{-\Omega(n/m^2)})$ factor in the denominator of the left-hand side is $\Omega(1)$.
	Hence, summing over all the $\binom{n}{k}$ level-$k$ coefficients, we get that
	\[
	L_{1,k}(p)
	\le \E[p] \cdot \binom{n}{k} \cdot O(1) \cdot m^{-k} e^{-\Omega(n/m^2)}
	\le \E[p] \cdot O(1) \cdot m^k \left(\frac{ne}{k m^2}\right)^k e^{-\Omega(n/m^2)} 
  \le \E[p] \cdot O(m)^k ,
	\]
  where the last inequality is because for constant $c$, the function $(x/k)^k e^{-cx}$ is maximized when $x = k/c$, and is $O(1)^k$.

  When $k \ge n/(2d)^2 \ge n/(4m^2)$, by \Cref{lemma:bias-large-m} we have
	\[
	L_{1,k}(p)
	\le \E[p] \cdot \binom{n}{k} \max_{0 \le \ell \le m-1} \abs*{ \frac{\sum_{x \in \calX_{\ell \bmod m}}  (-1)^{x_1 + \cdots + x_k}}{\abs{\calX_{\ell \bmod m}}} }
  \le \E[p] \cdot O(m)^k . \qedhere
	\]
\end{proof} 

It remains to prove \Cref{claim:bias-small-m} and \Cref{lemma:bias-large-m}.
\begin{proof}[Proof of \Cref{claim:bias-small-m}]
  Consider an expansion of
  \[
    p(y) = (1-y)^k (1+y)^{n-k}
  \]
  into $2^n$ terms indexed by $x \in \zo^n$ where $x_i = 0$ indicates that we take the term 1 from the $i$-th factor.
  It is easy to see that the coefficient of $y^d$ is
  $\sum_{\abs{x}=d} (-1)^{\sum_{i=1}^k x_i}$, where $\abs{x}$ denotes the number of occurrences of $1$ in $x$.
  Denote $\zeta := e^{2\pi i/m}$ as an $m$-th root of unity.
  Recall the identity
  \[
    \frac{1}{m} \sum_{j=0}^{m-1} \zeta^{j (d-\ell)}
    = \begin{cases} 1 & \text{if $d \equiv \ell \pmod m$} \\
                    0 & \text{otherwise.} \end{cases}
  \]
  Thus the sum we want to bound is equal to
  \[
    \frac{1}{m}\sum_{j=0}^{m-1} \zeta^{-j \ell} p(\zeta^j) .
  \]
  Note that for $j=0$ we have $p(\zeta^0) = p(1) = 0$ for $k \neq 0$ while $p(\zeta^0)=2^n$ for $k = 0$.
  We now bound above $p(\zeta^j)$ for $j > 0$.
  As $\abs{1 + e^{i\theta}} = 2\abs{\cos (\theta/2)}$ and $\abs{1 - e^{i\theta}} = 2\abs{\sin (\theta/2)}$ we have
  \[
    \abs{p(\zeta^j)}
    = \abs{1-\zeta^j}^k \abs{1+\zeta^j}^{n-k}
    = 2^{n} \Bigl| \sin\frac{j \pi}{m} \Bigr|^k \Bigl| \cos\frac{j \pi}{m} \Bigr|^{n-k} .
  \]
Therefore, for $k \ne 0$ we have
\[
	\abs*{ \sum_{x \in \calX_{\ell \bmod m}} (-1)^{x_1 + \cdots + x_k} }
  \le \frac{1}{m} \sum_{j=0}^{m-1} \abs{ \zeta^{j\ell} \cdot p(\zeta^j) }
  \le \frac{2^n}{m} \sum_{j=0}^{m-1} \abs[\Big]{ \cos\frac{j\pi}{m} }^{n-k} \abs[\Big]{\sin\frac{j\pi}{m}}^k .
\]
For $0 < \theta < \pi/2$, we have $\cos\theta \le e^{-\theta^2/2}$ 
and $\sin\theta \le \theta$.
So
\begin{align*}
\sum_{j=0}^{m-1} \abs[\Big]{ \cos\frac{j\pi}{m} }^{n-k} \abs[\Big]{\sin\frac{j\pi}{m}}^k
&= 2 \sum_{0 < j < m/2} \Bigl( \cos\frac{j\pi}{m} \Bigr)^{n-k} \Bigl(\sin\frac{j\pi}{m}\Bigr)^k \\
&\le 2 \sum_{0 < j < m/2} e^{-\frac{n-k}{2} \bigl(\frac{j\pi}{m}\bigr)^2} \Bigl(\frac{j\pi}{m}\Bigr)^k .
\end{align*}
	We now observe that each term in the above summation is at most half of the previous one. This is because for every $j \ge 1$ we have
	\begin{align*}
	\frac{\exp\Bigl(-\frac{(n-k)(j+1)^2\pi^2}{2m^2} \Bigr) \Bigl(\frac{(j+1)\pi}{m}\Bigr)^k}{\exp\Bigl(-\frac{(n-k)j^2\pi^2}{2m^2} \Bigr) \Bigl(\frac{j\pi}{m}\Bigr)^k}
	&\le \exp\Bigl(-\frac{\pi^2(n-k)}{m^2} j\Bigr) \cdot \Bigl(1 + \frac{1}{j}\Bigr)^k \\
	&\le \exp\Bigl(-\frac{\pi^2(n-k)}{m^2}\Bigr) \cdot 2^k \\
	&\le \exp\Bigl(-\frac{3\pi^2}{4}k\Bigr) \cdot 2^k \\
	&\le \exp\Bigl(-\frac{3\pi^2}{4}\Bigr) \cdot 2 \\
	&\le \frac{1}{2},
	\end{align*}
  where the third step holds because w.l.o.g.\ $m \geq 2$ and $1 \le k \le n/m^2 \le n/4$, and hence $(n-k)/m^2 \ge 3k/4$.
	For these reasons, we can deduce that
	\[
	\sum_{j=0}^{m-1} \abs[\Big]{ \cos\frac{j\pi}{m} }^{n-k} \abs[\Big]{\sin\frac{j\pi}{m}}^k
	\le 2e^{-\frac{\pi^2(n-k)}{2m^2}} \Bigl(\frac{\pi}{m}\Bigr)^k \sum_{1 \le j < m/2} 2^{1-j}
	\le m^{-k} \exp\bigl(-\Omega(n/m^2)\bigr).
	\]

  For $k = 0$ we also need to include the term $p(\zeta^0) = p(1) = 2^n$ which divided by $m$ gives the term $2^n/m$.
\end{proof}

\subsection{Proof of \texorpdfstring{\Cref{lemma:bias-large-m}}{Lemma~\ref{lemma:bias-large-m}}} \label{sec:kravchuk}
In this section, we prove \Cref{lemma:bias-large-m}.
Let $\calK(n,k,w) := \sum_{x \in \calX_w} (-1)^{x_1 + \cdots + x_k}$ be the \emph{Kravchuk polynomial}.
We recall the ``symmetry relation'' for this polynomial (see e.g.~\cite{Kravchuk:wikipedia}):
\begin{align} \label{eq:kravchuk-identity}
\binom{n}{k} \calK(n,k,w)
= \sum_{y \in \calX_k} \sum_{x \in \calX_w} (-1)^{\angles{y,x}}
= \sum_{x \in \calX_w} \sum_{y \in \calX_k} (-1)^{\angles{x,y}}
= \binom{n}{w} \calK(n,w,k) .
\end{align}
Using \Cref{eq:kravchuk-identity}, we have
\begin{align}
\binom{n}{k} \max_{0 \le \ell \le m-1} \abs*{ \frac{\sum_{x \in \calX_{\ell \bmod m}}  (-1)^{x_1 + \cdots + x_k}}{\abs{\calX_{\ell \bmod m}}} }
&= \max_\ell\frac{\abs{\sum_{w \equiv \ell \bmod m} \binom{n}{k} \calK(n,k,w)}}{\sum_{w \equiv \ell \bmod m} \binom{n}{w}} \nonumber \\
&= \max_\ell\frac{\abs{\sum_{w \equiv \ell \bmod m} \binom{n}{w} \calK(n,w,k)}}{\sum_{w \equiv \ell \bmod m} \binom{n}{w}} \nonumber \\
&\le \max_\ell \frac{\sum_{c} \binom{n}{\floor{n/2}+cm+\ell} \abs{\calK(n,\floor{n/2}+cm+\ell,k)} }{\sum_{c} \binom{n}{\floor{n/2}+cm+\ell} } \nonumber \\
&\le 2 \max_\ell \frac{\sum_{c \ge 0} \binom{n}{\floor{n/2}+cm+\ell} \abs{\calK(n,\floor{n/2}+cm+\ell,k)} }{\sum_{c \ge 0} \binom{n}{\floor{n/2}+cm+\ell} } \label{eq:bias1},
\end{align}
where the indices of $c$ are integers that iterate over $\floor{n/2} + cm + \ell \in \{0, \ldots, n\}$, and the last line uses the symmetry of ${n \choose w} \calK(n,w,k)$ around  $w=n/2.$

Later in this section, we will prove (in \Cref{claim:kravchuk-bound}) that
\[
  \abs{\calK(n,\floor{n/2}+s,k)} \le 2^{O(k)} \left( \left(\frac{n}{k}\right)^{\frac{k}{2}} + \left(\frac{\abs{s}}{k}\right)^k \right) .
\]
Plugging this bound into \Cref{eq:bias1}, we have
\begin{align}
  \cref{eq:bias1}
  &\le 2^{O(k)} \max_\ell \left( \frac{\sum_{c\ge 0} \binom{n}{\floor{n/2}+cm+\ell} \bigl( (n/k)^{k/2} + ((cm+\ell)/k)^k \bigr) }{\sum_{c\ge 0} \binom{n}{\floor{n/2}+cm+\ell} } \right) \nonumber \\
  &= O(n/k)^{\frac{k}{2}} + \frac{2^{O(k)}}{k^k} \max_\ell \left( \frac{\sum_{c\ge 0} \binom{n}{\floor{n/2}+cm+\ell} (cm+\ell)^k }{\sum_{c\ge 0} \binom{n}{\floor{n/2}+cm+\ell} } \right) . \label{eq:bias2}
\end{align}
We will prove the following claim which bounds the ratio of the two summations.
\begin{claim} \label{claim:moment-mod-m}
  For any $\ell \in \{0, \ldots, m-1\}$ and $k \ge n/(2m)^2$, we have
  \[
    \frac{\sum_{c \ge 0} \binom{n}{\floor{n/2}+cm+\ell} (cm+\ell)^k }{\sum_{c \ge 0} \binom{n}{\floor{n/2}+cm+\ell} } 
    \le O(mk)^k .
  \]
\end{claim}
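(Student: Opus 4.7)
The plan is to reduce the ratio to a Gaussian moment computation by replacing the binomial coefficients with their Gaussian envelopes, and then to split into regimes depending on how the lattice spacing $m$ compares with $\sqrt{n}$.

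First I would invoke the standard two-sided estimate: for some absolute constants $c_1, c_2 > 0$ and all $0 \le s \le n/2$,
\[
  c_1 \binom{n}{\lfloor n/2 \rfloor} e^{-2 s^2/n}
  \;\le\; \binom{n}{\lfloor n/2 \rfloor + s}
  \;\le\; c_2 \binom{n}{\lfloor n/2 \rfloor} e^{-s^2/n}.
\]
Plugging the upper bound into the numerator and the lower bound into the denominator (and cancelling $\binom{n}{\lfloor n/2 \rfloor}$) reduces the claim to showing
\[
\frac{\sum_{c \ge 0} (cm+\ell)^k e^{-(cm+\ell)^2/n}}{\sum_{c \ge 0} e^{-2(cm+\ell)^2/n}} \le O(mk)^k.
\]
Intuitively, this is the $k$-th moment of the position $Y = Cm+\ell$ of a random variable with Gaussian-decaying weight on the arithmetic progression, which should scale like $\sqrt{n}^{\,k}$ times the appropriate factor of $k$ from Stirling.

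Next I would split into two regimes. In the \emph{dense} regime $m \le \sqrt{n}$, both sums are well-approximated by Riemann sums of their Gaussian integrands with step $m$, since $\Omega(\sqrt{n}/m)$ points of the arithmetic progression lie within the effective support of the Gaussians. The denominator is then $\Omega(\sqrt{n}/m)$, while the numerator is at most $O(1/m) \cdot \tfrac{1}{2}\Gamma\!\left((k+1)/2\right) n^{(k+1)/2}$; by Stirling, $\Gamma((k+1)/2) \le O(k)^{k/2}$, so the ratio is at most $O(\sqrt{nk})^{\,k}$. In the \emph{sparse} regime $m > \sqrt{n}$, the denominator is at least its $c=0$ term $e^{-2\ell^2/n} \ge e^{-2m^2/n}$, and the numerator is controlled by separating the $c=0$ contribution (at most $\ell^k \le m^k$) from the tail $\sum_{c \ge 1}$, which decays geometrically via $e^{-(cm+\ell)^2/n} \le e^{-c^2 m^2/n}$ combined with $(cm+\ell)^k \le (2cm)^k$; once again one verifies that the resulting expression is $O(\sqrt{nk})^{\,k}$.

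Finally, the hypothesis $k \ge n/(2m)^2$ rearranges to $\sqrt{nk} \le 2mk$, so $O(\sqrt{nk})^{\,k} = O(mk)^{\,k}$ as required. The main obstacle is the intermediate sub-regime $\sqrt{n} < m \le \sqrt{nk}$: here the peak of $s^k e^{-s^2/n}$ sits at $s \approx \sqrt{nk/2} > m$, so several terms of the numerator are non-negligible, yet the denominator has already concentrated onto $c=0$. Tracking the two Gaussian scales $e^{-s^2/n}$ and $e^{-2s^2/n}$ carefully through this sub-regime and absorbing the Stirling constants is the most delicate bookkeeping step, but the slack provided by $k \ge n/(2m)^2$ is always enough to close the estimate.
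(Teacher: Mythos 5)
Your reduction replaces $\binom{n}{\lfloor n/2\rfloor+s}$ by Gaussian envelopes and then tries to bound a Gaussian moment; this runs into two genuine problems, and the "sparse'' and "intermediate'' regimes are asserted rather than proved.

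First, the two-sided estimate
$c_1 \binom{n}{\lfloor n/2\rfloor} e^{-2s^2/n} \le \binom{n}{\lfloor n/2\rfloor+s}$
with an absolute constant $c_1$ is simply false once $s$ is a constant fraction of $n$: for $s$ near $n/2$ the left side grows like $(2/\sqrt{e})^n/\sqrt{n}$ while the right side is $1$, and already for $s=\Theta(n)$ the true ratio $\binom{n}{n/2+s}/\binom{n}{n/2}$ decays like $2^{-(1-H(1/2+s/n))n}$, which is strictly faster than $e^{-2s^2/n}$. Since $\ell$ can be as large as $m-1$ and $m$ is $\Theta(d)$ with $d$ as large as $n$, your lower bound on the denominator is invalid in exactly the regime you call "sparse."

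Second, even where both envelopes are legitimate, they have different widths: you upper-bound the numerator with $e^{-s^2/n}$ but lower-bound the denominator with $e^{-2s^2/n}$. Applying these to the same index $s=\ell$ (the $c=0$ term) introduces a spurious multiplicative factor $e^{\ell^2/n}$, up to $e^{m^2/n}$, into the ratio. The hypothesis $k \ge n/(2m)^2$ gives you a \emph{lower} bound $m^2/n \ge 1/(4k)$ but no upper bound, so $e^{m^2/n}$ is not controlled by $O(mk)^k$ in general (take e.g.\ $k=1$, $m=\Theta(n)$). In the original quantity this factor does not exist, because the $\binom{n}{\lfloor n/2\rfloor+\ell}$ in the $c=0$ terms of numerator and denominator cancel exactly; your two-sided approximation destroys that cancellation. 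The paper's proof sidesteps all of this by never approximating the individual binomials: it lower-bounds the denominator by the single term $\binom{n}{\lfloor n/2\rfloor+\ell}$, then directly bounds the \emph{ratio} $\binom{n}{\lfloor n/2\rfloor+\ell+cm}/\binom{n}{\lfloor n/2\rfloor+\ell} \le e\cdot e^{-(cm)^2/n} \le e\cdot e^{-c^2/(4k)}$ by telescoping, and finishes with the elementary estimate $\sum_{c\ge 1} c^k e^{-c^2/(4k)} = O(k^k)$. That ratio bound holds uniformly in $m$ and $\ell$, with no Gaussian envelope needed. If you want to salvage a Gaussian-style argument, you should emulate that structure: work with the ratio relative to the $c=0$ term, so only a one-sided (upper) bound on $\binom{n}{\lfloor n/2\rfloor+\ell+cm}/\binom{n}{\lfloor n/2\rfloor+\ell}$ is required.
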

Therefore, as $(n/k)^{1/2} \le 2m$, we have
\[
  \cref{eq:bias2}
  \le O(n/k)^{\frac{k}{2}} + \frac{2^{O(k)}}{k^k} \cdot O(mk)^k 
  = O(m)^k .
\]
This completes the proof of \Cref{lemma:bias-large-m}. \qed

\medskip

It remains to prove \Cref{claim:moment-mod-m,claim:kravchuk-bound}. We begin with \Cref{claim:moment-mod-m}.

\begin{proof}[Proof of \Cref{claim:moment-mod-m}]
  Fix any $\ell \in \{0, \ldots, m-1\}$, we first bound
  \[
    \frac{\sum_{c \ge 0} \binom{n}{\floor{n/2}+cm+\ell} (cm+\ell)^k }{\sum_{c \ge 0} \binom{n}{\floor{n/2}+cm+\ell} } 
  \]
  with some quantity that does not depend on $\ell$.
  First, observe that the denominator is at least $\binom{n}{\floor{n/2} + \ell}$.
  Next, we have
  \begin{align*}
    \frac{\binom{n}{\floor{n/2}+\ell+cm}}{\binom{n}{\floor{n/2}+\ell}}
    &= \prod_{i=1}^{cm} \left(\frac{n - \floor{n/2} - \ell - cm + i}{\floor{n/2} + \ell + i} \right) \\
    &= \prod_{i=1}^{cm} \left(1 - \frac{2\ell+cm+\floor{n/2}-\ceil{n/2}}{\floor{n/2}+\ell+i} \right) \\
    &\le \left(1 - \frac{2\ell+cm-1}{n} \right)^{cm} \\
    &\le \exp\left(-\frac{cm(cm-1)}{n}\right) \\
    &\le e \cdot \exp(-(cm)^2/n) && (e^{cm/n} \le e) \\
    &\le e \cdot \exp(-(c^2/4k)) && (k \ge n/(2m)^2) .
  \end{align*}
  Finally, we have $\ell \leq m$ and $(cm+\ell) \le (c+1) m \le 2cm$ for $c \ge 1$.
  Therefore
  \begin{align*}
    \frac{\sum_{c \ge 0} (cm+\ell)^k \binom{n}{\floor{n/2}+cm+\ell} }{\sum_{c \ge 0} \binom{n}{\floor{n/2}+cm+\ell} }
    &\le \frac{\sum_{c \ge 0} (cm+\ell)^k \binom{n}{\floor{n/2}+cm+\ell} }{\binom{n}{\floor{n/2}+\ell} }  \\
    &\le m^k + e \sum_{c \ge 1} (2cm)^k e^{-\frac{c^2}{4k}} \\
    &=  m^k + e (2m)^k \sum_{c \ge 1} c^k e^{-\frac{c^2}{4k}} .
  \end{align*}
  In \Cref{claim:moment} below we will show that the summation in the last line is bounded by $O(k^k)$.
  Applying this bound completes the proof.
\end{proof}
\begin{claim} \label{claim:moment}
  $\sum_{c \ge 1} c^k e^{-\frac{c^2}{4k}} \le O(k^k)$ for any $k \ge 1$.
\end{claim}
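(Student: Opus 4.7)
The plan is to bound the sum $\sum_{c \ge 1} c^k e^{-c^2/(4k)}$ by splitting it at a carefully chosen threshold past the peak of the summand. Let $f(c) := c^k e^{-c^2/(4k)}$. Differentiating $\log f$ gives $k/c - c/(2k) = 0$, so $f$ is unimodal with maximum at $c^* = k\sqrt{2}$ and peak value $f(c^*) = (k\sqrt{2})^k e^{-k/2} = (k\sqrt{2/e})^k = O(k)^k$. This already suggests the correct answer; the only task is to show the sum is not much larger than the peak times a modest width.

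I would choose the threshold $c_0 = 4k$, safely past the peak. For the head, bound each term crudely by $(4k)^k$ and multiply by the $4k$ summands, obtaining
\[
  \sum_{c=1}^{4k} c^k e^{-c^2/(4k)} \le 4k \cdot (4k)^k = (4k)^{k+1}.
\]
This is $O(k^k)$ because the extra linear factor of $k$ can be absorbed into the base constant (since $k^{1/k}$ is uniformly bounded, one has $(4k)^{k+1} \le (Ck)^k$ for a suitable absolute constant $C$).

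For the tail, use the elementary observation that $c \ge 4k$ implies $c^2/(4k) \ge c$, and therefore $f(c) \le c^k e^{-c}$. To bound $\sum_{c > 4k} c^k e^{-c}$, I would show that for $c \ge 2k$ the ratio of consecutive terms satisfies
\[
  \frac{(c+1)^k e^{-(c+1)}}{c^k e^{-c}} = \Bigl(1 + \frac{1}{c}\Bigr)^k e^{-1} \le e^{k/c} e^{-1} \le e^{-1/2}.
\]
Hence the tail is dominated by a geometric series with first term at most $(4k)^k e^{-4k} = (4k/e^4)^k$ and common ratio at most $e^{-1/2}$, which sums to $O\bigl((4k/e^4)^k\bigr) = O(k^k)$. (As a sanity check, one could alternatively compare the tail to $\int_0^\infty t^k e^{-t}\, dt = \Gamma(k+1)$ and apply Stirling; either route gives the same conclusion.) Combining the head and the tail yields $\sum_{c \ge 1} c^k e^{-c^2/(4k)} = O(k^k)$.

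There is no substantive obstacle here: the claim is a routine series estimate. The only points requiring mild care are the choice of split point $c_0 = 4k$ (large enough that $c^2/(4k) \ge c$ holds and that one is past the peak, but still $O(k)$ so the crude head bound succeeds), and the observation that the extra linear $k$ in the head estimate $(4k)^{k+1}$ is harmless.
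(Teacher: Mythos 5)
Your strategy---split at a threshold past the peak, bound the head by (width)\,$\times$\,(max term), and dominate the tail by a geometric series---is more elementary than the paper's, which bounds the sum of a unimodal sequence by $\max + \int_0^\infty$, evaluates the Gaussian-type integral exactly via the Gamma function with a Stirling estimate, and only then specializes $a = 1/(4k)$. However, your head estimate as written has a genuine gap.

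You bound each of the $\le 4k$ head terms by $(4k)^k$, discarding the Gaussian factor entirely, and obtain $(4k)^{k+1}$. You then assert this is $O(k^k)$ because $(4k)^{k+1} \le (Ck)^k$. That last implication is false: $(Ck)^k = C^k k^k$, which for any $C>1$ is exponentially larger than $k^k$. What you have shown is $O(k)^k$, not $O(k^k)$. The claim asserts $O(k^k)$ in the sense of $O(1)\cdot k^k$, and the paper's proof does achieve this: it ends with $\frac{1+4\sqrt{k}}{(e/2)^{k/2}}\cdot k^k$, whose prefactor is bounded (in fact tends to $0$).

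The fix is immediate and is already contained in your own opening paragraph: the summand is unimodal with peak value $(k\sqrt{2})^k e^{-k/2} = (2/e)^{k/2} k^k$, and $(2/e)^{1/2} < 1$. Bounding each head term by this peak instead of by $(4k)^k$ gives head $\le 4k\,(2/e)^{k/2}\,k^k$, and the prefactor $4k\,(2/e)^{k/2}$ is $O(1)$ (indeed $o(1)$). Your tail argument is correct as stated: for $c > 4k$ one has $c^2/(4k) \ge c$, so the term is at most $c^k e^{-c}$; the consecutive-ratio computation gives geometric decay with ratio $\le e^{-1/2}$ once $c \ge 2k$; and the leading term is of order $(4k/e^4)^k = (4/e^4)^k k^k$ with $4/e^4 < 1$, so the tail is $o(k^k)$. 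With the corrected head bound the whole sum is $O(k^k)$, matching the claim. (As an aside, your weaker $O(k)^k$ bound would actually suffice for the downstream use in Claim~\ref{claim:moment-mod-m}, since the target there is $O(mk)^k$, but it does not establish Claim~\ref{claim:moment} as stated.)
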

\begin{proof}
Consider the function
\[
  \lambda_a(x) := x^{k} \cdot e^{-a x^{2}},
\]
for some parameter $a > 0$.  Its derivative is $\lambda_a'(x) = (k-2ax^2) x^{k-1} e^{-ax^2}$; so for $x \geq 0$, $\lambda_a(x)$ is increasing when $x \leq \sqrt{k / (2a)}$, and is decreasing when $x \geq \sqrt{k / (2a)}$. For such a (nonnegative) {\em first-increasing-then-decreasing} function, it can be seen that
\begin{align*}
\sum_{c \geq 1} \lambda_{a}(c)
& \leq \underbrace{\vphantom{\int_0^\infty}\max \{\lambda_{a}(x): x \geq 0\}}_{\term[A]{term:T1}} + \underbrace{\int_{0}^{\infty} \lambda_{a}(x) dx}_{\term[B]{term:T2}}.
\end{align*}
The first term \cref{term:T1} is exactly equal to
\[
\cref{term:T1} = \lambda_{a}\left(\sqrt{k / (2a)}\right) = \left(\frac{k}{2e a}\right)^{k / 2} .
\]
Using the substitution $u=ax^2$, the second term \cref{term:T2} is equal to
\[
\cref{term:T2}
= \int_{0}^{\infty} x^k e^{-a x^{2}} dx
= \frac{1}{2} a^{-\frac{k + 1}{2}} \cdot \Gamma\left(\frac{k + 1}{2}\right) .
\]
where $\Gamma(k) := \int_0^\infty x^{k-1}e^{-x} dx$ is the Gamma function.
It is known (see, for example, \cite[Equation~(5.6.1)]{OLBC10}) that $\Gamma(\frac{k + 1}{2}) \leq \sqrt{2\pi} \cdot e^{\frac{1}{12}} \cdot (\frac{k}{2e})^{k / 2}$ for each integer $k \geq 1$. Plugging this upper bound into the above formula gives
\[
\cref{term:T2}
\leq e^{\frac{1}{12}} \cdot \sqrt{\frac{\pi}{2a}} \cdot \left(\frac{k}{2e a}\right)^{k / 2}
\leq \frac{2}{\sqrt{a}} \cdot \left(\frac{k}{2e a}\right)^{k / 2}.
\]
By setting $a = 1/(4k)$, we deduce that
\[
  \cref{term:T1} + \cref{term:T2}
  \leq \bigl(1 + 4\sqrt{k} \bigr) \cdot \left( 2 k^2/e \right)^{k / 2}
  = \frac{1 + 4\sqrt{k}}{(e / 2)^{k / 2}} \cdot k^{k}
  = O(1) \cdot k^k ,
\]
proving the claim.
\end{proof}

It remains to prove \Cref{claim:kravchuk-bound}, which gives a bound on $\calK(n,w,k) := \sum_{x \in \calX_k} (-1)^{x_1 + \cdots + x_w}$.
\begin{claim} \label{claim:kravchuk-bound}
  $\abs{\calK(n,\floor{n/2}+s,k)} \le 2^{O(k)} ((n/k)^{k/2} + (\abs{s}/k)^k)$ for any $k \geq 1.$
\end{claim}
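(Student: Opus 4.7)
I would prove the bound by combining the generating-function formula for the Kravchuk polynomial with a simple coefficient-extraction trick. Using the symmetry relation $\binom{n}{k}\mathcal{K}(n,k,w) = \binom{n}{w}\mathcal{K}(n,w,k)$ from \Cref{eq:kravchuk-identity} (equivalently, the $y\mapsto -y$ symmetry that swaps $w$ with $n-w$ up to a sign), the claim is equivalent to controlling $\mathcal{K}(n,w,k) = [y^k](1-y)^w(1+y)^{n-w}$ for $w = \lfloor n/2\rfloor + s$ with $s\geq 0$ (WLOG).

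The key step is to factor the generating function as
\[
(1-y)^w(1+y)^{n-w} \;=\; (1-y^2)^{n-w}(1-y)^{2s+\delta},
\]
where $\delta \in \{0,-1\}$ is a small parity-of-$n$ correction that only multiplies the final bound by an absolute constant and can be absorbed into $2^{O(k)}$. Because the coefficient of $y^k$ is a signed sum of products of binomials, the triangle inequality at the coefficient level gives
\[
|\mathcal{K}(n,w,k)| \;\leq\; [y^k]\,(1+y^2)^{n-w}(1+y)^{2s}.
\]
Since this polynomial has nonnegative coefficients, for every $y_0 > 0$ we have $[y^k] Q(y) \leq Q(y_0)/y_0^k$, yielding
\[
|\mathcal{K}(n,w,k)| \;\leq\; y_0^{-k}\,(1+y_0^2)^{n-w}\,(1+y_0)^{2s}.
\]

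It then remains to optimize $y_0$ in two regimes. If $s \leq \sqrt{nk}$, I would take $y_0 = \sqrt{k/n}$ (well-defined since WLOG $k\leq n$). Using $1+x\leq e^x$, the factor $(1+y_0^2)^{n-w}$ is at most $e^{k/2}$ and $(1+y_0)^{2s}$ is at most $e^{2s\sqrt{k/n}} \leq e^{2k}$; dividing by $y_0^k = (k/n)^{k/2}$ gives $2^{O(k)}(n/k)^{k/2}$. If $s > \sqrt{nk}$, I would take $y_0 = k/(2s) \leq 1$ (the constraint $y_0\leq 1$ is satisfied in this regime since $s \geq \sqrt{nk} \geq k/2$ whenever $k\leq n$). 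Then $(1+y_0^2)^{n-w} \leq e^{k^2 n/(8s^2)} \leq e^{k/8}$ (using $s^2 \geq nk$) and $(1+y_0)^{2s}\leq e^{k}$; dividing by $y_0^k = (k/(2s))^k$ gives $2^{O(k)}(s/k)^k$.

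Combining the two cases yields $|\mathcal{K}(n,\lfloor n/2\rfloor+s,k)| \leq 2^{O(k)}\bigl((n/k)^{k/2} + (s/k)^k\bigr)$. The only subtleties are the parity of $n$ and the degenerate ranges ($k > n$ makes $\mathcal{K}$ trivially zero; $k=0$ is vacuous), and I expect these to cost at most an absolute constant. In particular, I do not anticipate a genuine obstacle: the main content is choosing the two values of $y_0$ so that the exponentials are each $e^{O(k)}$ and the $y_0^{-k}$ factor produces the desired leading term.
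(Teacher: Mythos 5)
Your proof is correct and reaches the same conclusion as the paper's, but the route is genuinely slicker, so the comparison is worth recording. Both arguments start from the generating-function identity $\mathcal{K}(n,w,k)=[y^k](1-y)^w(1+y)^{n-w}$, but the paper factors it (for $n$ even, $w=n/2-s$) as $(1-y^2)^{n/2}\bigl(\tfrac{1+y}{1-y}\bigr)^{s}$, a polynomial times a rational function; it then bounds the coefficients of $\bigl(\tfrac{1+y}{1-y}\bigr)^s$ by those of $(1-2y)^{-s}$, and is left with the explicit combinatorial sum $\sum_i\binom{n/2}{i}\binom{|s|+k-2i-1}{k-2i}2^{k-2i}$, which requires the separate auxiliary \Cref{claim:binomial} to estimate. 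You instead factor into two honest polynomials, $(1-y^2)^{\,n-w}(1-y)^{\,2s+\delta}$, pass to the nonnegative-coefficient majorant $(1+y^2)^{\,n-w}(1+y)^{\,2s}$ by the triangle inequality, and then invoke the elementary fact that a nonnegative-coefficient polynomial $Q$ satisfies $[y^k]Q\le Q(y_0)/y_0^k$ for any $y_0>0$. Optimizing $y_0$ in the two regimes $s\lessgtr\sqrt{nk}$ gives the result directly, with no need for a separate binomial-sum lemma. This is a cleaner and more modular argument; the ``evaluate at the saddle point'' step replaces the paper's term-by-term bound and makes the two-regime structure of the final answer transparent.

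Two small technical points to tidy up. First, the parity correction $\delta=2\lfloor n/2\rfloor-n\in\{0,-1\}$ can make $2s+\delta$ negative when $n$ is odd and $s=0$, in which case $(1-y)^{2s+\delta}$ is not a polynomial and the triangle inequality cannot be applied factor-by-factor as written. The fix is to write the factorization as $(1-y^2)^{\min(w,\,n-w)}(1\mp y)^{|w-(n-w)|}$, which is always a product of two polynomials and agrees with yours for $s\ge 1$; the $s=0$, $n$ odd case then becomes $(1-y^2)^{(n-1)/2}(1+y)$, which your choice $y_0=\sqrt{k/n}$ handles with room to spare. Second, you should say explicitly at the outset that by the reflection symmetry $|\mathcal{K}(n,n-w,k)|=|\mathcal{K}(n,w,k)|$ you may take $s\ge 0$, and note that $k>n$ is trivial since $\mathcal{K}(n,w,k)=0$ there (this is what guarantees $y_0\le 1$ in your second regime). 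Neither issue is a real gap; with those two sentences added the argument is complete.
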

A similar claim was also made in \cite[Claim~4.10]{CohenPT21}, but the bounds we obtain here are more general and sharper for larger $\abs{s}$, which is crucial for our application.
We start with a preliminary claim.
\begin{claim} \label{claim:binomial}
 For any positive integer $s$,
 \[
   \sum_{i=0}^{\floor{k/2}} \frac{n^i}{i!} \frac{s^{k-2i}}{(k-2i)!}
   \le \begin{cases}
     (2e)^k \left(\frac{n}{k}\right)^{k/2} & \text{if $s \le \sqrt{nk}$} \\
     (2e)^k \left(\frac{s}{k}\right)^k & \text{if $s \ge \sqrt{nk}$.}
   \end{cases}
 \]
\end{claim}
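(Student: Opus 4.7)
The plan is to interpret the sum as a single Taylor coefficient. Expanding the product $e^{nx^2}\cdot e^{sx}$ gives
\[
e^{nx^2+sx} = \sum_{k \geq 0} \left(\sum_{i=0}^{\floor{k/2}} \frac{n^i s^{k-2i}}{i!(k-2i)!}\right) x^k,
\]
so the quantity to be bounded is exactly $[x^k]\, e^{nx^2+sx}$. Since $n, s > 0$, every Taylor coefficient on the right is nonnegative; comparing a single term to the whole sum gives the Abel-type bound
\[
[x^k]\, e^{nx^2+sx} \;\leq\; \frac{e^{nr^2+sr}}{r^k} \qquad \text{for every } r > 0.
\]
It therefore suffices to choose $r$ judiciously in each of the two regimes.

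In the regime $s \leq \sqrt{nk}$, I would take $r := \sqrt{k/(2n)}$, so that $nr^2 = k/2$ and, by the hypothesis on $s$, $sr \leq \sqrt{nk}\cdot\sqrt{k/(2n)} = k/\sqrt{2}$. Substituting yields
\[
[x^k]\, e^{nx^2+sx} \;\leq\; e^{k/2+k/\sqrt{2}} \left(\frac{2n}{k}\right)^{k/2} = \left(\sqrt{2}\,e^{(1+\sqrt{2})/2}\right)^{k} \left(\frac{n}{k}\right)^{k/2},
\]
and a direct numerical check shows $\sqrt{2}\,e^{(1+\sqrt{2})/2} < 2e$. In the regime $s \geq \sqrt{nk}$, I would take $r := k/(2s)$, so $sr = k/2$ and $nr^2 = nk^2/(4s^2) \leq k/4$ (using $s^2 \geq nk$). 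Then
\[
[x^k]\, e^{nx^2+sx} \;\leq\; e^{3k/4}\left(\frac{2s}{k}\right)^k = \left(2e^{3/4}\right)^k \left(\frac{s}{k}\right)^k \;\leq\; (2e)^k \left(\frac{s}{k}\right)^k,
\]
since $2e^{3/4} < 2e$. Combining the two regimes gives the claim.

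The main content of the proof is spotting the generating-function identity; once in hand, each case reduces to optimizing a single parameter $r$ to balance the quadratic term $nr^2$, the linear term $sr$, and the polar factor $r^{-k}$. No significant obstacle remains beyond a routine numerical verification of the two constant factors, which in both regimes leaves comfortable slack below the claimed constant $2e$.
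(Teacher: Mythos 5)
Your proof is correct and takes a genuinely different route from the paper. You recognize the sum as the $k$-th Taylor coefficient of $e^{nx^2+sx}$ and then apply the standard ``Chernoff/Abel'' bound $[x^k]F \le F(r)/r^k$ for a power series $F$ with nonnegative coefficients, optimizing $r$ separately in each regime ($r=\sqrt{k/(2n)}$ and $r=k/(2s)$). The paper instead works directly with the sum: it substitutes $s=(cn)^{1/2}$, factors out $(cn)^{k/2}$, rewrites $\frac{1}{i!(k-2i)!}$ as $\frac{(2i)!}{k!\,i!}\binom{k}{2i}$, then bounds the max and the binomial sum separately, finishing with $(2i)!/i! \le (2i)^i$ and $k! \ge (k/e)^k$; the case split $c\le k$ vs.\ $c\ge k$ is exactly your $s\le\sqrt{nk}$ vs.\ $s\ge\sqrt{nk}$. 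Both arguments optimize over one free parameter and land comfortably under the same constant $2e$. Your generating-function approach isolates the analytic content in a single clean inequality and avoids the factorial/Stirling bookkeeping, arguably making the constant-tracking more transparent; the paper's argument is more elementary in the sense of never leaving the discrete world. Your numerical checks are right: $\sqrt{2}\,e^{(1+\sqrt{2})/2}\approx 4.73 < 2e$ and $2e^{3/4}\approx 4.23 < 2e$.
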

\begin{proof}
  Write $s = (cn)^{1/2}$ for some $c > 0$.
  Then $n^i s^{k-2i} = (cn)^{k/2}/c^i$, and
  \begin{align}
   \sum_{i=0}^{\floor{k/2}} \frac{n^i}{i!} \frac{s^{k-2i}}{(k-2i)!}
   = (cn)^{\frac{k}{2}} \sum_{i=0}^{\floor{k/2}} \frac{1}{c^i i! (k-2i)!} . \label{eq:binom1}
  \end{align}
We can write 
\[
  \frac{1}{i!(k-2i)!}
  = \frac{(2i)!}{k! i!} \frac{k!}{(2i)!(k-2i)!}
  = \frac{(2i)!}{k! i!} \binom{k}{2i} .
\]
So
\begin{align*}
  \cref{eq:binom1}
  &= \frac{(cn)^{\frac{k}{2}}}{k!} \sum_{i=0}^{\floor{k/2}} \frac{(2i)!}{c^i i!} \binom{k}{2i} \\
  &\le \frac{(cn)^{\frac{k}{2}}}{k!} \max_{0 \le i \le \floor{\frac{k}{2}}} \left(\frac{2i}{c}\right)^i \cdot \sum_{i=0}^{\floor{k/2}} \binom{k}{2i} \\
  &\le 2^k \cdot \frac{(cn)^{\frac{k}{2}}}{k!} \max_{0 \le i \le \floor{\frac{k}{2}}} \left(\frac{2i}{c}\right)^i . 
\end{align*}
If $c \le k$, then for $0 \le i \le \floor{k/2}$, we have $2i/c \le k/c$ and thus $(2i/c)^i \le (k/c)^i \le (k/c)^{k/2}$ because $k/c \ge 1$.
Hence,
\[
  2^k \cdot \frac{(cn)^{\frac{k}{2}}}{k!} \max_{0 \le i \le \floor{\frac{k}{2}}} \left(\frac{2i}{c}\right)^i
  \le 2^k \frac{(cn)^{\frac{k}{2}}}{k!} \left(\frac{k}{c}\right)^{\frac{k}{2}}
  \le (2e)^k \left(\frac{n}{k}\right)^{k/2} .
\]
If $c \ge k$, then for $0 \le i \le \floor{k/2}$, we have $(2i/c)^i \le 1$, and thus
\[
  2^k \cdot \frac{(cn)^{\frac{k}{2}}}{k!} \max_{0 \le i \le \floor{\frac{k}{2}}} \left(\frac{2i}{c}\right)^i
  \le (2e)^{k} \left(\frac{cn}{k^2}\right)^{\frac{k}{2}}
  = (2e)^{k} \left(\frac{s}{k}\right)^k . \qedhere
\]
\end{proof}

\begin{proof}[Proof of \Cref{claim:kravchuk-bound}]
	Observe that
	\[
	(1-x)^w (1+x)^{n-w} = \sum_{k=0}^n \calK(n,w,k) x^k,
	\]
	Our goal is to bound above its $k$-th coefficient.
	We will assume $n$ is even, otherwise we can apply the bound for $n+1$, which affects its magnitude by at most a factor of $2$.
	By the symmetry of $\calK(n,u,k)$ around $u=n/2$, for \Cref{claim:kravchuk-bound} we may take $w = n/2-s$, and for this choice of $w$ we have
	\begin{align} \label{eq:kravchuk1}
	\sum_{k=0}^n \calK(n,n/2-s,k) x^k
	= (1-x)^{\frac{n}{2}-s} (1+x)^{\frac{n}{2}+s}
	= (1-x^2)^{\frac{n}{2}} \left(\frac{1+x}{1-x}\right)^s .
	\end{align}
	We bound from above the coefficients of the power series expansion of the two terms on the right hand side.
  The $i$-th coefficient of $(1-x^2)^{n/2}$ is $\binom{n/2}{i/2}$ if $i$ is even and is $0$ otherwise (because the function is even).
  For $(\frac{1+x}{1-x})^s$, we will assume that $s$ is nonnegative as this only affects the sign of its coefficients.
  Since $\frac{1+x}{1-x} = 1 + \sum_{j \ge 1} 2x^j$ and $\frac{1}{1-2x} = 1 + \sum_{j \ge 1} (2x)^j$ for $\abs{x} < 1/2$ in their power series expansion, the $j$-th coefficient of $(\frac{1+x}{1-x})^s$ is at most the $j$-th coefficient of $(1-2x)^{-s}$, which is
		\[
		2^j \cdot \frac{s \cdot (s+1) \cdots (s+j-1)}{1 \cdot 2 \cdots j}
		= \binom{s+j-1}{j} 2^j .
		\]
	Therefore, the magnitude of the $k$-th coefficient of \Cref{eq:kravchuk1} is at most
	\begin{align} \label{eq:kravchuk2}
	\sum_{i=0}^{\floor{k/2}} \binom{n/2}{i} \binom{\abs{s}+k-2i-1}{k-2i} 2^{k-2i} .
	\end{align}
	Using $\binom{n}{k} \le n^k/k!\,$, we have
	\begin{align*}
    \cref{eq:kravchuk2}
    \le \sum_{i=0}^{\floor{k/2}} \frac{n^i}{i!} \frac{(\abs{s}+k-2i-1)^{k-2i}}{(k-2i)!} 2^{k-2i} .
	\end{align*}
  Note that $(s+t)^j \le 2^j \max\{s^j, t^j\}$ and so
	\[
    \frac{(\abs{s}+k-2i-1)^{k-2i}}{(k-2i)!}
    \le 2^{k-2i} \left( \frac{\max\{\abs{s}^{k-2i}, (k-2i-1)^{k-2i}\}}{(k-2i)!}\right)
    \le 2^{k-2i} \left( \frac{\max\{\abs{s}^{k-2i}, k^{k-2i}\}}{(k-2i)!}\right) .
	\]
	Therefore, \cref{eq:kravchuk2} is bounded by
	\[
    2^{2k} \left( \sum_{i=0}^{\floor{k/2}} \frac{n^i}{i!} \frac{\max\{\abs{s}^{k-2i}, k^{k-2i}\}}{(k-2i)!} \right) .
  \]
  The claim then follows from \Cref{claim:binomial}.
\end{proof}

\subsection{Proof of \Cref{thm:sym-lb}} \label{sec:sym-lb}
In this subsection, we give examples of symmetric polynomials $p$ with $L_{1,k}(p) = \Omega_k(1) \cdot d^k$ for $d = \Theta(\sqrt{n})$, matching our upper bound up to a constant factor when $k = O(1)$.
We restate our theorem for convenience.

\begin{theorem}[Restatement of \Cref{thm:sym-lb}] \label{thm:sym-lb-restate}
  For every $k$, there is a symmetric $\F_2$-polynomial $p(x_1, \ldots, x_n)$ of degree $d = \Theta(\sqrt{kn})$ such that $L_{1,k}(p) \ge (e^{-k}/2) \cdot \binom{n}{k}^{1/2} = \Omega_k(1) \cdot d^k$.
\end{theorem}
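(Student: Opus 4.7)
The plan is to produce an explicit symmetric polynomial whose level-$k$ Fourier mass is close to the a~priori maximum $\binom{n}{k}^{1/2}$, by approximating the symmetric Boolean function
\[
h_k(x) := \sgn\Bigl(\sum_{\abs{S}=k} x^S\Bigr)
\]
by a function that is periodic in the Hamming weight modulo a power of two of size $\Theta(\sqrt{nk})$. First, I would lower bound $L_{1,k}(h_k)$. Since $h_k$ is symmetric, all of its level-$k$ Fourier coefficients are equal to a common value $\alpha$, and since $h_k = \sgn(P)$ for $P := \sum_{\abs{S}=k} x^S$, we have
\[
\alpha \binom{n}{k} = \sum_{\abs{S}=k} \widehat{h_k}(S) = \E[h_k \cdot P] = \E\abs{P} \ge 0 ,
\]
so $L_{1,k}(h_k) = \binom{n}{k}\abs{\alpha} = \|P\|_1$. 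The polynomial $P$ has degree $k$ and, by Parseval, $\|P\|_2 = \binom{n}{k}^{1/2}$. A standard consequence of the Bonami--Beckner hypercontractive inequality then gives $\|P\|_1 \ge e^{-k}\|P\|_2$, yielding $L_{1,k}(h_k) \ge e^{-k}\binom{n}{k}^{1/2}$.

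Next, I would construct a symmetric polynomial $p$ that closely approximates $h_k$. Fix $m$ to be a power of two with $C\sqrt{nk} \le 2m \le 4C\sqrt{nk}$ for a sufficiently large absolute constant $C$. Define the symmetric Boolean function $p\colon \pmone^n \to \pmone$ by requiring that $p(x)$ depends only on $\abs{x} \bmod 2m$ and that $p$ agrees with $h_k$ on the window of Hamming weights $W := \{\lfloor n/2 \rfloor - m + 1, \ldots, \lfloor n/2 \rfloor + m\}$. Since $W$ is a set of $2m$ consecutive integers, it contains exactly one representative of each residue class modulo $2m$, so $p$ is well-defined. By the converse direction of \Cref{lemma:BGL} (applied with modulus $2m$, which by our choice is a power of two), the function $p$ is a symmetric $\F_2$-polynomial of degree $d < 2m = \Theta(\sqrt{nk})$.

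To transfer the lower bound from $h_k$ to $p$, observe that $p$ and $h_k$ disagree only when $\abs{\abs{x} - n/2} > m$, which by Hoeffding's inequality has probability at most $2 e^{-2m^2/n} \le 2 e^{-C^2 k/2}$; for $C$ sufficiently large this is at most, say, $e^{-10k}$. By Cauchy--Schwarz together with Parseval applied to $p - h_k$,
\[
\bigl| L_{1,k}(p) - L_{1,k}(h_k) \bigr|
\le \binom{n}{k}^{1/2} \|p - h_k\|_2
\le 2 \binom{n}{k}^{1/2} \sqrt{\Pr[p \neq h_k]}
\le e^{-4k} \binom{n}{k}^{1/2} ,
\]
which can be absorbed into the lower bound on $L_{1,k}(h_k)$ to give $L_{1,k}(p) \ge (e^{-k}/2) \binom{n}{k}^{1/2}$. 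The claim $\binom{n}{k}^{1/2} = \Omega_k(1) \cdot d^k$ for $d = \Theta(\sqrt{nk})$ is routine, since $\binom{n}{k}^{1/2} \ge (n/k)^{k/2}$.

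The main delicacy is in the choice of $2m$: it must be simultaneously a power of two (so that \Cref{lemma:BGL} applies), large enough that the Hoeffding tail $e^{-\Omega(m^2/n)}$ dominates the hypercontractivity loss $e^{-k}$, and small enough that $d = \Theta(\sqrt{nk})$. The constant-factor slack in $2m$ accommodates all three requirements simultaneously.
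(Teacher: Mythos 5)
Your proof is correct and follows essentially the same route as the paper's: lower-bound the level-$k$ mass via the $\sgn$ function $h_k$ and hypercontractivity, build $p$ by making $h_k$ periodic in Hamming weight modulo a power of two of size $\Theta(\sqrt{kn})$, invoke the converse of the \cite{BGL} lemma for the degree bound, and use a Chernoff/Hoeffding tail plus Cauchy--Schwarz to control the approximation error. The only (cosmetic) difference is in the transfer step: you bound $\lvert L_{1,k}(p) - L_{1,k}(h_k)\rvert$ directly via Parseval applied to $p-h_k$, whereas the paper instead lower-bounds $\lvert \sum_{\lvert S\rvert = k}\hat f(S)\rvert = \lvert \E[h_k Q] + \E[(f-h_k)Q]\rvert$ and bounds the two terms separately; both are straightforward variants of the same argument.
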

\begin{proof}
  Let $m := C \sqrt{kn}$ for some constant $4 \le C \le 8$ such that $m$ is a power of two.
  Consider the function $h_k\colon\zo^n \to \pmone$ defined by
  \[
    h_k(x) = \sgn\Bigg(\sum_{\abs{S}=k} (-1)^{\sum_{i\in S} x_i} \Bigg) .
  \]
  Let $\widetilde{h_k}\colon \{0,\ldots,n\} \to \pmone$ be the symmetrization of $h_k$ so that $\widetilde{h_k}(\abs{x}) := h_k(x)$.
  Now consider the periodic function $\widetilde{p}\colon \{0, \ldots, n\} \to \pmone$ that on input $z \in \{0, \ldots, n\}$ such that
  \[
    z + 2tm \in [n/2 - m, n/2 + m) \text{ for some $t \in \Z$},
  \]
  evaluates to $\widetilde{p}(z) := \widetilde h_k(z+2tm)$.
  We define the $\F_2$-polynomial $p$ by $p(x_1, \ldots, x_n) := \tilde{p}(\abs{x})$.
  Since $p(x)$ depends only on $\abs{x} \bmod 2m$, by \Cref{lemma:BGL}, it can be computed by a symmetric $\F_2$-polynomial of degree  $d < 2m$.
  Moreover, $p(x)$ agrees with $h_k(x)$ when $\abs{x} \in [n/2-m, n/2+m)$.
	Let $f = (-1)^p$.
  We have (note that from here on we switch from $x\in \zo^n$ to $x \in \pmone^n$)
	\begin{align*}
    L_{1,k}(f)
     \ge \abs[\Bigg]{\sum_{\abs{S}=k} \hf(S)}
    &= \abs[\Bigg]{ \sum_{\abs{S}=k} \E_x \Bigl[ f(x) x^S \Bigr]} \\
    &= \abs[\Bigg]{ \sum_{\abs{S}=k} \E_x\Bigl[h_k(x) x^S\Bigr] + \E_x\Bigl[\bigl(f(x) - h_k(x) \bigr) x^S \Bigr] } \\
    &= \abs[\Bigg]{ \E_x\Biggl[h_k(x) \sum_{\abs{S}=k} x^S\Biggr] + \E_x\Biggl[\bigl(f(x) - h_k(x) \bigr) \sum_{\abs{S}=k}  x^S \Biggr] } \\
    &\ge \left| \abs[\Bigg]{ \E_x\Biggl[h_k(x) \sum_{\abs{S}=k} x^S\Biggr] } - \abs[\Bigg]{ \E_x\Biggl[\bigl(f(x) - h_k(x) \bigr) \sum_{\abs{S}=k}  x^S \Biggr] } \right| .
	\end{align*}
  We proceed to lower bound the first term and upper bound the second term.
  For the first term we use the same lower bound of $e^{-k}\binom{n}{k}^{1/2}$ in the proof of \cite[Theorem~1]{Viola20};
  for completeness we include the argument here.
  One can verify that $\E_x[(\sum_{\abs{S}=k}  x^S)^2] = \binom{n}{k}$.
  Applying \cite[Theorem~9.22]{ODonnell:book}, which states that $\E[\abs{h(x)}] \ge e^{-k} \E[h(x)^2]^{1/2}$ for any $h\colon\pmone^n \to \R$ of degree $k$, we get
  \begin{align*}
    e^{-k} \binom{n}{k}^{1/2}
    = e^{-k} \E_x\Biggl[ \Biggl(\sum_{\abs{S}=k}  x^S \Biggr)^2 \Biggr]^{1/2}
    \le \E_x\Biggl[ \abs[\Bigg]{ \sum_{\abs{S}=k}  x^S } \Biggr]
    \le \E_x\Biggl[ \Biggl(\sum_{\abs{S}=k}  x^S \Biggr)^2 \Biggr]^{1/2}
    = \binom{n}{k}^{1/2} .
  \end{align*}
  So for the first term we have
  \[
    \E_x \Biggl[h_k(x) \sum_{\abs{S}=k} x^S\Biggr]
    = \E_x \left[\abs[\Bigg]{ \sum_{\abs{S}=k}  x^S } \right]
    \ge e^{-k} \binom{n}{k}^{1/2} .
  \]
  We now bound above the second term.
  As $\abs{f(x) - h_k(x)} \le 2$ for every $x$, by the Cauchy--Schwarz inequality, we get
  \begin{align*}
    \left| \E_x\Biggl[\bigl(f(x) - h_k(x) \bigr) \sum_{\abs{S}=k}  x^S \Biggr] \right|
    &\le 2 \cdot \Pr_x\biggl[ 2\abs[\Big]{\sum_i x_i} \ge m \biggr] \E_x\left[ \Biggl( \sum_{\abs{S}=k} x^S \Biggr)^2 \right]^{1/2} \\
    &= 2 \cdot \Pr_x\biggl[ 2\abs[\Big]{\sum_i x_i} \ge m \biggr] \binom{n}{k}^{1/2} \\
    &\le (e^{-k}/2) \cdot \binom{n}{k}^{1/2} ,
\end{align*}
  where the last inequality follows from the Chernoff bound recalling our choice of $m = \Theta(\sqrt{kn})$.
  Therefore $L_{1,k}(f) \ge (e^{-k}/2) \cdot \binom{n}{k}^{1/2}$, proving the theorem.
\end{proof}


\section{$L_{1,k}$ bounds for read-$\Delta$ polynomials} \label{sec:read-few}

In this section, we give our $L_{1,k}$ bounds on read-few polynomials, which are restated below for convenience.

\begin{theorem}[Restatement of \Cref{thm:read-few-poly}] \label{thm:read-few}
  Let $p(x_1, \ldots, x_n)$ be any read-$\Delta$ degree-$d$ polynomial.
  For every $1 \le k \le n$,
  \[
    L_{1,k}(p) \le \Pr[p=1] \cdot O(k)^k \cdot (\Delta d)^{10k} .
  \]
\end{theorem}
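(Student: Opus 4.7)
The plan is to follow the high-level strategy sketched in Section 2.2. Write $f := (-1)^p$ and observe that for every $S$,
\[
\hat{f}(S) = \mathbb{E}_x\bigl[(-1)^{p(x) + \sum_{i \in S} x_i}\bigr]
\]
equals the bias of the shifted $\mathbb{F}_2$-polynomial $p_S := p + \sum_{i \in S} x_i$. For each fixed $S$ of size $k$, the aim is to decompose $p_S$ as a sum of polynomials over pairwise variable-disjoint subsets, so that the bias of $p_S$ factors as a product of biases, each of which is small; summing $|\hat{f}(S)|$ over size-$k$ sets $S$ will then yield the theorem.

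To this end I would assign to each variable $x_i$ the parameter $t_i$, the smallest degree of a monomial of $p$ containing $x_i$, and collect
\[
p_i := \sum_{M \ni x_i} M,
\]
where the sum ranges over monomials $M$ of $p$. Then $p_i$ is a read-$\Delta$ polynomial of degree at most $d$, each of whose monomials has degree at least $t_i$ and contains $x_i$. A uniformly random setting of the variables other than $x_i$ kills all monomials of $p_i$ simultaneously with probability at least $1 - O(\Delta \cdot 2^{-t_i})$; on that event $p_i + x_i$ reduces to $x_i$, which has zero bias in $x_i$. Consequently the absolute bias of $p_i + x_i$ over its variable set is at most $O(\Delta \cdot 2^{-t_i})$, which is small once $t_i$ is moderately large.

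The main combinatorial task, which I expect to be the principal obstacle, is to carve $p_S$ into variable-disjoint polynomials that together account for every index of $S$. I would introduce the conflict graph $G$ on $[n]$ whose edges join pairs of indices that co-occur in some monomial of $p$; by the read-$\Delta$ and degree bounds, every vertex of $G$ has degree at most $\Delta d$. A greedy selection inside $S$ then produces an independent set $I \subseteq S$ in $G$ of size at least $k/(\Delta d + 1)$ for which the polynomials $\{p_i : i \in I\}$ are pairwise variable-disjoint. Each remaining $j \in S \setminus I$ is adjacent in $G$ to some $\sigma(j) \in I$, so $x_j$ can be folded into the component containing $p_{\sigma(j)}$ as an extra linear term without breaking disjointness; the leftover portion of $p_S$ consisting of monomials of $p$ that avoid all selected components contributes at most $1$ in absolute bias. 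Multiplying the component biases then gives a bound roughly of the form
\[
|\hat{f}(S)| \le \prod_{i \in I} (\Delta d)^{O(1)} \cdot 2^{-\Omega(t_i)}.
\]

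Finally I would sum $|\hat{f}(S)|$ over $S$ with $|S|=k$. The key counting input is that the number of variables with $t_i \ge t$ is at most $\Delta$ times the number of monomials of $p$ of degree at least $t$, and $t_i \le d$ always. Bucketing the variables by their $t_i$ value and applying the per-$S$ bound above, the sum over $S$ reorganizes into the desired $O(k)^k \cdot (\Delta d)^{O(k)}$. The $\Pr[p=1]$ prefactor is extracted from the identity
\[
\hat{f}(S) = -2\,\Pr[p=1] \cdot \mathbb{E}\bigl[(-1)^{\sum_{i \in S} x_i} \,\big|\, p(x) = 1\bigr],
\]
valid for every $S$ with $|S| \ge 1$, combined with the standard reduction to the case $\Pr[p=1] \le 1/2$. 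The hard part will be simultaneously controlling (i) the polynomial-in-$\Delta d$ blowup from each absorbed stray linear term $x_j$, (ii) the combinatorial loss in the greedy independent-set step, and (iii) the cost of summing over $S$ when many indices share the same host $\sigma(j)$, without letting any of these inflate the final exponent beyond $O(k)$; the exponent $8k$ in the statement suggests that nontrivial slack in this analysis is expected.
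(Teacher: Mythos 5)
Your decomposition strategy matches the high-level plan in the paper's Section 2.2 and the first steps of the actual argument: write $\hat f(S)$ as the bias of $p_S = p + \sum_{i\in S} x_i$, bucket variables by the minimum degree $t_i$ of monomials containing $x_i$, use the conflict graph $G_p$ to select a well-separated subset of $S$, and observe that $p_i + x_i$ has bias $O(\Delta 2^{-t_i})$. There are small technical discrepancies (the paper needs pairwise distance at least $4$ in $G_p$, not merely an independent set, so that the balls $N_{\le 1}(i_j)$ sit in pairwise-disjoint monomials; otherwise the component biases do not factor), but these are not the central issue.

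The genuine gap is in the sentence where you bound ``the leftover portion of $p_S$ \ldots\ at most $1$ in absolute bias.'' That bound is fatally lossy, and the sum over $S$ does not close without a much stronger statement about the remainder. Concretely, with only the per-component factor $\prod_{i\in I}(\Delta d)^{O(1)} 2^{-\Omega(t_i)}$ in hand, the contribution from all $S$ whose selected indices fall in the bucket $V_t(p)$ is on the order of $\binom{|V_t(p)|}{\ell}\,(\Delta d)^{O(\ell)}\,2^{-\Omega(t\ell)}$, and $|V_t(p)|\,2^{-\Omega(t)}$ can be arbitrarily large compared to $\poly(\Delta d)$ (e.g.\ a read-$2$ polynomial that is a sum of $n$ degree-$2$ monomials has $|V_2(p)| = n$). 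The paper's proof-idea paragraph flags exactly this: the naive product of component biases ``is too large to sum over $\binom{n}{k}$ coefficients.'' What is missing from your plan is the paper's Lemma~\ref{lemma:disjoint-decomp}: the remainder $r'$ (the monomials untouched by the selected components) has bias at most $\exp\bigl(-2^{-t}|V_t(r')|/(\Delta d)^4\bigr)$, and this factor is precisely what cancels the $|V_{t_j}(p)|^{\ell}$-type overcounting when you sum over $S$ (via the elementary fact that $x\mapsto x e^{-x/c}$ is $O(c)$). Until you supply an analogue of that remainder bound and thread $|V_t(\cdot)|$ through both the per-$S$ estimate and the counting, the final summation step in your outline does not produce $O(k)^k(\Delta d)^{O(k)}$. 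The acceptance-probability prefactor and the rest of your outline are fine and essentially coincide with the paper's treatment.
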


By observing that any degree-$\Omega(\log n)$ monomial vanishes under a random restriction with high probability, we can use \Cref{thm:read-few} to obtain an upper bound for read-$\Delta$ polynomials that is independent of $d$:

\begin{corollary}  [Restatement of \Cref{cor:L1k-read-few-poly}] \label{cor:L1k-read-few}
  Let $p(x_1, \ldots, x_n)$ be a read-$\Delta$ polynomial.
  For any $1 \le k \le n$,
  \[
    L_{1,k}(p)
    \le O(k)^{11k} \cdot (\Delta \log n)^{10k}  .
  \]
\end{corollary}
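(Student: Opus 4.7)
The plan is to reduce the $d$-independent claim to Theorem~\ref{thm:read-few} via a random restriction that kills all long monomials. Let $\rho$ denote a $\tfrac{1}{2}$-random restriction: each variable is independently kept alive with probability $1/2$ and otherwise assigned a uniformly random bit. Writing $f := (-1)^p$, the standard restriction identity
\[
  \widehat{f}(S) \;=\; 2^{|S|} \cdot \E_\rho\bigl[\widehat{f_\rho}(S)\,\mathbf{1}[S \subseteq \mathrm{alive}(\rho)]\bigr]
\]
(which follows from writing $\widehat{f_\rho}(S) = \sum_T \widehat{f}(S\cup T)\chi_T(\rho)$ and averaging over the random fixed values) yields, after taking absolute values and summing over $|S|=k$,
\[
  L_{1,k}(p) \;\le\; 2^k \cdot \E_\rho\bigl[L_{1,k}(p_\rho)\bigr].
\]

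Next I would show that $p_\rho$ has small degree with overwhelming probability. A single $\F_2$-monomial $\prod_{i \in T} x_i$ of $p$ with $|T|=t$ contributes a nonzero term to $p_\rho$ only when every fixed variable in $T$ is set to $1$; a short calculation (conditioning on the random alive set and applying the binomial theorem) shows this probability equals $((1+q)/2)^t$, which for $q=1/2$ is $(3/4)^t$. A read-$\Delta$ polynomial has at most $n\Delta$ monomials, so a union bound over the monomials of degree exceeding $s := C k\log(n\Delta)$ gives
\[
  \Pr_\rho[\deg(p_\rho) > s] \;\le\; n\Delta \cdot (3/4)^s \;\le\; (n\Delta)^{-\Omega(k)}
\]
for a sufficiently large absolute constant $C$.

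On this high-probability event, $p_\rho$ is still read-$\Delta$ (restrictions can only decrease read counts, since each monomial of $p_\rho$ arises from one of $p$) and has degree at most $s$, so Theorem~\ref{thm:read-few} applied to $p_\rho$ gives $L_{1,k}(p_\rho) \le O(k)^k (\Delta s)^{8k}$. On the complementary bad event, the trivial Cauchy--Schwarz/Parseval estimate $L_{1,k}(p_\rho) \le \binom{n}{k}^{1/2} \le n^{k/2}$ is entirely absorbed by the exponentially small failure probability. Substituting $s = Ck\log(n\Delta)$ into $2^k \cdot O(k)^k (\Delta s)^{8k}$ and simplifying produces the claimed bound $O(k)^{9k} (\Delta \log n)^{8k}$, using $\log(n\Delta) = O(\log n)$ in the standard regime $\Delta \le \mathrm{poly}(n)$, with $\log\Delta$ otherwise absorbed into the leading $\Delta^{8k}$ factor.

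The only nontrivial ingredient is the tail bound on $\deg(p_\rho)$; the read-$\Delta$ hypothesis is used precisely to cap the number of monomials at $n\Delta$ so that a plain union bound suffices, and everything else is mechanical algebra with the parameters.
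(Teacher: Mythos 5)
Your argument is correct and mirrors the paper's proof of \Cref{cor:L1k-read-few}: both apply a $\tfrac12$-random restriction, use the identity $\hf(S)=\rho^{-|S|}\E_R[\widehat{f_R}(S)]$ to pass to $L_{1,k}(p) \le 2^k\E_R[L_{1,k}(p_R)]$, bound the probability that $p_R$ has degree above a threshold $\Theta(k\log n)$ by a union bound over the (at most $n\Delta$) high-degree monomials, and then invoke \Cref{thm:read-few} on the good event and the trivial $\binom{n}{k}^{1/2}\le n^{k/2}$ bound on the bad event. The small differences — your $(3/4)^t$ survival estimate versus the paper's $\rho^{d_{\max}}$, and your threshold $Ck\log(n\Delta)$ versus the paper's $2k\log n$ — only shift constants that are already absorbed into the $O(k)^{9k}$ factor, and your handling of the $\log(n\Delta)$ vs.\ $\log n$ issue via the trivial $n^{k/2}$ bound when $\Delta$ is polynomially large is sound.
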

\begin{proof}
  Let $R$ be a $\rho$-random restriction that independently for each $x_i$ with probability $1-\rho$ sets  $x_i$ to a uniform random bit and keeps $x_i$ alive with the remaining $\rho$ probability.
  For any $f\colon \pmone^n \to \pmone$, we have $\E_R[\widehat{f_R}(S)] = \rho^{\abs{S}} \hf(S)$.
  Thus,
  \[
    L_{1,k}(f)
    = \sum_{\abs{S}=k} \abs{\hf(S)}
    = \sum_{\abs{S}=k} \abs{\rho^{-k} \E_R[\widehat{f_R}(S)] }
    \le \rho^{-k} \E_R[L_{1,k}(f_R)] .
  \]
  Let $p(x_1, \ldots, x_n)$ be any read-$\Delta$ polynomial.
  Note that $L_{1,k}(f) \le n^{k/2}$ for any $f$.
  Hence, by \Cref{thm:read-few}, for any $d_{\max}$ and any $\rho$, we have
  \[
    L_{1,k}(p)
    \le \rho^{-k} \cdot O(k)^k (\Delta d_{\max})^{10k}  + \rho^{-k} \cdot \Pr_R[\deg(p_R) \ge d_{\max}] \cdot n^{k/2} .
  \]
  We now upper bound $\Pr_R[\deg(p_R) \ge d_{\max}]$.
  Each monomial with degree greater than $d_{\max}$ survives $R$ with probability at most $\rho^{d_{\max}}$, and there are at most $(\Delta n)/d_{\max}$ such monomials in $p$.
  It follows by a union bound that $\Pr_R[\deg(p_R) \ge d_{\max}] \le \rho^{d_{\max}}(\Delta n)/d_{\max}$.
  So setting $\rho = 1/2$ and $d_{\max} = 2k\log n$, we have
  \begin{align*}
    L_{1,k}(p)
    &\le O(k)^k (\Delta \cdot d_{\max})^{10k}  + \frac{\rho^{d_{\max}} \Delta \cdot n^{1+k/2}}{\rho^k d_{\max}} \\
    &\le O(k)^{11k} \cdot (\Delta \log n)^{10k} . \qedhere
  \end{align*}

\end{proof}

\paragraph{Proof idea.}
We first observe that for $f = (-1)^p$, the Fourier coefficient $\hf(S)$ is simply the bias of the $\F_2$-polynomial $p_S(x) := p(x) + \sum_{i \in S} x_i$.
Assuming that $p_S$ depends on all $n$ variables, by a simple greedy argument we can collect $n/\poly(\Delta,d)$ polynomials in $p_S$ so that each of them depends on disjoint variables, and it is not hard to show that the product of the biases of these polynomials upper bounds the bias of $p_S$. 
From this it is easy to see that any read-$\Delta$ degree-$d$ polynomial has bias $\exp(2^{-d}n/\poly(\Delta,d))$.
However, this quantity is too large to sum over $\binom{n}{k}$ coefficients.

Our next idea (\Cref{lemma:read-few-key}) is to give a more refined decomposition of the polynomial $p$ by inspecting the variables $x_i: i \in S$ more closely.
Suppose the variables $x_i : i \in S$ are far apart in their dependency graph (see the definition of $G_p$ below), as must indeed be the case for most of the ${n \choose k}$ size-$k$ sets $S$.
Then we can collect all the monomials containing each $x_i$ to form a polynomial $p_i$, and these $p_i$'s will depend on disjoint variables.
Moreover, if every monomial in $p_i$ has high degree (see the definition of $V_t(p)$ below), then $p_i = 0$ with high probability and therefore $p_i + x_i$ is almost unbiased.
Therefore, we can first collect these $p_i$ and $x_i$ from $p_S$; then, for the remaining $m \ge \abs{S} \cdot \poly(\Delta,d)$ monomials in $p_S$, as before we collect $m/\poly(\Delta,d)$ polynomials $r_i$ so that they depend on disjoint variables, but this time we collect these monomials using the variables in $V_t(p)$, and give an upper bound in terms of the size $\abs{V_t(p)}$.
Multiplying the biases of the $p_i + x_i$'s and the bias of $r$ gives our refined upper bound on $\hf(S)$ in \Cref{lemma:read-few-key}.

\medskip

We now proceed to the actual proof.
We first define some notions that will be used throughout our arguments.
For a read-$\Delta$ degree-$d$ polynomial $p$, we define $V_t(p): t \in [d]$ and $G_p$ as follows.

For every $t \in [d]$, define 
\[
  V_t(p) := \{i \in [n]: \text{the minimum degree of the monomials in $p$ containing $x_i$ is $t$} \} .
\]
Note that the sets $V_1(p),\dots,V_d(p)$ form a partition of the input variables $p$ depends on.

Define the undirected graph $G_p$ on $[n]$, where $i,j \in [n]$ are adjacent if $x_i$ and $x_j$ both appear in the same monomial in $p$.
Note that $G_p$ has degree at most $\Delta d$.
For $S \subseteq [n]$, we use $N_{=d}(S)$ to denote the indices that are at distance exactly $i$ to $S$ in $G_p$, and use $N_{\le d}(S)$ to denote $\bigcup_{j=0}^d N_{=j}(S)$.

We first state our key lemma, which gives a refined bound on each $\hf(S)$ stronger than the naive bound sketched in the first paragraph of the ``Proof Idea'' above, and use it to prove \Cref{thm:read-few}.
We defer its proof to the next section.

\begin{lemma} [Main lemma for read-$\Delta$ polynomials] \label{lemma:read-few-key}
  Let $p(x_1, \ldots, x_n)$ be a read-$\Delta$ degree-$d$ polynomial.
  Let $S \subseteq [n], \abs{S} \ge \ell$ be a subset containing some $\ell$ indices $i_1, \ldots, i_\ell \in S$ whose pairwise distances in $G_p$ are at least $6$, and let $t_1,\dots,t_\ell \in [d]$ be such that each $i_j \in V_{t_j}(p)$.
  Let $f = (-1)^p$.
  Then 
  \[
    \abs{\hf(S)}
    \le O(1)^{\abs{S}} \cdot \Delta^\ell \prod_{j \in [\ell]} \left( 2^{-t_j} \exp\left(-\frac{2^{-{t_j}} \abs{V_{t_j}(p)}}{\ell \cdot (\Delta d)^4} \right) \right) .
  \]
\end{lemma}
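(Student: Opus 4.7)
Let $f = (-1)^p$, so $\hf(S) = \E[(-1)^{p_S}]$ with $p_S := p + \sum_{i \in S} x_i$. The plan is to decompose $p_S$ into sub-polynomials whose variable supports are pairwise disjoint, integrate out a carefully chosen set of variables, and apply the triangle inequality so that the resulting product of individual biases yields the claimed bound.

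First, exploit the distance-$\ge 4$ hypothesis. For each $j$, collect the monomials of $p$ containing $x_{i_j}$ into $p_{i_j} = x_{i_j} q_{i_j}$, where $q_{i_j}$ has at most $\Delta$ monomials (since $p$ is read-$\Delta$), each of degree $\ge t_j - 1$ (since $i_j \in V_{t_j}(p)$), and depends only on the $1$-neighborhood $N_p(i_j)$. Because $d(i_j, i_{j'}) \ge 4$ in $G_p$, these neighborhoods are pairwise disjoint and no monomial of $p$ contains two $x_{i_j}$'s; writing $y$ for the remaining variables, we have $p_S(x) = q(y) + \sum_j x_{i_j}(1 + q_{i_j}(y))$, which is linear in each $x_{i_j}$. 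Next, for each $j$, I greedily select $W_j \subseteq V_{t_j}(p)$ such that every pair of distinct vertices in $W := \bigcup_j W_j$, and every pair involving some $i_{j'}$, is at distance $\ge 3$ in $G_p$. Since $G_p$ has max degree $\le \Delta d$, each selection blocks only $O((\Delta d)^2)$ vertices, so a standard greedy argument yields $|W_j| \ge \Omega(|V_{t_j}(p)|/(\ell(\Delta d)^2))$; when this is smaller than $1$ the exponential factor in the conclusion is $\Theta(1)$ and there is nothing to prove.

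For each $v \in W_j$, the monomials of $p$ containing $x_v$ form $p_v = x_v q_v$, where $q_v$ depends only on $N_p(v)$, has $\le \Delta$ monomials, and has min degree $t_j - 1$. By the distance conditions all the $N_p(i_j)$ and $N_p(v)$ are pairwise disjoint, and none of the $x_{i_j}, x_v$ for $v \in W$ appear in any other coefficient polynomial, so integration over the $x_{i_j}$'s and $x_v$'s factors cleanly. After the triangle inequality and independence of the resulting indicators (each depending only on its own disjoint $N_p(\cdot)$), we obtain
\[
  |\hf(S)| \le \prod_j \Pr[q_{i_j} = 1] \cdot \prod_{v \in W} \Pr[\tilde q_v = 0],
\]
where $\tilde q_v = q_v + \mathbf{1}[v \in S]$ captures the contribution of the $+x_v$ term in $p_S$ when $v \in S$. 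The union bound gives $\Pr[q_{i_j} = 1] \le \Pr[q_{i_j} \ne 0] \le 2\Delta \cdot 2^{-t_j}$, contributing the $\Delta^\ell \prod_j 2^{-t_j}$ prefactor. For $\Pr[\tilde q_v = 0]$, I use that $q_v$ has a monomial of degree $t_j - 1$: conditioning on that monomial being $1$ (probability $2^{1-t_j}$) and controlling cancellations from the other $\le \Delta$ monomials gives $\Pr[q_v = 1] \ge \Omega(2^{-t_j}/(\Delta d)^{O(1)})$, so $\Pr[\tilde q_v = 0] \le 1 - \Omega(2^{-t_j}/(\Delta d)^{O(1)})$; the $v \in S$ case is handled by the same union bound. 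Multiplying over $v \in W_j$ turns this into $\exp\bigl(-\Omega(2^{-t_j} |W_j|/(\Delta d)^{O(1)})\bigr)$, and plugging in the lower bound on $|W_j|$ produces the target $\exp\bigl(-\Omega(2^{-t_j}|V_{t_j}(p)|/(\ell(\Delta d)^4))\bigr)$ after absorbing universal constants into the $O(1)^{|S|}$ factor.

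The main obstacle is obtaining a good enough lower bound on $\Pr[q_v = 1]$. The naive Schwartz--Zippel bound gives only $\Pr[q_v = 1] \ge 2^{-\deg q_v}$, which is far too weak since $\deg q_v$ can be as large as $d-1$ whereas we need roughly $2^{-t_j}$. One must instead exploit the minimum-degree monomial of $q_v$ explicitly and account carefully for how the at-most-$\Delta$ other monomials could cancel it under a conditioning on that monomial being $1$. Ensuring that all the polynomial-in-$\Delta,d$ losses from this analysis and from the greedy extraction get absorbed into the $(\Delta d)^4$ slack in the denominator of the exponent, rather than into the base $2^{-t_j}$ or the $\Delta^\ell$ prefactor, is the main technical point of the proof.
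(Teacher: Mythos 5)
Your proposal is correct in substance and reorganizes the paper's argument into a single-stage decomposition, so it is a genuine variant rather than a restatement. The paper proceeds in two stages: it first bounds $\bigl|\E_{x_{i_j},x_{T_j}}[(-1)^{s_j}]\bigr| \le \Delta 2^{-(t_j-1)}$ for each $j$ (where $s_j$ absorbs \emph{all} monomials meeting $\{i_j\}\cup T_j$), and then separately invokes Lemma~\ref{lemma:disjoint-decomp} on the residual polynomial $r'$; that lemma runs a greedy over degree-$t$ \emph{monomials} and bounds the bias per selection by $1 - 2^{-t}/\Delta$ via a two-step conditioning (first on $u_i=0$, then on the surviving monomial $x^{S_i}$). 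You instead integrate out only the single variables $x_{i_j}$ together with a greedy-chosen set $W$ of additional vertices from the $V_{t_j}(p)$'s, which turns the Fourier coefficient, after the triangle inequality dropping the residual sign $(-1)^{q'}$, into a clean product of independent probabilities $\prod_j \Pr[q_{i_j}=1]\cdot\prod_{v\in W}\Pr[\tilde q_v = 0]$. The step you identify as the main obstacle, a lower bound $\Pr[q_v = 1] \ge \Omega(2^{-t_j}/\Delta)$, does go through along exactly the lines you sketch: condition on a minimum-degree monomial $x^M$ of $q_v$ being $1$ (probability $2^{-(t_j-1)}$); since $x^M$ is the unique monomial of $q_v$ supported inside $M$, the restricted polynomial $q_v'$ has constant term exactly $1$, so applying Lemma~\ref{lemma:helper} to $q_v'+1$ yields $\Pr[q_v = 1 \mid x^M = 1] \ge 1/\Delta$, giving $\Pr[q_v=1]\ge 2^{-(t_j-1)}/\Delta$ overall. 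One point that needs more care than you acknowledge is the simultaneous greedy extraction of the $W_j$'s: obtaining $|W_j| \ge \Omega(|V_{t_j}(p)|/(\ell(\Delta d)^2))$ for \emph{every} $j$ at pairwise distance $\ge 3$ is not automatic, because picks in $V_{t_j}(p)$ block vertices in $V_{t_{j'}}(p)$ and the levels $t_j$ may coincide. The cleanest repair is to run the greedy only in the single level $t^\*$ minimizing $\exp(-2^{-t}|V_t(p)|/(\Delta d)^4)$ and then distribute the resulting one exponential factor over the $\ell$ indices by taking $\ell$-th roots, which is exactly how the paper concludes its proof. With that adjustment your bound matches the target, and what your route buys is a more transparent product-of-probabilities formulation in place of the paper's two-stage bias argument.
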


\begin{proof}[Proof of \Cref{thm:read-few}]
  Using a reduction given in the proof of \cite[Lemma 2.2]{ChakrabortyMMMPS20}, it suffices to prove the same bound without the acceptance probability factor, i.e.~to prove that for every $1 \le k\le n$,
  \[
    L_{1,k}(p) \le O(k)^k \cdot (\Delta d)^{10k} .
  \]
  As \cite{ChakrabortyMMMPS20} did not provide an explicit statement of the reduction, for completeness we provide a self-contained statement and proof in \Cref{lemma:xor-reduction-restate} in \Cref{sec:reduction}.

  For every subset $S \subseteq [n]$ of size $k$, there exists an $\ell \le k$ and $i_1, \ldots, i_\ell \in S$ such that their pairwise distances in $G_p$ are at least $6$, each $i_j \in V_{t_j}(p)$ for some $t_j \in [d]$, and each of the remaining $k - \ell$ indices in $S$ is within distance at most $5$ to some $i_j$.
  
  Fix any $i_1, \ldots, i_\ell$, and let us bound the number of subsets $S \subseteq [n]$ of size $k$ that can contain  $i_1, \ldots, i_\ell$. Because $\abs{N_{\le 5}(j)} \le \sum_{i=0}^5 (\Delta d)^i \leq 6(\Delta d)^5$ for every $j \in [n]$, the remaining $k-\ell$ indices of $S$ can appear in at most
  \begin{align*}
    \sum_{j_1 + \cdots + j_\ell = k-\ell} \prod_{b \in [\ell]} \binom{6(\Delta d)^5}{j_b}
    &= \binom{6 \ell (\Delta d)^5} {k-\ell} \\
    &\le (6(\Delta d)^5)^k \cdot e^{k-\ell} \left(\frac{\ell}{k-\ell}\right)^{k-\ell} \\
    &\le (e \Delta d)^{5k}
  \end{align*}
  different ways, where the equality uses the Vandermonde identity, the first inequality uses $\binom{n}{k} \le (en/k)^k$, and the last one uses $(\frac{\ell}{k-\ell})^{k-\ell} \le (1 + \frac{\ell}{k-\ell})^{k-\ell} \le e^\ell$ and $6e < e^5$.
  Therefore, by \Cref{lemma:read-few-key},  
\begin{align*}
    \sum_{S: \abs{S}=k} \abs{\hf(S)}
&\le \sum_{\ell=1}^k \sum_{t \subseteq [d]^\ell} \left[ \Biggl( \prod_{j \in [\ell]} \abs{V_{t_j}(p)} \Biggr) \cdot (e \Delta d)^{5k} \cdot O(1)^k 
\Delta^\ell \prod_{j' \in [\ell]} \left( 2^{-t_{j'}} \exp\left(-\frac{2^{-{t_{j'}}} \abs{V_{t_{j'}}(p)}}{\ell (\Delta d)^4} \right) \right) \right] \\
    &\le O(1)^k \cdot (\Delta d)^{5k} \sum_{\ell=1}^k \Delta^\ell \sum_{t \subseteq [d]^\ell} \prod_{j \in [\ell]} \left( 2^{-t_j} \abs{V_{t_j}(p)} \exp\left(-\frac{2^{-{t_j}} \abs{V_{t_j}(p)}}{\ell (\Delta d)^4} \right) \right) \\
    &\le O(1)^k \cdot (\Delta d)^{5k} \sum_{\ell=1}^k \Delta^\ell \cdot d^\ell \cdot (\ell (\Delta d)^4 )^\ell \\
    &\le O(k)^k \cdot (\Delta d)^{5k} \cdot (\Delta d)^{5k} \\
    &= O(k)^k \cdot (\Delta d)^{10k} ,
  \end{align*}
  where the third inequality is because the function $x \mapsto xe^{-x/c}$ is maximized when $x=c$.
  This completes the proof.
\end{proof}

\subsection{Proof of \Cref{lemma:read-few-key}}
  First we need a simple lemma that lower bounds the acceptance probability of a sparse polynomial. 

  \begin{lemma} \label{lemma:helper}
    Let $r(x_1, \ldots, x_n)$ be a polynomial with at most $\Delta$ monomials, each of which has degree at least $t$, and $r(0) = 0$.
    Then $\Pr[r(x) = 0] \geq \max\{\frac{1}{\Delta + 1}, 1-2^{-t} \Delta\}$.
  \end{lemma}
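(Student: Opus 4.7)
The plan is to prove each of the two lower bounds in the maximum separately, since the two bounds dominate in different parameter regimes.

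For the bound $1-2^{-t}\Delta$, I would apply a union bound. Let $A_i := \{x : m_i(x) = 1\}$ for each monomial $m_i$ of $r$. Since each monomial has degree at least $t$, the event $A_i$ holds with probability at most $2^{-t}$. Because $r(0)=0$, the polynomial $r$ has no constant term, so $r = m_1 + \cdots + m_\Delta$ over $\F_2$, and $r(x)$ must vanish whenever none of the $m_i$'s evaluates to $1$. Hence
\[
	\Pr[r(x) = 0] \;\geq\; \Pr\Bigl[\bigcap_{i} A_i^c\Bigr] \;\geq\; 1 - \sum_i \Pr[A_i] \;\geq\; 1 - \Delta \cdot 2^{-t}.
\]

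For the bound $1/(\Delta+1)$, I would argue by induction on the number of monomials $\Delta$. The base case $\Delta=0$ is trivial since $r(0)=0$ forces $r \equiv 0$. For the inductive step, pick a variable $x_i$ appearing in some monomial of $r$ and decompose $r = x_i \cdot q_1 + q_0$, where $q_0,q_1$ do not depend on $x_i$. If every monomial of $r$ contains $x_i$ (so $q_0 \equiv 0$), then $r$ vanishes whenever $x_i = 0$, giving $\Pr[r=0] \geq 1/2 \geq 1/(\Delta+1)$. Otherwise $q_0$ has $\Delta_0 \geq 1$ monomials and satisfies $q_0(0) = r(0) = 0$, so by induction $\Pr[q_0=0] \geq 1/(\Delta_0+1)$. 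I would then combine this with the identity $\Pr[r=0] = \tfrac{1}{2}\Pr[q_0 = 0] + \tfrac{1}{2}\Pr[q_0 + q_1 = 0]$ together with an inductive treatment of the polynomial $q_0 + q_1$ (which has at most $\Delta$ monomials and a constant term equal to $q_1(0)$).

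The main obstacle is the subcase $q_1(0) = 1$, which occurs precisely when $x_i$ appears as a standalone linear monomial of $r$: in that setting $q_0 + q_1$ picks up a constant term, and induction applied to $q_0 + q_1$ yields only an upper bound on $\Pr[q_0 + q_1 = 0]$ rather than the lower bound we need. I would resolve this by choosing the splitting variable $x_i$ carefully, preferentially selecting one that is not a standalone monomial whenever such a choice exists, and otherwise handling the residual structure (where every variable of $r$ appears as its own linear monomial, possibly together with higher-degree terms) via a separate averaging argument exploiting that some variable must lie in at least half of the remaining monomials. Taking the maximum of the two lower bounds then yields the stated lemma.
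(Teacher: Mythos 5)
Your union-bound argument for the $1-2^{-t}\Delta$ part is correct and is exactly what the paper does. For the $\frac{1}{\Delta+1}$ part, the paper simply cites Corollary~1 of Karpinski--Luby [KL93], whereas you attempt a first-principles induction, and that is where the gap lies.

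The ``main obstacle'' you identify is real, and the plan you sketch does not close it. Preferring a splitting variable $x_i$ that is not a standalone linear monomial handles all but the residual case, but the averaging claim you invoke there --- that some variable must lie in at least half of the remaining degree-$\ge 2$ monomials --- is false in general: if the degree-$\ge 2$ part of $r$ is $x_1x_2+x_3x_4+x_5x_6$, then each variable lies in only one of three monomials. So the residual case is not actually resolved, and this is precisely the regime ($t=1$, $\Delta\ge 2$) where $1-2^{-t}\Delta\le 0$ is vacuous and the $\frac{1}{\Delta+1}$ bound carries all the content of the lemma. A secondary issue: you say you induct on $\Delta$ alone, but $q_0+q_1$ can have as many as $\Delta$ monomials (no cancellation is guaranteed), so the induction as written need not terminate; you would need to also track the number of variables (e.g.\ induct on $n$, or on $n+\Delta$), using that $q_0+q_1$ involves one fewer variable. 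The cleanest fix is to cite [KL93] as the paper does; a self-contained proof would require a genuinely different treatment of the residual case.
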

  \begin{proof}
    The lower bound of $1/(\Delta+1)$ follows from \cite[Corollary 1]{KL93}.
    For the other bound, note that each monomial evaluates to $1$ with probability at most $2^{-t}$; so by a union bound $\Pr[r(x) = 1] \le 2^{-t} \Delta$.
  \end{proof}

  We use $G_p$ to partition $[n]$ as follows.
  For each $j \in [\ell]$, let $T_j := N_{=1}(i_j)$, the set of distance-1 neighbors of $i_j$, and $U_j := N_{=2}(i_j)$, the set of distance-2 neighbors of $i_j$.
  Let $R := [n] \setminus \bigcup_j  (\{i_j\} \cup T_j \cup U_j)$ be the remaining variables (note that these are the variables whose distance from every $i_j$ is at least 3).
  As the $i_j$'s are at distance at least $6$ apart, then for $j \neq j'$ each index in $\{i_j\} \cup T_j \cup U_j$ has distance at least $2$ to each index in  $\{i_{j'}\} \cup T_{j'} \cup U_{j'}$; so the variables in $\{i_j\} \cup T_j \cup U_j$ appear in disjoint monomials from the variables in $\{i_{j'}\} \cup T_{j'} \cup U_{j'}$.
  We now partition the monomials in $p$.
  For each $j \in [\ell]$, let $p_j, q_j, v_j$ be 3 polynomials such that $p_j$ is composed of all the monomials in $p$ that contain $x_{i_j}$, $q_j$ is composed of all monomials that contain some variable in $x_{T_j}$ but not $x_{i_j}$, and $w_j$ is composed of all monomials that contain some variable in $x_{U_j}$, but not $x_{i_j}$ nor any variable in $x_{T_j}$.
  Let $r$ be the polynomial composed of the remaining monomials in $p$.
  We can write $p$ as
  \[
    \sum_{j=1}^{\ell} \bigl( p_j(x_{i_j},x_{T_j}) + q_j(x_{T_j}, x_{U_j}) + w_j(x_{U_j}, x_R) \bigr) + r(x_R) .
  \]
  Now, consider the polynomial
  \begin{align}
    p_S(x_1, \ldots, x_n)
    &:= p(x_1, \ldots, x_n) + \sum_{i\in S} x_i \nonumber \\
    &\phantom{:}= \sum_{j=1}^\ell s_j(x_{i_j}, x_{T_j}, x_{U_j}) + r'(x_{U_1}, \ldots, x_{U_\ell}, x_R), \label{eq:tomato}
  \end{align}
  where
  \begin{align}
    s_j(x_{i_j}, x_{T_j}, x_{U_j})
    &:= p_j(x_{i_j},x_{T_j})  + q_j(x_{T_j}, x_{U_j}) + x_{i_j} + \sum_{k \in S \cap T_j} x_k \label{eq:pepper}\\
    r'(x_{U_1}, \ldots, x_{U_\ell}, x_R)
    &:= \sum_{j=1}^\ell \biggl( w_j(x_{U_j},x_R) + \sum_{k \in S \cap U_j} x_k \biggr) + r(x_R) + \sum_{k \in S \cap R} x_k .\nonumber
  \end{align}

  The following claim gives an upper bound on the bias of $s_j$; we defer its proof until later.
  (To interpret the claim it may be helpful to recall that $i_j \in V_{t_j}(p)$, and hence the minimum degree of the monomials in $p$ containing $x_{i_j}$ is $t_j$.)
  \begin{claim} \label{claim:xor-high-deg-monomials}
    For every $j \in [\ell]$ and each possible outcome of $x_{U_j}: j \in [\ell]$, we have $\abs{\E_{x_{i_j}, x_{T_j} }[ (-1)^{s_j(x_{i_j}, x_{T_j}, x_{U_j}})} ] \le \Delta 2^{-(t_j-1)}$.
  \end{claim}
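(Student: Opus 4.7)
The plan is to isolate the contribution of the variable $x_{i_j}$ and exploit the degree constraint coming from the definition of $V_{t_j}(p)$. Fix any outcome of $x_R$. The key observation is that every monomial of $p$ containing $x_{i_j}$ has degree at least $t_j$ (since $i_j \in V_{t_j}(p)$), and all such monomials are collected in $p_j$. Factoring out $x_{i_j}$, we can therefore write
\[
  p_j(x_{i_j}, x_{T_j}) = x_{i_j} \cdot p_j'(x_{T_j}),
\]
where $p_j'$ is a polynomial in $x_{T_j}$ alone having at most $\Delta$ monomials (because $x_{i_j}$ appears in at most $\Delta$ monomials of $p$), each of degree at least $t_j - 1$.

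Next, I would compute the expectation over $x_{i_j}$ first. With $x_R$ fixed, every term of $s_j$ in \cref{eq:pepper} other than $p_j$ and $x_{i_j}$ is independent of $x_{i_j}$. Setting $x_{i_j} = 0$ kills the $p_j$ and $x_{i_j}$ contributions, while setting $x_{i_j} = 1$ contributes $p_j'(x_{T_j}) + 1$. Averaging gives
\[
  \E_{x_{i_j}}\bigl[ (-1)^{s_j} \bigr]
  \;=\; \tfrac{1}{2} \, (-1)^{q_j(x_{T_j}, x_R) + \sum_{k \in S \cap T_j} x_k} \Bigl(1 - (-1)^{p_j'(x_{T_j})}\Bigr).
\]
The right-hand factor vanishes when $p_j'(x_{T_j}) = 0$ and has magnitude $2$ otherwise, so taking absolute values and averaging over $x_{T_j}$ yields
\[
  \bigl| \E_{x_{i_j}, x_{T_j}}[(-1)^{s_j}] \bigr| \;\le\; \Pr_{x_{T_j}}\!\bigl[p_j'(x_{T_j}) = 1\bigr].
\]

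Finally I would bound this probability via a union bound. If $t_j \ge 2$ then every monomial of $p_j'$ has degree at least $t_j - 1 \ge 1$, so each is $1$ with probability at most $2^{-(t_j-1)}$, and summing over the at most $\Delta$ monomials gives $\Pr[p_j'(x_{T_j}) = 1] \le \Delta \cdot 2^{-(t_j-1)}$. If $t_j = 1$ the target bound $\Delta \cdot 2^{-(t_j-1)} = \Delta$ is at least $1$ and the inequality holds trivially (this is the only case where $p_j'$ may have a constant term, namely when $x_{i_j}$ itself is a monomial of $p$). I do not anticipate any real obstacle; the only delicate point is recognising that we should factor using the \emph{minimum} degree $t_j$ associated with $x_{i_j}$, rather than the overall degree $d$, since it is the per-variable refinement $t_j$ that \Cref{lemma:read-few-key} needs in the product over $j \in [\ell]$.
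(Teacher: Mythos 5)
Your proposal is correct and follows essentially the same route as the paper: you factor $p_j = x_{i_j}\cdot p_j'(x_{T_j})$ (the paper writes this as $s_j = x_{i_j}(1+u_j(x_{T_j})) + v_j$, with your $p_j'$ playing the role of $u_j$), observe that the inner expectation over $x_{i_j}$ vanishes unless $p_j'(x_{T_j})=1$, and then bound $\Pr[p_j'=1]\le\Delta\,2^{-(t_j-1)}$ by a union bound over the at most $\Delta$ monomials of degree $\ge t_j-1$. The paper packages this last step as the second bound of \Cref{lemma:helper}, but the argument is the same union bound; your explicit treatment of the $t_j=1$ case also mirrors the paper's remark that the claim is trivial when $t_j<2$.
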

We have that  
  \begin{align}
    \abs{\hf(S)}
    &= \abs[\bigg]{ \E\Bigl[ (-1)^{p_S(x)} \Bigr] } \nonumber\\
    &= \abs[\Bigg]{ \E_{x_{U_j}:j\in [\ell]} \biggl[ \prod_{j \in [\ell]} \E_{x_{T_j}, x_{i_j}} \Bigl[(-1)^{s_j(x_{i_j}, x_{T_j}, x_{U_j})} \Bigr] \cdot \E_{x_R} \Bigl[(-1)^{r'(x_{U_1}, \ldots, x_{U_\ell}, x_R)} \Bigr] } \biggr] \nonumber \\
    &\le  \E_{x_{U_j}:j\in [\ell]} \Biggl[ \abs[\bigg]{ \prod_{j \in [\ell]} \E_{x_{T_j}, x_{i_j}} \Bigl[(-1)^{s_j(x_{i_j}, x_{T_j}, x_{U_j})} \Bigr] } \cdot \E_{x_R} \abs[\bigg]{ \Bigl[(-1)^{r'(x_{U_1}, \ldots, x_{U_\ell}, x_R)} \Bigr] } \Biggr] \label{eq:a}\\
     &\le (2\Delta)^\ell \cdot \biggl(\prod_{j \in [\ell]} 2^{-t_j}\biggr) \cdot \E_{x_{U_j}:j\in[\ell]} \Biggl[ \abs[\bigg]{\E_{x_R}  \Bigl[ (-1)^{r'(x_{U_1}, \ldots, x_{U_\ell}, x_R)} \Bigr] } \Biggr] , \label{eq:b}
  \end{align}
  where \Cref{eq:a} is by independence of the different $(x_{i_j}, x_{T_j})$'s and $x_R$, and \Cref{eq:b} is by \Cref{claim:xor-high-deg-monomials}.
  For every fixed assignment $u = (u_1, \ldots, u_\ell)$ to the variables in $x_{U_1}, \ldots, x_{U_\ell}$, let $r'_u(x_R)$ denote the restricted polynomial $r'(u, x_R)$.
  We now claim that for every $u$,
  \begin{equation} \label{eq:dill}
    \abs{V_t(r'_u)} \ge \abs{V_t(p)} - \ell(\Delta d)^3 - \abs{S} \quad\text{for every $t\in [d]$.}
  \end{equation}
  This is because a variable $x_j$ belongs to $V_t(p) \setminus V_t(r'_u)$ only if it belongs to or is adjacent to the at most $\ell (\Delta d)^2$ restricted variables $\{x_{i_j}, x_{T_j}, x_{U_j}: j \in [\ell]\}$ in $G_p$, or it is one of the $\{x_i: i \in S\}$.
  In \Cref{lemma:disjoint-decomp} below we prove that for every fixed assignment $u$ to $x_{U_j}: j\in [\ell]$, and any $t \in [d]$, we have that
  \begin{equation} \label{eq:gherkin}
  \abs[\bigg]{ \E_{x_R} \Bigl[(-1)^{r'(x_{U_1}, \ldots, x_{U_\ell}, x_R)} \Bigr] }
    \le \exp\left(- \frac{2^{-t}\abs{V_t(r'_u)}}{(\Delta d)^4} \right).
  \end{equation}
  Continuing from above, combining \Cref{eq:b,,eq:dill,eq:gherkin}, we have
  \begin{align*}
    \abs{\hf(S)}
    &\le (2\Delta)^\ell \Biggl(\prod_{j \in [\ell]} 2^{-t_j}\Biggr) \cdot \min \left\{\exp\left(-\frac{2^{-t} (\abs{V_t(p)} - \ell (\Delta d)^3 - \abs{S})}{(\Delta d)^4} \right) : t\in [d] \right\}  \\
    &\le O(1)^{\abs{S}} \cdot (2\Delta)^\ell \Biggl(\prod_{j \in [\ell]} 2^{-t_j} \Biggr) \cdot \min \left\{\exp\left(-\frac{2^{-t} \cdot \abs{V_t(p)}}{(\Delta d)^4} \right) : t\in [d] \right\}  \\
    &= O(1)^{\abs{S}} \cdot (2\Delta)^\ell \Biggl( \prod_{j \in [\ell]} 2^{-t_j} \min \left\{\exp\left(-\frac{2^{-t} \cdot \abs{V_t(p)}}{\ell \cdot (\Delta d)^4} \right) : t\in [d] \right\} \Biggr)  \\
    &= O(1)^{\abs{S}} \cdot (2\Delta)^\ell \prod_{j \in [\ell]} \left( 2^{-t_j} \exp\left(-\frac{2^{-{t_j}} \cdot \abs{V_{t_j}(p)}}{\ell \cdot (\Delta d)^4} \right) \right) ,
  \end{align*}
  where second inequality is because $\exp\left(\frac{\ell(\Delta d)^3 + \abs{S}}{(\Delta d)^4}\right) \le e^{\ell + \abs{S}} \le O(1)^{\abs{S}}$ because $\ell \le \abs{S}$.
   This proves the lemma. \qed

  We now prove \Cref{claim:xor-high-deg-monomials,lemma:disjoint-decomp}.

  \begin{proof} [Proof of \Cref{claim:xor-high-deg-monomials}]
    We may assume $t_j \ge 2$ as otherwise the conclusion is trivial.
    Since ${i_j} \in V_{t_j}(p)$, recalling \Cref{eq:tomato,eq:pepper}, every monomial in $p_j$ containing $x_{i_j}$ has degree at least $t_j$, and by collecting these monomials, we can write
    \begin{align*}
      s_j(x_{i_j}, x_{T_j}, x_{U_j})
      &= x_{i_j} + p_j(x_{i_j},x_{T_j}) + \sum_{k \in S \cap T_j} x_k + q_j(x_{T_j}, x_{U_j}) \\
      &= x_{i_j} (1 + u_j(x_{T_j})) + v_j(x_{T_j}, x_{U_j}) 
    \end{align*}
    for some polynomials $u_j$ and $v_j$, where every monomial in $u_j(x_{T_j})$ has degree at least $t_j-1 \geq 1$, and thus $u_j(0) = 0$.
    Now, if an outcome of $x_{T_j}$ is such that $u_j(x_{T_j}) = 0$, then the expectation of $(-1)^{s_j}$ is zero because $v_j$ does not depend on $x_{i_j}$ and $\E[(-1)^{x_{i_j}}] = 0$.
    Hence we have that for every outcome of $x_{U_j}$, 
    \[
      \abs[\bigg]{\E_{x_{i_j}, x_{T_j}} \Bigl[ (-1)^{s_j(x_{i_j}, x_{T_j}, x_{U_j})} \Bigr]}
      \leq \E_{x_{T_j}} \abs[\bigg]{\E_{x_{i_j}} \Bigl[ (-1)^{s_j(x_{i_j},x_{T_j}, x_{U_j})} \Bigr] }
      \le \Pr[u_j(x_{T_j}) = 1] 
      \le \Delta 2^{-(t_j-1)} ,
    \]
  where the final inequality is by \Cref{lemma:helper}.
  \end{proof}
  
  \begin{lemma} \label{lemma:disjoint-decomp}
    Let $q(x_1, \ldots, x_n)$ be a read-$\Delta$ degree-$d$ polynomial.
    For every $t \in [d]$ we have
    \[
      \abs[\Big]{ \E\bigl[(-1)^q\bigr] }
      \le \exp\left(- \frac{2^{-t} \cdot \abs{V_t(q)}}{(\Delta d)^4} \right) .
    \]
  \end{lemma}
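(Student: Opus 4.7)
The plan is to find $m = \Omega(|V_t(q)|/(\Delta d)^2)$ well-separated ``pivot'' variables $i_1,\ldots,i_m \in V_t(q)$ and then integrate $x_{i_1},\ldots,x_{i_m}$ out of $(-1)^q$, which will express the bias of $q$ as a product of $m$ independent probabilities $\Pr[u_k=0]$---one per pivot---where $u_k$ is the polynomial ``coefficient'' of $x_{i_k}$ in $q$ (i.e., the sum of $M/x_{i_k}$ over all monomials $M$ of $q$ containing $x_{i_k}$). I will show that each such $\Pr[u_k=0]$ is at most $1 - 2^{1-t}/\Delta$, which combined with the lower bound on $m$ yields the desired $\exp(-2^{-t}|V_t(q)|/(\Delta d)^4)$.

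First I carry out the selection. Every vertex of $G_q$ has degree at most $\Delta d$ (each of the $\le \Delta$ monomials containing a given variable has $\le d-1$ other variables). Greedily picking $i_1,\ldots,i_m \in V_t(q)$ at pairwise distance $\ge 3$ in $G_q$ removes at most $1 + \Delta d + (\Delta d)^2 \le 3(\Delta d)^2$ candidates per step, so $m \ge |V_t(q)|/(3(\Delta d)^2)$. Letting $y = (x_j)_{j \notin \{i_1,\ldots,i_m\}}$, the distance-$\ge 3$ condition ensures that no monomial of $q$ contains two pivots, so $q(x) = b(y) + \sum_{k=1}^m x_{i_k} u_k(y)$ for some $b$ and $u_k$, where $u_k$ depends only on $N(i_k)$ (its $G_q$-neighborhood) and the $N(i_k)$'s are pairwise disjoint. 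Integrating over each $x_{i_k}$ gives $\E_{x_{i_k}}[(-1)^{x_{i_k} u_k(y)}] = \frac{1}{2}(1 + (-1)^{u_k(y)})$, which equals $1$ if $u_k(y)=0$ and $0$ otherwise, so by independence of the $u_k$'s under uniform $y$,
\[
|\E[(-1)^q]| \le \Pr_y[\forall k: u_k(y) = 0] = \prod_{k=1}^m \Pr[u_k=0].
\]

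The hardest step will be proving $\Pr[u_k=1] \ge 2^{1-t}/\Delta$ via a pinning argument that exploits the \emph{exact} minimum-degree condition built into $V_t(q)$. The polynomial $u_k$ has at most $\Delta$ monomials of degree $\ge t-1$, and at least one of them---call it $M^*$, with variable set $S$ of size $t-1$---has degree exactly $t-1$. I condition on $y_j = 1$ for all $j \in S$ (an event of probability $2^{-(t-1)}$); under this conditioning $M^*$ evaluates to $1$ and $u_k - M^*$ reduces to a polynomial $R'$ on the remaining variables of $N(i_k)$ with at most $\Delta - 1$ monomials. Each reduced monomial has degree $\ge 1$: any other monomial $N$ of $u_k$ satisfies $|N| \ge t-1 = |S|$, so $N \subseteq S$ would force $N = S$, i.e.\ $N = M^*$, contradicting $N \ne M^*$. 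Since $R'$ then has no constant term, \Cref{lemma:helper} yields $\Pr[R'=0] \ge 1/\Delta$, hence $\Pr[u_k=1] \ge 2^{-(t-1)} \cdot (1/\Delta) = 2^{1-t}/\Delta$.

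Combining the pieces, $|\E[(-1)^q]| \le (1 - 2^{1-t}/\Delta)^m \le \exp(-|V_t(q)| \cdot 2^{1-t}/(3\Delta^3 d^2))$, which is at most the target $\exp(-2^{-t}|V_t(q)|/(\Delta d)^4)$ whenever $\Delta d^2 \ge 3/2$; the leftover case $\Delta = d = 1$ (a read-once linear polynomial) is trivial since such $q$ is either constant or perfectly balanced. The conceptual obstacle is the pinning step: examples like $u = M^* \cdot \prod_i(1 + x_i)$, in which an almost-constant polynomial with $\Delta$ monomials of minimum degree $t-1$ has $\Pr[u=1] = \Theta(2^{-(t-1)}/\Delta)$, show that the $1/\Delta$ loss in the key claim is unavoidable, and in particular a mere \emph{lower} bound on monomial degrees (as opposed to the exact minimum degree guaranteed by $V_t(q)$) would not suffice for the final exponent to come out right.
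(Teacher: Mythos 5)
Your proof is correct but takes a genuinely different decomposition than the paper's. The paper greedily selects degree-$t$ \emph{monomials} $x^{S_i}$ (each containing a $V_t(q)$-variable) whose supports are far apart in $G_q$, locally groups the monomials of $q$ around each $S_i \cup T_i$ (with $T_i = N_{=1}(S_i)$), and bounds the conditional bias $\E_{x_{T_i}}\bigl|\E_{x_{S_i}}[\cdot]\bigr|$ by $1-2^{-t}/\Delta$, leveraging that the inner polynomial has a distinguished degree-$t$ monomial $x^{S_i}$ with nonzero coefficient on a $\ge 1/\Delta$-fraction of outcomes. You instead select individual \emph{variables} $i_k \in V_t(q)$ at pairwise distance $\ge 3$ and fully integrate the pivots out of $(-1)^q$; the triangle inequality then reduces the bias to a product of independent probabilities $\prod_k \Pr[u_k = 0]$, and each $\Pr[u_k=1]$ is bounded below by $2^{1-t}/\Delta$ via pinning the $t-1$ variables of a minimum-degree monomial of $u_k$ to $1$ and applying \Cref{lemma:helper} to the (necessarily nonconstant) residue. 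Both decompositions turn the pivots into independent multiplicative factors, but your variable-centric version avoids the factor of $t$ in the greedy count ($m \ge |V_t(q)|/(3(\Delta d)^2)$ versus the paper's $\ell \ge |V_t(q)|/(t(\Delta d)^3)$), yielding a sharper exponent, and the ``integrate-the-pivots-out, collect a probability'' viewpoint is arguably cleaner than the paper's conditional-bias bookkeeping. Your observation that the pinning step genuinely needs the \emph{exact} minimum degree built into $V_t(q)$ --- a mere lower bound on degrees would permit a constant term in $R'$ --- is a correct and subtle point; the paper sidesteps this issue by pinning the coefficient of the degree-$t$ pivot monomial rather than a minimum-degree monomial of the derivative.
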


  \begin{proof}
    We partition $V_t(q)$ according to $G_q$ using the following greedy procedure:
    \begin{enumerate}
      \item Set $W_1 = [n]$ and $i=1$.
      \item Pick a monomial $x^{S_i}$ of degree exactly $t$ containing some variable $x_j: j \in W_i \cap V_t(q)$.  
      \item Let $T_i := N_{=1}(S_i)$, and set $W_{i+1} = W_i \setminus N_{\le 3}(S_i)$.
      \item Repeat steps~2 and 3 until we cannot pick such an $S_{\ell+1}$.
      Let $R = [n] \setminus \bigcup_{i \in [\ell]} (S_i \cup T_i)$ be the remaining variables.
    \end{enumerate}
    By construction, the pairwise distances between the $(S_i \cup T_i)$'s are at least $2$, so the variables $x_{S_i \cup T_i}$'s must only appear in disjoint monomials.
    Since each time we remove at most $t \cdot (\Delta d)^3$ elements from $V_t(q)$, we have $\ell \ge \abs{V_t(q)}/(t (\Delta d)^3)$.

    We now partition the monomials in $q$.
    For each $i \in [\ell]$, let $p_i, q_i$ be two polynomials such that $p_i$ is composed of all monomials in $q$ that contain a variable in $x_{S_i}$, and $q_i$ is composed of all monomials in $q$ that contain a variable in $x_{T_i}$ but none in $x_{S_i}$.
    Let $r$ be the polynomial composed of the remaining monomials in $q$.
    We can write $q(x)$ as
    \[
      \sum_{i=1}^\ell \bigl( p_i(x_{S_i}, x_{T_i}) + q_i(x_{T_i}, x_R) \bigr)
      + r(x_R) .
    \]
    By collecting the monomials containing $x^{S_i}$, we can write
    \[
      p_i(x_{S_i}, x_{T_i}) + q_i(x_{T_i}, x_R)
      = x^{S_i} (1 + u_i(x_{T_i})) + v_i(x_{S_i}, x_{T_i}, x_R).
    \]
    for some polynomials $u_i$ and $v_i$, where (1) $u_i$ consists of the at most $\Delta-1$ monomials in $p_i$ that contain $x^{S_i}$, and $u_i(0) = 0$, and (2) $x^{S_i}$ does not appear in any monomial in $v_i$.
    Let $\delta := \Pr[u_i(x_{T_i}) = 0]$.
    By \Cref{lemma:helper} we have $\delta \ge 1/\Delta$.
    Therefore, for every assignment to $x_R$ and every $i \in [\ell]$, 
    \begin{align*}
      \MoveEqLeft
      \E_{x_{T_i}} \abs[\bigg]{\E_{x_{S_i}} \Bigl[ (-1)^{p_i(x_{S_i}, x_{T_i}) + q_i(x_{T_i}, x_R)} \Bigr]} \\
      &= (1-\delta) \abs[\bigg]{\E_{x_{S_i}} \Bigl[(-1)^{v_i(x_{S_i}, x_{T_i}, x_R)} \Bigr]} + \delta \E_{x_{T_i} : u_i(x_{T_i})=0} \left[\abs[\bigg]{\E_{x_{S_i}} \Bigl[(-1)^{x^{S_i} + v_i(x_{S_i}, x_{T_i}, x_R)} \Bigr]}\right] \\
      &\le (1-\delta) + \delta (1 - 2^{-t}) \\
      &\le 1 - 2^{-t}/\Delta ,
    \end{align*}
    where the first inequality is because for any choice of $x_R$ and $x_{T_i}$, the polynomial $x^{S_i} + v_i(x_{S_i}, x_{T_i}, x_R)$ has degree exactly $t$, and therefore its bias is at most $1 - 2^{-t}$.
    Also, for every fixed outcome of $x_R$, the restricted polynomials $p_i(x_{S_i},x_{T_i}) + q_i(x_{T_i},x_R)$ depend on disjoint variables.
    Therefore, 
    \[
      \abs[\bigg]{\E_{x_{T_i},x_{S_i}: i \in [\ell]} \Bigl[(-1)^{q(x)} \Bigr]}
      \le \left(1 - \frac{2^{-t}}{\Delta} \right)^\ell
      \le \exp\left(- \frac{\ell \cdot 2^{-t}}{\Delta} \right)
      \le \exp\left(- \frac{2^{-t} \cdot \abs{V_t(q)}}{(\Delta d)^4}\right) ,
    \]
    because $\ell \ge \abs{V_t(q)}/(t \Delta^3 d^3) \ge \abs{V_t(q)}/(\Delta^3 d^4)$.
\end{proof}


\section{$L_{1,k}$ bounds for disjoint compositions} \label{sec:compositions}

  In this section we give $L_{1,k}$ bounds on \emph{disjoint compositions} of functions, which we define below.

  Let $\calF$ and $\calG$ be two families consisting of functions mapping $\pmo^m$ and $\pmo^\ell$ to $\pmo$ respectively.
  Let $f \in \calF$ and $g_1, \ldots, g_m \in \calG$.
  We define $h\colon\pmo^{m\ell} \to \pmone$, the \emph{disjoint composition} of $f$ and $g_1, \ldots, g_m$, to be
  \[
    h(x_{1,1}, \ldots, x_{1,\ell}, \ldots , x_{m,1}, \ldots, x_{m,\ell})
    := f(g_1(x_{1,1}, \ldots, x_{1,\ell}), \ldots, g_m(x_{m,1}, \ldots, x_{m,\ell})) .
  \]

  \begin{theorem} [Sharper version of \Cref{thm:composition}]
    \label{thm:L1-of-composition}
    Let $g_1, \ldots, g_m \in \calG$, and $f \in \calF$, where $\calG, \calF$ are as above and $\calF$ is closed under restrictions.
    Suppose for every $1 \le k \le K$,
    \begin{enumerate}
      \item $L_{1,k}(f) \le \frac{1-\abs{\E[f]}}{2} \cdot a_\outer \cdot b_\outer^k$ for every $f \in \calF$, and
      \item $L_{1,k}(g) \le \frac{1-\abs{\E[g]}}{2} \cdot a_\inner \cdot b_\inner^k$ for every $g \in \calG$.
    \end{enumerate}
    Then
    \[
      L_{1,K}(h)
      \le \frac{1-\abs{\E[h]}}{2} \cdot a_\outer \cdot b_\inner^K \cdot \frac{a_\inner b_\outer}{2} \left(1 + \frac{a_\inner b_\outer}{2}\right)^{K-1} .
    \]
    In particular, when $a_\inner b_\outer \ge 2$ or $K=1$ we have $L_{1,K}[h] \le \frac{1-\abs{\E[h]}}{2} \cdot a_\outer \cdot \frac{(a_\inner b_\inner b_\outer)^K}{2}$.
  \end{theorem}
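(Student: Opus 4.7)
The plan is to express each Fourier coefficient of $h$ as a product of a ``coefficient sum'' coming from $f$ and Fourier coefficients of the $g_i$'s, and then bound each piece separately. Writing $\mu_i := \E[g_i]$ and expanding $f$ in its Fourier series while substituting $g_i = (g_i - \mu_i) + \mu_i$, one obtains
\[
  h = \sum_{I \subseteq [m]} c_I \prod_{i \in I} (g_i - \mu_i), \qquad c_I := \sum_{J \supseteq I} \hf(J) \prod_{j \in J \setminus I} \mu_j .
\]
Since the $g_i$'s act on disjoint variable blocks and $\widehat{(g_i-\mu_i)}(T) = \hg_i(T)$ for every nonempty $T$, each level-$K$ set decomposes uniquely as $S = \bigsqcup_{i \in I}(\{i\} \times T_i)$ with the $T_i \subseteq [\ell]$ nonempty and $\sum_i |T_i| = K$, and satisfies $\hh(S) = c_I \prod_{i \in I} \hg_i(T_i)$. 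Summing $|\hh(S)|$ over $S$ and using the inner hypothesis $L_{1,k_i}(g_i) \le \tfrac{1-|\mu_i|}{2}\, a_\inner b_\inner^{\,k_i}$ on each factor, the inner sum over compositions $(|T_i|)_{i \in I}$ of $K$ into $|I|$ positive parts gives
\[
  L_{1,K}(h) \;\le\; \sum_{j=1}^{K} \binom{K-1}{j-1}\, a_\inner^{\,j} b_\inner^{\,K} \sum_{|I|=j} |c_I| \prod_{i \in I} \tfrac{1 - |\mu_i|}{2}.
\]

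The heart of the proof is to bound $\sum_{|I|=j} |c_I| \prod_{i \in I}\tfrac{1-|\mu_i|}{2}$ via the outer hypothesis, which is where closure of $\calF$ under restrictions enters. I will use the random restriction trick of \cite{CHHL}: sample $R$ coordinate-by-coordinate by keeping $z_i$ alive (with set of alive coordinates denoted $L$) with probability $1 - |\mu_i|$ and otherwise fixing $z_i = \sgn(\mu_i)$, so that $\E[z_i] = \mu_i$. A short Fourier computation, conditioning on $I \subseteq L$ and noting that each $z_i$ with $i \in J \setminus I$ contributes $\mu_i$ in expectation, yields the identity
\[
  \prod_{i \in I}(1 - |\mu_i|)\, c_I \;=\; \E_R\!\left[\widehat{f_R}(I)\cdot \mathbf{1}[I \subseteq L]\right] .
\]
Taking absolute values and summing over $|I|=j$ gives
\[
  2^{j}\!\sum_{|I|=j} |c_I| \prod_{i \in I} \tfrac{1-|\mu_i|}{2} \;\le\; \E_R\, L_{1,j}(f_R) \;\le\; a_\outer b_\outer^{\,j}\cdot \E_R \tfrac{1 - |\E[f_R]|}{2},
\]
where the second inequality applies the outer hypothesis to $f_R \in \calF$. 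Since $\E_R\!\left[\E[f_R]\right] = \E[h]$ by construction of $R$, convexity of $|\cdot|$ and Jensen's inequality give $\E_R |\E[f_R]| \ge |\E[h]|$, so the right-hand side is at most $a_\outer b_\outer^{\,j} \cdot \tfrac{1-|\E[h]|}{2}$.

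Substituting this bound back and evaluating the resulting binomial sum via $\sum_{j=1}^{K}\binom{K-1}{j-1}\, x^{j} = x(1+x)^{K-1}$ with $x = \tfrac{a_\inner b_\outer}{2}$ yields the claimed inequality; the ``in particular'' clause then follows upon noting that $1 + \tfrac{a_\inner b_\outer}{2} \le a_\inner b_\outer$ whenever $a_\inner b_\outer \ge 2$, and that the $K = 1$ case has an empty product $(1 + \tfrac{a_\inner b_\outer}{2})^{0} = 1$. I expect the main obstacle to be calibrating the restriction distribution so that the $(1 - |\mu_i|)$ factors from the inner hypothesis are absorbed exactly into $\Pr[I \subseteq L]$, and verifying that the particular restriction satisfies $\E_R[\E[f_R]] = \E[h]$ so that the Jensen step contributes exactly the $\tfrac{1-|\E[h]|}{2}$ factor demanded by the statement.
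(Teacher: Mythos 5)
Your proof is correct and follows essentially the same route as the paper: your coefficients $c_I$ are exactly the paper's derivatives $\partial_I f(\beta)$, your restriction is the paper's $R_\beta$, the decomposition of each $\hh(S)$ into $c_{S|_f}\prod_i \hg_i(S|_i)$ matches their \Cref{lemma:coeff-in-f-and-g}, and the $\binom{K-1}{j-1}$ binomial sum is handled identically. The only divergence is a cosmetic one at the end: you invoke Jensen to obtain $\E_R\bigl|\E[f_R]\bigr| \geq |\E[h]|$ directly, whereas the paper first drops the absolute value via $1-|\E[f_R]| \le 1-\E[f_R]$ and then recovers it by rerunning the argument with $-h$; both are valid and yours is marginally cleaner.
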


\paragraph{Proof idea.}
  Before proving \Cref{thm:L1-of-composition}, we briefly describe the main ideas of the proof.
  For a subset $J \subseteq [m]$, let $\partial_Jf$ denote the \emph{$J$-th derivative of $f$}, which can be expressed as
  \[
    \partial_Jf(x_1, \ldots, x_m) := \sum_{T \supseteq J} \hf(T) x^{T \setminus J} .
  \]
  Note that $\hf(J) = \partial_Jf(\vec{0})$.
  
  Let us begin by considering the task of bounding $L_{1,1}(h) = \sum_{(i,j) \in [m] \times [\ell]} \abs{\hh\{(i,j)\}}$.
  Let $\beta = (\beta_1, \ldots, \beta_m)$, where $\beta_i := \E[g_i]$.
  Using the Fourier expansion of $f$, we have
  \[
    \hh\{(i,j)\}
        = \sum_{S \subseteq [m]} \hf(S) \E \left[\prod_{k \in S} g_k(x_k) \cdot x_{i,j} \right]  .
  \]
  If $S \not\ni i$, then the expectation is zero, because $\prod_{k\in S} g_k(x_k)$ and $x_{i,j}$ are independent and $\E[x_{i,j}] = 0$.
  So, we have
  \[
    \hh\{(i,j)\}
    = \sum_{S \ni i} \hf(S) \beta^{S \setminus \{i\}} \cdot \hg_i(\{j\})
    = \partial_i f(\beta) \cdot \hg_i(\{j\}) .
  \]
  If the functions $g_i$ are balanced, i.e.\ $\E[g_i] = 0$ for all $i$, then we would have $\beta = \vec{0}$, and
  \[
    \hh\{(i,j)\}
    = \partial_i f(\vec{0}) \cdot \hg_i(\{j\})
    = \hf(\{i\}) \hg_i(\{j\}) .
  \]
  So in this case we have
  \[
    L_{1,1}(h)
    = \sum_{i \in [m], j\in [\ell]} \abs[\big]{\hh(\{(i,j)\})}
    = \sum_{i \in [m]} \sum_{j \in [\ell]} \abs{\hf(\{i\})\hg_i(\{j\})}
    = \sum_{i \in [m]} \abs{\hf(\{i\})} \sum_{j \in [\ell]} \abs{\hg_i(\{j\})}
  \]
  and we can apply our bounds on $L_{1,1}(\calF)$ and $L_{1,1}(\calG)$ to $\sum_{i \in [m]} \hf\{i\}$ and $\sum_{j \in [\ell]} \hg_i\{j\}$ respectively.
  Specializing to the case $g_1 = \cdots =g_m$, we have
  \begin{claim} \label{claim:compose-balanced}
    Suppose $g_1 = g_2 = \cdots = g_m =: g$ and $\E[g] = 0$.
    Then $L_{1,1}(h) = L_{1,1}(f) L_{1,1}(g)$.
  \end{claim}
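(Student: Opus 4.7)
The plan is to push the calculation sketched for the general case in the ``Proof idea'' paragraph (for $L_{1,1}(h)$ with arbitrary $g_i$'s) through to a clean equality, exploiting the two simplifying hypotheses: all $g_i$'s equal a common $g$, and $\E[g] = 0$.

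First I would start from the Fourier expansion of the outer function and substitute:
\[
h(x) = f(g(x_1), \ldots, g(x_m)) = \sum_{S \subseteq [m]} \hf(S) \prod_{k \in S} g(x_k).
\]
Then, for a singleton $\{(i,j)\} \subseteq [m] \times [\ell]$, I would compute
\[
\hh(\{(i,j)\}) = \E[h(x)\, x_{i,j}] = \sum_{S \subseteq [m]} \hf(S)\, \E\!\left[\prod_{k \in S} g(x_k) \cdot x_{i,j}\right].
\]
The $m$ blocks $x_1,\dots,x_m$ are mutually independent, and $x_{i,j}$ is one coordinate of $x_i$. Therefore the inner expectation factorizes across blocks as $\prod_{k \in S\setminus\{i\}} \E[g(x_k)]$ times a ``block $i$'' term that is $\E[g(x_i) x_{i,j}] = \hg(\{j\})$ when $i \in S$, and $\E[x_{i,j}] = 0$ when $i \notin S$.

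Next I would use $\E[g] = 0$ to kill every surviving term with $|S| \ge 2$: any $S \ni i$ with $|S|\ge 2$ contributes a factor $\E[g(x_k)] = 0$ for some $k \in S \setminus\{i\}$. Only $S = \{i\}$ survives, giving
\[
\hh(\{(i,j)\}) = \hf(\{i\})\,\hg(\{j\}).
\]
Finally, summing the absolute values and using that the inner sum over $j$ no longer depends on $i$ (since all $g_k$ equal $g$),
\[
L_{1,1}(h) = \sum_{i \in [m]} \sum_{j \in [\ell]} |\hf(\{i\})|\cdot|\hg(\{j\})| = \Biggl(\sum_{i \in [m]} |\hf(\{i\})|\Biggr)\Biggl(\sum_{j \in [\ell]} |\hg(\{j\})|\Biggr) = L_{1,1}(f)\, L_{1,1}(g).
\]

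There is no real obstacle here: the statement is essentially a book-keeping identity, and the content of the proof is (i) independence across the $m$ disjoint blocks, which lets each Fourier term factorize, and (ii) the assumption $\E[g] = 0$, which collapses the sum over $S$ to the singleton $\{i\}$. The only minor care needed is observing that $g_1=\cdots=g_m$ is what lets the double sum split into a product (rather than just an upper bound); without this equality one would have a sum $\sum_i |\hf(\{i\})|\cdot L_{1,1}(g_i)$ instead.
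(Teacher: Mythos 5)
Your proof is correct and follows essentially the same route as the paper's: expand $f$ in its Fourier basis, use independence across the $m$ disjoint blocks to factorize each term, observe that $\E[g]=0$ kills every $S$ with $|S| \geq 2$ (as well as every $S$ not containing $i$), and sum absolute values. Your closing remark about where the hypothesis $g_1=\cdots=g_m$ is actually used is also on point and matches the paper's discussion.
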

In general the $g_i$'s may not all be the same and may not be balanced, and so it seems unclear how we can apply our $L_{1,1}(\calF)$ bound on $\sum_{i \in [m]} \partial_if(\beta_1, \ldots, \beta_m)$ when $\beta \ne \vec{0}$.
  To deal with this, in \Cref{claim:derivatives} below we apply a clever idea introduced in \cite{CHHL} that lets us relate $f(\beta)$ at a nonzero point $\beta$ to the average of $f_{R_{\beta}}(\vec{0})$, where $f_{R_{\beta}}$ is $f$ with some of its inputs fixed by a random restriction $R_{\beta}$.
  As $\calF$ is closed under restrictions, we have that $f_{R_{\beta}} \in \calF$ and we can apply the $L_{1,1}(\calF)$ bound on $\sum_i \partial_i f_{R_{\beta}}(\vec{0})$, which in turn gives a bound on $\sum_{i\in [m]} \partial_if(\beta_1, \ldots, \beta_m)$.

  Bounding $L_{1,K}(h)$ for $K \ge 2$ is more complicated, as now each $\hh(S)$ involves many $\hf(J)$ and $\hg_i(T)$'s, where the sets $J$ and $T$ have different sizes.
  So one has to group the coefficients carefully.

\subsection{Useful notation}
  For a set $S \subseteq [m] \times [\ell]$, let $S|_f := \{ i \in [m]: (i,j)\in S\text{ for some }j\in [\ell] \}$ be the ``set of first coordinates'' that occur in $S$, and let $S|_i := \{ j \in [\ell] : (i,j) \in S \}$.
  Note that if $(i,j) \in S$, then $i \in S|_f$ and $j \in S|_i$.
  Let $\beta$ denote the vector $(\beta_1, \ldots, \beta_m)$, where $\beta_i := \E[g_i]$ for each $i \in [m]$.
  For a set $J = \{i_1, \ldots, i_{\abs{J}}\} \subseteq [m]$ and $f=f(y_1,\dots,y_m)$, we write $\partial_Jf$ to denote $\frac{\partial^{\abs{J}} f}{\partial y_{i_1} \cdots \partial y_{i_{\abs{J}}}}$.
  Since $\partial_J y^T = \Id(T \supseteq J) y^{T \setminus J}$, by the multilinearity of $f$ we have that
  \begin{align} \label{eq:derivative-in-fourier}
    \partial_Jf(\beta) = \sum_{T \supseteq J} \hf(T) \beta^{\,T \setminus J} .
  \end{align}

\subsection{The random restriction $R_{\beta}$}
  Given $\beta \in [-1,1]^m$, let $R_{\beta}$ be  the random restriction which is the randomized function from $\pmone^m$ to $\pmone^m$ whose $i$-th coordinate is (independently) defined by
  \[
    R_{\beta}(y)_i
    := \begin{cases} \sgn(\beta_i) & \text{with probability $\abs{\beta_i}$} \\
    y_i & \text{with probability $1 - \abs{\beta_i}$.} \end{cases}
  \]
  Note that we have 
  \[
    \E_{R_{\beta}, y}[R_{\beta}(y)_i] = \E_{R_{\beta}}[R_{\beta}(\vec{0})_i] = \beta_i .
  \]
  Define $f_{R_{\beta}}(y)$ to be the (randomized) function $f(R_{\beta}(y))$.
  By the multilinearity of $f$ and independence of the $R_{\beta}(y)_i$ we have
  \[
    \E_{R_{\beta},y}[ f_{R_{\beta}}(y) ]
    = \E_{R_{\beta}}[ f_{R_{\beta}}(\vec{0}) ]
    = f(\beta) .
  \]
  The following claim relates the two derivatives $\partial_Sf(\beta)$ and $\partial_Sf_{R_{\beta}}(\vec{0}) = \widehat{f_{R_{\beta}}}(S)$.

  \begin{claim} \label{claim:derivatives}
    \[
      \partial_{S}f(\beta)
      = \Biggl( \prod_{i \in S} \frac{1}{1 - \abs{\beta_i}} \Biggr) \cdot \E_{R_{\beta}}[ \partial_Sf_{R_{\beta}}(\vec{0}) ]
      = \Biggl( \prod_{i \in S} \frac{1}{1 - \abs{\beta_i}} \Biggr) \cdot \E_{R_{\beta}}[ \widehat{f_{R_{\beta}}}(S) ] .
    \]
  \end{claim}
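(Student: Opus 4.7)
The plan is to prove the identity by expanding both sides in the Fourier basis of $f$ and reorganizing. The second equality in the claim is just the fact that for a multilinear polynomial $g$ on $\pmo^m$, $\partial_S g(\vec 0) = \widehat{g}(S)$, so I focus on the first equality.

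First I would condition on the random set $T \subseteq [m]$ of ``alive'' coordinates of the restriction $R_\beta$, where $i \in T$ with probability $1 - |\beta_i|$ independently, and for $i \notin T$ the coordinate is frozen to $\sgn(\beta_i)$. For a fixed $T$, substituting $y_i \leftarrow \sgn(\beta_i)$ for $i \notin T$ into $f(y) = \sum_U \hf(U) y^U$ gives
\[
f_{R_\beta}(y) \;=\; \sum_U \hf(U) \cdot \Bigl(\prod_{i \in U \setminus T} \sgn(\beta_i)\Bigr) \prod_{i \in U \cap T} y_i,
\]
from which I would read off the $S$-th Fourier coefficient by selecting the monomials with $U \cap T = S$:
\[
\widehat{f_{R_\beta}}(S) \;=\; \sum_{U :\, U \cap T = S} \hf(U) \prod_{i \in U \setminus T} \sgn(\beta_i),
\]
with the understanding that $U \cap T = S$ forces $S \subseteq T$.

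Next I would take expectation over $T$ and swap the order of summation to sum over $U \supseteq S$ on the outside. For each such $U$, the event $U \cap T = S$ is equivalent to $S \subseteq T$ and $T \cap (U \setminus S) = \emptyset$, which has probability $\prod_{i \in S}(1-|\beta_i|) \prod_{i \in U \setminus S} |\beta_i|$ by independence (coordinates outside $U$ marginalize out). On this event, $U \setminus T = U \setminus S$, so $\prod_{i \in U \setminus T} \sgn(\beta_i) = \prod_{i \in U \setminus S} \sgn(\beta_i)$. Combining these and using $|\beta_i|\sgn(\beta_i) = \beta_i$ gives
\[
\E_{R_\beta}[\widehat{f_{R_\beta}}(S)] \;=\; \prod_{i \in S}(1-|\beta_i|) \sum_{U \supseteq S} \hf(U) \prod_{i \in U \setminus S} \beta_i.
\]
By \Cref{eq:derivative-in-fourier} the final sum equals $\partial_S f(\beta)$, so dividing through by $\prod_{i \in S}(1 - |\beta_i|)$ proves the identity.

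I do not anticipate a substantive obstacle: the argument is entirely bookkeeping, and the only subtlety is matching the frozen/alive decomposition of $R_\beta$ with the Fourier support condition $U \cap T = S$. I would take care to state the identity with the implicit convention that it is interpreted in the limit sense (or only over $S$ with $|\beta_i| < 1$) to avoid division by zero, though in the downstream application in \Cref{thm:L1-of-composition} the factor $\prod_{i \in S}(1-|\beta_i|)$ cancels against the corresponding inner-function factor in the derivative formula and the issue does not arise.
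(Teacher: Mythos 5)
Your proof is correct, but it takes a genuinely different route from the paper's. You expand $f_{R_\beta}$ explicitly in the Fourier basis conditioned on the set $T$ of alive coordinates, compute $\Pr_T[U\cap T=S]=\prod_{i\in S}(1-|\beta_i|)\prod_{i\in U\setminus S}|\beta_i|$, swap sums, and use $|\beta_i|\sgn(\beta_i)=\beta_i$ to reassemble $\partial_S f(\beta)$ from \Cref{eq:derivative-in-fourier} in one pass. The paper instead proves only the single-coordinate identity $\partial_{\{i\}}f(\beta) = \tfrac{1}{1-|\beta_i|}\E_{R_\beta}[\partial_{\{i\}}f_{R_\beta}(\vec 0)]$ via the limit definition of the derivative, using the observation that $\E_{R_\beta}[R_\beta(\tfrac{1}{1-|\beta_i|}z\cdot e_i)_j]$ equals $\beta_i+z$ when $j=i$ and $\beta_j$ otherwise, and then invokes induction for general $S$. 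Your version is more elementary and self-contained in the sense that it avoids the limit/induction machinery and makes the combinatorics of the restriction fully explicit; the paper's is shorter because it offloads the combinatorics to the (routine but unverified) induction step. Both have the same implicit $|\beta_i|<1$ caveat, which you correctly flag and note is harmless in the downstream application since the $\prod_{i\in S}(1-|\beta_i|)$ factor cancels.
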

  \begin{proof}
    The second equality follows from \cref{eq:derivative-in-fourier}.
    To prove the first one, it suffices to show
    \[
      \partial_{\{i\}}f(\beta)
      = \frac{1}{1-\abs{\beta_i}} \E_{R_{\beta}} [ \partial_{\{i\}}f_{R_{\beta}}(\vec{0}) ] .
    \]
    The rest follows by a straightforward induction.
    By definition of derivatives, and noting $\E_{R_\beta}[R_\beta(\frac{1}{1-\abs{\beta_i}} z \cdot e_i)_j]$ equal $\beta_i + z\cdot e_i$ if $i=j$ and $\beta_j$ otherwise,
    \begin{align*}
      \frac{\partial f}{\partial y_i}(\beta)
      &= \lim_{z\to 0} \frac{f(\beta + z \cdot e_i) - f(\beta)}{z} \\
      &= \E_{R_{\beta}} \left[ \lim_{z\to 0} \frac{f_{R_{\beta}} \bigl( \frac{1}{1-\abs{\beta_i}} \cdot z \cdot e_i \bigr) - f_{R_{\beta}}(\vec{0})}{z} \right] \\
      &= \frac{1}{1-\abs{\beta_i}} \E_{R_{\beta}} \left[ \frac{\partial f_{R_{\beta}}}{\partial y_i}(\vec{0}) \right] . \qedhere
    \end{align*}
  \end{proof}
  We now use \Cref{claim:derivatives} to express each coefficient of $h$ in terms of the coefficients of $f$ and $g_i$.
  \begin{lemma} \label{lemma:coeff-in-f-and-g}
  For $S \subseteq [m] \times [\ell]$, we have
  \[
    \hh(S) = \prod_{i \in S|_f} \hg_i(S|_i) \cdot \Biggl( \prod_{i \in S|_f} \frac{1}{1-\abs{\beta_i}} \Biggr) \cdot  \E_{R_{\beta}} \bigl[ \widehat{f_{R_{\beta}}}(S|_f) \bigr].
  \]
  \end{lemma}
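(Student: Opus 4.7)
The plan is to expand $f$ in its Fourier series, exploit the disjointness of the blocks $x^1, \ldots, x^m$ to factor the expectation defining $\hh(S)$ into a product over blocks, identify the surviving terms with the derivative $\partial_{S|_f} f$ evaluated at $\beta$, and then invoke \Cref{claim:derivatives} to convert this derivative into the stated expectation over $R_\beta$.

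Concretely, I would start from $\hh(S) = \E_x[h(x)\, x^S]$ and write $f(y) = \sum_{T \subseteq [m]} \hf(T)\, y^T$, so that
\[
\hh(S) = \sum_{T \subseteq [m]} \hf(T)\, \E_x\!\left[\prod_{i \in T} g_i(x^i) \cdot \prod_{i \in S|_f} x^{S|_i}\right],
\]
using that $x^S = \prod_{i \in S|_f} x^{S|_i}$ with each $x^{S|_i}$ supported on the disjoint block $x^i$. Because the blocks are independent, this expectation factors across $i \in T \cup S|_f$. For $i \in T \cap S|_f$ the block-expectation is $\hg_i(S|_i)$; for $i \in T \setminus S|_f$ it is $\beta_i$; for $i \in S|_f \setminus T$ it is $\E[x^{S|_i}] = 0$, since $S|_i \neq \emptyset$ whenever $i \in S|_f$. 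Hence only terms with $T \supseteq S|_f$ survive, yielding
\[
\hh(S) = \prod_{i \in S|_f} \hg_i(S|_i) \cdot \sum_{T \supseteq S|_f} \hf(T)\, \beta^{\, T \setminus S|_f}.
\]

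The inner sum is exactly $\partial_{S|_f} f(\beta)$ by \cref{eq:derivative-in-fourier}. Applying \Cref{claim:derivatives} to rewrite this derivative as
\[
\partial_{S|_f} f(\beta) = \prod_{i \in S|_f} \frac{1}{1-\abs{\beta_i}} \cdot \E_{R_\beta}\!\left[\widehat{f_{R_\beta}}(S|_f)\right]
\]
gives the claimed identity. Essentially no obstacle arises here: the only subtle point is the observation that a block $i \in S|_f \setminus T$ contributes a zero factor (which is what forces $T \supseteq S|_f$ and makes the inner sum a derivative rather than a more complicated expression); the rest is bookkeeping. The harder work, namely translating $\partial_{S|_f} f(\beta)$ into a quantity involving a restricted function in the class $\calF$, is already done in \Cref{claim:derivatives}, which is why this lemma will be the clean stepping stone toward \Cref{thm:L1-of-composition}.
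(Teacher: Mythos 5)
Your proposal is correct and follows essentially the same route as the paper: expand $f$ in its Fourier basis, factor the expectation across the disjoint blocks into the three cases ($i \in T \cap S|_f$, $i \in T \setminus S|_f$, $i \in S|_f \setminus T$), observe that the third case forces $T \supseteq S|_f$, identify the remaining sum with $\partial_{S|_f} f(\beta)$ via \cref{eq:derivative-in-fourier}, and finish with \Cref{claim:derivatives}. There is nothing to add.
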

  \begin{proof}
    For $S \subseteq [m] \times [\ell]$ and $T \subseteq [m]$, decompose $T$ into $(T \cap S|_f)\,\cup\,(T\setminus S|_f)$ and $S|_f$ into $(T \cap S|_f)\,\cup\,(S|_f\setminus T)$.
    Expanding $h$ using the Fourier expansion of $f$, by the definitions of $S|_f$ and $S|_i$, we have
    \begin{align*}
      \hh(S)
      &= \sum_{T \subseteq [m]} \hf(T) \E_x \left[ \prod_{i\in T} g_i(x) \cdot x^S \right] \\
      &= \sum_{T \subseteq [m]} \hf(T) \left( \Biggl( \prod_{i \in S|_f\setminus T} \E \Bigl[ \prod_{j \in S|_i} x_{i,j} \Bigr] \Biggr) \cdot \Biggl(\prod_{i \in T \cap S|_f} \E \Bigl[ g_i(x) \prod_{j \in S|_i} x_{i,j} \Bigr] \Biggr) \cdot \Biggl( \prod_{i \in T\setminus S|_f} \E[g_i(x)] \Biggr) \right) .
    \end{align*}
    When $S|_f \not\subseteq T$, we have $S|_f\setminus T \ne \varnothing$ and thus $\prod_{i \in S|_f\setminus T} \prod_{j \in S|_i} \E[ x_{i,j} ] = 0$.
    Hence,
    \[
      \prod_{i \in S|_f\setminus T} \E \Bigl[ \prod_{j \in S|_i} x_{i,j} \Bigr]
      =\Id(S|_f \subseteq T) .
    \]
    Note that $\E[ g_i(x) \prod_{j \in S} x_{i,j}]  = \hg_i(S)$ for every $S \subseteq [\ell]$.
    Using $\E[g_i(x)] = \beta_i$, \Cref{eq:derivative-in-fourier} and \Cref{claim:derivatives}, we have
    \begin{align*}
      \hh(S)
      &= \sum_{T \subseteq [m]: T \supseteq S|_f} \hf(T) \prod_{i \in S|_f} \hg_i(S|_i) \prod_{i \in T\setminus S|_f} \E[g_i(x)] \\
      &= \prod_{i \in S|_f} \hg_i(S|_i) \cdot \Biggl( \sum_{T \subseteq [m]: T \supseteq S|_f} \hf(T) \prod_{i \in T\setminus S|_f} \beta_i \Biggr) && \text{($\E[g_i(x)] = \beta_i$)}\\
      &= \prod_{i \in S|_f} \hg_i(S|_i) \cdot \partial_{S|_f}f(\beta_1, \ldots, \beta_m) && 
     \text{(\Cref{eq:derivative-in-fourier})} \\
      &= \prod_{i \in S|_f} \hg_i(S|_i) \cdot \Biggl( \prod_{i \in S|_f} \frac{1}{1-\abs{\beta_i}} \Biggr) \cdot  \E_{R_{\beta}} \bigl[ \widehat{f_{R_{\beta}}}(S|_f) \bigr] .&&  \text{(\Cref{claim:derivatives})} \qedhere
    \end{align*}
  \end{proof}

\subsection{Proof of \Cref{thm:L1-of-composition}}
    By \Cref{lemma:coeff-in-f-and-g}, $L_{1,K}(h)$ is equal to
    \[
      \sum_{S \subseteq [m] \times [\ell] : \abs{S}=K} \abs{\hh(S)} 
      = \sum_{S \subseteq [m] \times [\ell] : \abs{S}=K} \Biggl| \Biggl( \prod_{i \in S|_f} \hg_i(S|_i) \Biggr) \cdot \Biggl( \prod_{i \in S|_f} \frac{1}{1-\abs{\beta_i}} \Biggr) \cdot  \E_{R_{\beta}} \bigl[ \widehat{f_{R_{\beta}}}(S|_f) \bigr] \Biggr| .
    \]
    We enumerate all the subsets $S \subseteq [m] \times [\ell]$ of size $K$ in the following order: 
    For every $\abs{J} = k \in [K]$ out of the $m$ blocks of $\ell$ coordinates, we enumerate all possible combinations of the (disjoint) nonempty subsets $\{ S_i: i \in J\}$ in those $k$ blocks whose sizes sum to $K$.
    Rewriting the summation above in this order, we obtain
    \begin{align}
      \MoveEqLeft
      \sum_{S \subseteq [m] \times [\ell] : \abs{S}=K} \abs{\hh(S)} \\
      &= \sum_{k=1}^K \sum_{\substack{J \subseteq [m] \\ \abs{J}=k}} \sum_{\substack{w \subseteq [\ell]^J \\ \sum_{i \in J} w_i = K}} \sum_{\substack{\{S_i\}_{i \in J} \subseteq [\ell]^J : \\ \forall i \in J: \abs{S_i} = w_i}} \abs[\Bigg]{ \Biggl( \prod_{i\in J} \hg_i(S_i) \Biggr) \cdot \Biggl( \prod_{i\in J} \frac{1}{1-\abs{\beta_i}} \Biggr) \cdot \E_{R_{\beta}} \bigl[ \widehat{f_{R_{\beta}}}(J) \bigr] } \nonumber \\
      &\le \sum_{k=1}^K \sum_{\substack{J \subseteq [m] \\ \abs{J}=k}} \sum_{\substack{w \subseteq [\ell]^J \\ \sum_{i \in J} w_i = K}} \sum_{\substack{\{S_i\}_{i \in J} \subseteq [\ell]^J : \\ \forall i \in J: \abs{S_i} = w_i}} \Biggl( \prod_{i\in J} \abs[\big]{\hg_i(S_i)}\Biggr) \cdot \Biggl( \prod_{i\in J} \frac{1}{1-\abs{\beta_i}} \Biggr)  \cdot \abs[\Big]{\E_{R_{\beta}}\bigl[  \widehat{f_{R_{\beta}}}(J) \bigr]} \label{eq:stop1}.
    \end{align}
    Since $L_{1,w_i}(g_i) \le \frac{1-\abs{\beta_i}}{2} \cdot a_\inner \cdot b_\inner^{w_i}$, for every $\{w_i\}_{i\in J}$ such that $\sum_{i\in J} w_i = K$, we have
    \[
      \sum_{\substack{\{S_i\}_{i \in J} \subseteq [\ell]^J : \\ \forall i \in J: \abs{S_i} = w_i}} \prod_{i\in J} \abs{\hg_i(S_i)}
      = \prod_{i\in J} L_{1,w_i}(g_i)
      \le \prod_{i \in J} \Bigl( \frac{1-\abs{\beta_i}}{2} a_\inner b_\inner^{w_i} \Bigr)
      = b_\inner^K a_\inner^{\abs{J}} \prod_{i \in J} \frac{1-\abs{\beta_i}}{2}.
    \]
    Plugging the above into \cref{eq:stop1}, we get that
    \begin{align}
      \sum_{S \subseteq [m] \times [\ell] : \abs{S}=K} \abs{\hh(S)} 
      &\le b_\inner^K \sum_{{k}=1}^K  a_\inner^{k} \sum_{\substack{J \subseteq [m] \\ \abs{J}={k}}} \sum_{\substack{w \subseteq [\ell]^J \\ \sum_{i \in J} w_i = K}} \prod_{i\in J} \left(\frac{1-\abs{\beta_i}}{2} \cdot \frac{1}{1-\abs{\beta_i}} \right) \cdot \abs[\Big]{\E_{R_{\beta}} \bigl[\widehat{f_{R_{\beta}}}(J) \bigr]} \nonumber \\
      &= b_\inner^K \sum_{{k}=1}^K  \left(\frac{a_\inner}{2}\right)^k \sum_{\substack{J \subseteq [m] \\ \abs{J}={k}}} \abs[\Big]{\E_{R_{\beta}} \bigl[\widehat{f_{R_{\beta}}}(J) \bigr]} \sum_{\substack{w \subseteq [\ell]^J \\ \sum_{i \in J} w_i = K}} 1  \nonumber \\
      &\le b_\inner^K \sum_{{k}=1}^K \left(\frac{a_\inner}{2}\right)^k \binom{K - 1}{k - 1}  \sum_{\substack{J \subseteq [m] \\ \abs{J}={k}}} \abs[\Big]{\E_{R_{\beta}} \bigl[\widehat{f_{R_{\beta}}}(J) \bigr]} \label{eq:stop2} ,
    \end{align}
    where the last inequality is because for every subset $J \subseteq [m]$, the set
     $\{w \subseteq [\ell]^J: \sum_{i \in J} w_i = K\}$ has size at most $\binom{K-1}{\abs{J}-1}$.
    We now bound $\abs{\E_{R_{\beta}}[\widehat{f_{R_{\beta}}}(J)]}$.
    Since for every restriction $R_{\beta}$, we have $f_{R_{\beta}} \in \calF$ (by the assumption that $\calF$ is closed under restrictions), it follows that
    \[
      L_{1,{k}}(f_{R_{\beta}})
      \le \frac{1-\abs{\E_y[f_{R_{\beta}}(y)]}}{2} a_\outer b_\outer^k
      \le \frac{1-\E_y[f_{R_{\beta}}(y)]}{2} a_\outer b_\outer^{k} .
    \]
    So
    \begin{align*}
      \sum_{J \subseteq [m], \abs{J}={k}} \abs[\Big]{\E_{R_{\beta}} \bigl[\widehat{f_{R_{\beta}}}(J) \bigr]}
      &\le \E_{R_{\beta}}[ L_{1,{k}}(f_{R_{\beta}}) ] \\
      &\le \frac{1-\E_{R_{\beta},y}[f_{R_{\beta}}(y)]}{2} a_\outer b_\outer^k \\
      &= \frac{1-\E[h]}{2} a_\outer b_\outer^k .
    \end{align*}
    Continuing from \cref{eq:stop2}, we get
    \begin{align*}
        \sum_{S \subseteq [m] \times [\ell] : \abs{S}=K} \abs{\hh(S)}
        &\le \frac{1-\E[h]}{2} \cdot b_\inner^K \cdot \sum_{k=1}^K  \left(\frac{a_\inner}{2}\right)^k \cdot \binom{K-1}{k-1} \cdot a_\outer b_\outer^k \\
        &= \frac{1-\E[h]}{2} \cdot a_\outer \cdot b_\inner^K \cdot \frac{a_\inner b_\outer}{2} \left(1 + \frac{a_\inner b_\outer}{2}\right)^{K-1}
    \end{align*}
    where the last step used the binomial theorem.
    Applying the same argument to $-h$ lets us replace $\frac{1-\E[h]}{2}$ with $\frac{1-\abs{\E[h]}}{2}$, concluding the proof of \Cref{thm:L1-of-composition}.
\qed

\subsection{Bounding $L_{1,1}(p)$ for balanced polynomials}
  We give a quick application of \Cref{claim:compose-balanced} to show that $L_{1,1}(p) \le d$ for balanced degree-$d$ polynomials.
  (Note that \cite{CHLT} showed that $L_{1,1}(p) \le 4d$ for general degree-$d$ polynomials.)

  \begin{claim}
    For every degree-$d$ polynomial $p\colon\zo^n \to \zo$ with $\E[p]=1/2$, we have $L_{1,1}(p) \le d$.
  \end{claim}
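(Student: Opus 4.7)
The natural first step is to pass to the $\pm 1$-encoding. Set $f := (-1)^p$. Since $p$ is $\zo$-valued with $\E[p] = 1/2$, we have $\E[f] = 0$, and the relation $\hf(S) = -2\,\hp(S)$ for every $S \ne \emptyset$ gives $L_{1,1}(p) = L_{1,1}(f)/2$. So the goal reduces to showing $L_{1,1}(f) \le 2d$.

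For each $i \in [n]$, write $p(x) = x_i q_i(x_{-i}) \oplus r_i(x_{-i})$ over $\F_2$, where $q_i$ has $\F_2$-degree at most $d-1$ and $r_i$ has $\F_2$-degree at most $d$. A direct calculation gives $\hf(\{i\}) = \E[q_i(x_{-i})\,(-1)^{r_i(x_{-i})}]$, and using $\E[f]=0$ one further simplifies this to $\hf(\{i\}) = \E[(-1)^{r_i(x_{-i})}]$, i.e.\ the bias of the restricted polynomial $p|_{x_i=0}$. Equivalently, $\hf(\{i\})$ is the bias of the $\F_2$-polynomial $p \oplus x_i$, so the task becomes bounding $\sum_{i=1}^n |\mathrm{bias}(p \oplus x_i)| \le 2d$.

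The plan is to apply \Cref{claim:compose-balanced} to convert this sum of biases into a product structure. The idea is to decompose $p = x_{i_1} \oplus x_{i_2} \oplus \cdots \oplus x_{i_s} \oplus p'(x)$, where $\{x_{i_1}, \ldots, x_{i_s}\}$ is the set of variables that appear linearly and only linearly in $p$, and $p'$ collects the remaining monomials. On such a decomposition, $p$ is literally a disjoint composition of $\XOR_{s+1}$ (outer function) with the balanced single-variable functions $x_{i_1}, \ldots, x_{i_s}$ and the polynomial $p'$ as inner functions on pairwise disjoint variable sets. Since each $x_{i_j}$ is balanced with $L_{1,1}(x_{i_j})=1$, applying \Cref{claim:compose-balanced} (or rather its easy generalization to non-identical balanced inner functions) shows that only the $s$ coordinates $i_1, \ldots, i_s$ plus the at most $\deg(p')$ inherited from the inner piece $p'$ contribute, giving $L_{1,1}(p) \le s + L_{1,1}(p')$. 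A short induction on $d$ then closes the loop: when $s \ge 1$ we peel off one linear variable and recurse on $p'$ of degree at most $d$ but on strictly fewer variables, and when $s = 0$ (no purely linear variable) one shows that $\E[p|_{x_i=0}] - 1/2$ is forced by the degree-$d$ structure to contribute at most $d$ in total.

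The main obstacle is the case $s=0$, i.e.\ balanced polynomials such as $p = x_1 x_2 \oplus x_1 x_3 \oplus x_2 x_3$ in which every variable participates in a higher-degree monomial so there is no clean disjoint-composition factorization. To handle this I would either (i) bound each $|\hf(\{i\})|$ via the identity $\hf(\{i\}) = \E[q_i (-1)^{r_i}]$ together with the Schwartz--Zippel-type lower bound $\Pr[q_i = 0] \ge 2^{-(d-1)}$ applied to the degree-$(d-1)$ polynomial $q_i$, and show that the resulting bounds aggregate to $2d$; or (ii) use a random-restriction argument inside the composition framework, replacing the nonlinear variables by fresh balanced inner functions (as in the proof of \Cref{claim:derivatives}) so that the composition theorem becomes applicable with the $\F_2$-degree playing the role of $a_\outer \cdot b_\outer$. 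This latter route matches the ``quick application'' flavor suggested by the surrounding text.
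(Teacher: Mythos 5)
There is a genuine gap. Your proposal never closes the case $s=0$, which you yourself flag as ``the main obstacle'' --- but for a balanced polynomial with no purely-linear variable, that is the entire content of the claim, and neither of your two suggested routes plausibly works. Route (i) (Schwartz--Zippel on $q_i$) gives only a per-coordinate bound $|\hf(\{i\})| \le 1 - 2^{-(d-1)}$, which sums to $\Theta(n)$, not $d$. Route (ii) is too vague to assess. Moreover the intermediate claim $L_{1,1}(p) \le s + L_{1,1}(p')$ does not follow from the disjoint-composition machinery: the outer function $\XOR_{s+1}$ has Fourier mass only on the top level, so when $s\ge 2$ the composition formula actually gives $L_{1,1}(p)=0$, not $s + L_{1,1}(p')$; and when $s\ge 1$ the inner piece $p'$ need not be balanced, so the inductive hypothesis would not apply to it in the first place.

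The paper's proof goes in the opposite direction from yours: instead of decomposing $p$, it \emph{composes} $p$ with itself. Assume for contradiction that $L_{1,1}(p) = (1+\eps)d$ for some $\eps>0$. Since $p$ is balanced, \Cref{claim:compose-balanced} applies with $f=g=p$ (all inner copies on disjoint variable blocks), giving a balanced polynomial $p'$ of degree $d^2$ with $L_{1,1}(p') = L_{1,1}(p)^2 = (1+\eps)^2 d^2$. Iterating $k$ times yields a balanced polynomial of degree $d^{2^k}$ with $L_{1,1} = (1+\eps)^{2^k} d^{2^k}$, which for large $k$ exceeds the unconditional $4\cdot\deg$ bound of \cite{CHLT} --- a contradiction. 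This amplification (tensor-power) trick sidesteps any case analysis and uses compose-balanced exactly where it is clean, namely for \emph{identical balanced} inner functions, rather than forcing a decomposition where the balancedness hypothesis breaks.
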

  \begin{proof}
    Suppose not, and suppose that there exists an $\eps > 0$ such that $L_{1,1}(p) = (1+\eps) d$ for some degree-$d$ balanced polynomial $p$ on $n$ variables.
    Consider the composition $p'$ on disjoint sets of variables with the inner functions $g_1 = g_2 = \dots g_n$ and the outer function all equal to $p$.
    Note that $p'$ is a balanced degree $d^2$ polynomial with $L_{1,1}(p') = (1+\eps)^2 d^2$.

    We can repeat this iteratively by replacing $p$ in $p'$ in the composition.
    After $k$ repetitions, we get a polynomial $q$ of degree $d^{2^k}$ that has $L_{1,1}(q) = (1+\eps)^{2^k} d^{2^k}$. For large enough $k$ this is greater than $4d^{2^k}$, but this contradicts the $4d$ upper bound of \cite{CHLT}, a contradiction.
  \end{proof}

\subsection*{Acknowledgements}
We thank Shivam Nadimpalli for stimulating discussions at the early stage of the project, and Pooya Hatami for pointing out a mistake in \Cref{lemma:read-few-key} in a previous version of the paper.
We also thank the anonymous reviewers for their helpful comments.

Jaros\l{}aw B\l{}asiok is supported by a Junior Fellowship from the Simons Society of Fellows.
Peter Ivanov and Emanuele Viola are supported by NSF grant CCF-2114116.
Yaonan Jin is supported by NSF grants IIS-1838154, CCF-1563155, CCF-1703925, CCF-1814873, CCF-2106429, and CCF-210718.
Chin Ho Lee is supported by the Croucher Foundation and by the Simons Collaboration on Algorithms and Geometry while at Columbia University, and by Simons Investigator grants to Salil Vadhan and Madhu Sudan.
Rocco Servedio is supported in part by NSF awards CCF-1563155, IIS-1838154 and CCF-2106429, and by the Simons Collaboration on Algorithms and Geometry.

\begin{flushleft}
\bibliographystyle{alpha}
\bibliography{ref}

\newcommand{\etalchar}[1]{$^{#1}$}
\begin{thebibliography}{CMM{\etalchar{+}}20}

\bibitem[BGL06]{BGL}
Nayantara Bhatnagar, Parikshit Gopalan, and Richard~J. Lipton.
\newblock Symmetric polynomials over {${\Bbb Z}_m$} and simultaneous
  communication protocols.
\newblock {\em J. Comput. System Sci.}, 72(2):252--285, 2006.

\bibitem[BHL12]{BHL12}
Ido {Ben-Eliezer}, Rani Hod, and Shachar Lovett.
\newblock Random low-degree polynomials are hard to approximate.
\newblock {\em Comput. Complexity}, 21(1):63--81, 2012.

\bibitem[BHLV19]{BHLV}
Ravi Boppana, Johan H{\aa}stad, Chin~Ho Lee, and Emanuele Viola.
\newblock Bounded independence versus symmetric tests.
\newblock {\em ACM Trans. Comput. Theory}, 11(4):Art. 21, 27, 2019.

\bibitem[BS21]{BS}
Nikhil Bansal and Makrand Sinha.
\newblock {$k$}-forrelation optimally separates quantum and classical query
  complexity.
\newblock In {\em Proceedings of the 53rd Annual ACM SIGACT Symposium on Theory
  of Computing}, STOC 2021, New York, NY, USA, 2021.

\bibitem[BTW15]{BTW}
Eric Blais, Li-Yang Tan, and Andrew Wan.
\newblock An inequality for the fourier spectrum of parity decision trees,
  2015.

\bibitem[BV10]{BoV-gen}
Andrej Bogdanov and Emanuele Viola.
\newblock Pseudorandom bits for polynomials.
\newblock {\em SIAM J. Comput.}, 39(6):2464--2486, 2010.

\bibitem[CGL{\etalchar{+}}21]{CGLSS}
Eshan Chattopadhyay, Jason Gaitonde, Chin~Ho Lee, Shachar Lovett, and Abhishek
  Shetty.
\newblock {Fractional Pseudorandom Generators from Any Fourier Level}.
\newblock In Valentine Kabanets, editor, {\em 36th Computational Complexity
  Conference (CCC 2021)}, volume 200 of {\em Leibniz International Proceedings
  in Informatics (LIPIcs)}, pages 10:1--10:24, Dagstuhl, Germany, 2021. Schloss
  Dagstuhl -- Leibniz-Zentrum f{\"u}r Informatik.

\bibitem[CHHL19]{CHHL}
Eshan Chattopadhyay, Pooya Hatami, Kaave Hosseini, and Shachar Lovett.
\newblock Pseudorandom generators from polarizing random walks.
\newblock {\em Theory Comput.}, 15:Paper No. 10, 26, 2019.

\bibitem[CHLT19]{CHLT}
Eshan Chattopadhyay, Pooya Hatami, Shachar Lovett, and Avishay Tal.
\newblock Pseudorandom generators from the second {F}ourier level and
  applications to {AC}0 with parity gates.
\newblock In {\em 10th {I}nnovations in {T}heoretical {C}omputer {S}cience},
  volume 124. 2019.

\bibitem[CHRT18]{CHRT}
Eshan Chattopadhyay, Pooya Hatami, Omer Reingold, and Avishay Tal.
\newblock Improved pseudorandomness for unordered branching programs through
  local monotonicity.
\newblock In {\em S{TOC}'18---{P}roceedings of the 50th {A}nnual {ACM} {SIGACT}
  {S}ymposium on {T}heory of {C}omputing}, pages 363--375. ACM, New York, 2018.

\bibitem[CLW20]{ChenLW2020-almost}
Lijie Chen, Xin Lyu, and R.~Ryan Williams.
\newblock Almost-everywhere circuit lower bounds from non-trivial
  derandomization.
\newblock In {\em 2020 IEEE 61st Annual Symposium on Foundations of Computer
  Science (FOCS)}, pages 1--12, 2020.

\bibitem[CMM{\etalchar{+}}20]{ChakrabortyMMMPS20}
Sourav Chakraborty, Nikhil~S. Mande, Rajat Mittal, Tulasimohan Molli, Manaswi
  Paraashar, and Swagato Sanyal.
\newblock Tight chang's-lemma-type bounds for boolean functions.
\newblock {\em CoRR}, abs/2012.02335, 2020.

\bibitem[CPTS21]{CohenPT21}
Gil Cohen, Noam Peri, and Amnon Ta-Shma.
\newblock {\em Expander Random Walks: A Fourier-Analytic Approach}, page
  1643–1655.
\newblock Association for Computing Machinery, New York, NY, USA, 2021.

\bibitem[DHH20]{DHH}
Dean Doron, Pooya Hatami, and William~M. Hoza.
\newblock Log-seed pseudorandom generators via iterated restrictions.
\newblock In {\em 35th {C}omputational {C}omplexity {C}onference}, volume 169.
  2020.

\bibitem[FSUV13]{FeffermanSUV10}
Bill Fefferman, Ronen Shaltiel, Christopher Umans, and Emanuele Viola.
\newblock On beating the hybrid argument.
\newblock {\em Theory Comput.}, 9:809--843, 2013.

\bibitem[GKV17]{GKV}
Frederic Green, Daniel Kreymer, and Emanuele Viola.
\newblock Block-symmetric polynomials correlate with parity better than
  symmetric.
\newblock {\em Comput. Complexity}, 26(2):323--364, 2017.

\bibitem[GRT21]{GRT}
Uma Girish, Ran Raz, and Avishay Tal.
\newblock {Quantum Versus Randomized Communication Complexity, with Efficient
  Players}.
\newblock In James~R. Lee, editor, {\em 12th Innovations in Theoretical
  Computer Science Conference (ITCS 2021)}, volume 185, 2021.

\bibitem[GRZ21]{GRZ}
Uma Girish, Ran Raz, and Wei Zhan.
\newblock {Lower Bounds for XOR of Forrelations}.
\newblock In Mary Wootters and Laura Sanit\`{a}, editors, {\em Approximation,
  Randomization, and Combinatorial Optimization. Algorithms and Techniques
  (APPROX/RANDOM 2021)}, volume 207 of {\em Leibniz International Proceedings
  in Informatics (LIPIcs)}, pages 52:1--52:14, Dagstuhl, Germany, 2021. Schloss
  Dagstuhl -- Leibniz-Zentrum f{\"u}r Informatik.

\bibitem[GSTW16]{GSTW16}
Parikshit Gopalan, Rocco~A. Servedio, Avishay Tal, and Avi Wigderson.
\newblock Degree and sensitivity: tails of two distributions.
\newblock {\em Electron. Colloquium Comput. Complex.}, 23:69, 2016.

\bibitem[GT09]{GT}
Ben Green and Terence Tao.
\newblock The distribution of polynomials over finite fields, with applications
  to the {G}owers norms.
\newblock {\em Contrib. Discrete Math.}, 4(2):1--36, 2009.

\bibitem[GTW21]{GirishTW21}
Uma Girish, Avishay Tal, and Kewen Wu.
\newblock {Fourier Growth of Parity Decision Trees}.
\newblock In Valentine Kabanets, editor, {\em 36th Computational Complexity
  Conference (CCC 2021)}, volume 200, 2021.

\bibitem[IPV22]{corr}
Peter Ivanov, Liam Pavlovic, and Emanuele Viola.
\newblock On correlation bounds against polynomials.
\newblock 2022.

\bibitem[KL93]{KL93}
Marek Karpinski and Michael Luby.
\newblock Approximating the number of zeroes of a {GF[2]} polynomial.
\newblock {\em Journal of Algorithms}, 14(2):280--287, 1993.

\bibitem[Lee19]{Lee}
Chin~Ho Lee.
\newblock Fourier bounds and pseudorandom generators for product tests.
\newblock In {\em 34th {C}omputational {C}omplexity {C}onference}, volume 137.
  2019.

\bibitem[LMS11]{LovettMS11}
Shachar Lovett, Roy Meshulam, and Alex Samorodnitsky.
\newblock Inverse conjecture for the {G}owers norm is false.
\newblock {\em Theory Comput.}, 7:131--145, 2011.

\bibitem[Lov09]{Lov09}
Shachar Lovett.
\newblock Unconditional pseudorandom generators for low-degree polynomials.
\newblock {\em Theory Comput.}, 5:69--82, 2009.

\bibitem[LPV22]{LPV22}
Chin~Ho Lee, Edward Pyne, and Salil Vadhan.
\newblock Fourier growth of regular branching programs.
\newblock In {\em Approximation, randomization, and combinatorial optimization.
  {A}lgorithms and techniques}, volume 245 of {\em LIPIcs. Leibniz Int. Proc.
  Inform.}, pages Art. No. 2, 21. Schloss Dagstuhl. Leibniz-Zent. Inform.,
  Wadern, 2022.

\bibitem[LV20]{LV}
Chin~Ho Lee and Emanuele Viola.
\newblock More on bounded independence plus noise: pseudorandom generators for
  read-once polynomials.
\newblock {\em Theory Comput.}, 16:Paper No. 7, 50, 2020.

\bibitem[Man95]{Man95}
Yishay Mansour.
\newblock An {$O(n^{\log \log n})$} learning algorithm for {DNF} under the
  uniform distribution.
\newblock volume~50, pages 543--550. 1995.
\newblock Fifth Annual Workshop on Computational Learning Theory (COLT)
  (Pittsburgh, PA, 1992).

\bibitem[MRT19]{MRT}
Raghu Meka, Omer Reingold, and Avishay Tal.
\newblock Pseudorandom generators for width-3 branching programs.
\newblock In {\em S{TOC}'19---{P}roceedings of the 51st {A}nnual {ACM} {SIGACT}
  {S}ymposium on {T}heory of {C}omputing}, pages 626--637. ACM, New York, 2019.

\bibitem[Nis92]{Nis92}
Noam Nisan.
\newblock Pseudorandom generators for space-bounded computation.
\newblock {\em Combinatorica}, 12(4):449--461, 1992.

\bibitem[O'D14]{ODonnell:book}
Ryan O'Donnell.
\newblock {\em Analysis of Boolean Functions}.
\newblock Cambridge University Press, 2014.

\bibitem[OLBC10]{OLBC10}
Frank William~John Olver, Daniel~William Lozier, Ronald~Fernand Boisvert, and
  Charles~Winthrop Clark.
\newblock {\em NIST handbook of mathematical functions hardback and CD-ROM}.
\newblock Cambridge university press, 2010.

\bibitem[OS07]{OS}
Ryan O'Donnell and Rocco~A. Servedio.
\newblock Learning monotone decision trees in polynomial time.
\newblock {\em SIAM J. Comput.}, 37(3):827--844, 2007.

\bibitem[Raz87]{Raz87}
Alexander~A. Razborov.
\newblock Lower bounds on the dimension of schemes of bounded depth in a
  complete basis containing the logical addition function.
\newblock {\em Mat. Zametki}, 41(4):598--607, 623, 1987.

\bibitem[RSV13]{RSV13}
Omer Reingold, Thomas Steinke, and Salil Vadhan.
\newblock Pseudorandomness for regular branching programs via fourier analysis.
\newblock In {\em {RANDOM} 2013}, volume 8096 of {\em Lecture Notes in Computer
  Science}, pages 655--670, 2013.

\bibitem[RT19]{RazT19}
Ran Raz and Avishay Tal.
\newblock Oracle separation of {BQP} and {PH}.
\newblock In {\em S{TOC}'19---{P}roceedings of the 51st {A}nnual {ACM} {SIGACT}
  {S}ymposium on {T}heory of {C}omputing}, pages 13--23. ACM, New York, 2019.

\bibitem[Smo87]{Smolensky87}
Roman Smolensky.
\newblock Algebraic methods in the theory of lower bounds for boolean circuit
  complexity.
\newblock In {\em Proceedings of the Nineteenth Annual ACM Symposium on Theory
  of Computing}, STOC '87, page 77–82, New York, NY, USA, 1987. Association
  for Computing Machinery.

\bibitem[SSW21]{SSW}
Alexander~A. Sherstov, Andrey~A. Storozhenko, and Pei Wu.
\newblock {\em An Optimal Separation of Randomized and Quantum Query
  Complexity}, page 1289–1302.
\newblock Association for Computing Machinery, New York, NY, USA, 2021.

\bibitem[ST19a]{ServedioT-PRG-sparse}
Rocco~A. Servedio and Li-Yang Tan.
\newblock Improved pseudorandom generators from pseudorandom multi-switching
  lemmas.
\newblock In {\em Approximation, randomization, and combinatorial optimization.
  {A}lgorithms and techniques}, volume 145. 2019.

\bibitem[ST19b]{ServedioT-PRG-readfew}
Rocco~A. Servedio and Li-Yang Tan.
\newblock Pseudorandomness for read-{$k$} {DNF} formulas.
\newblock In {\em Proceedings of the {T}hirtieth {A}nnual {ACM}-{SIAM}
  {S}ymposium on {D}iscrete {A}lgorithms}, pages 621--638. SIAM, Philadelphia,
  PA, 2019.

\bibitem[Tal17]{Tal17}
Avishay Tal.
\newblock Tight bounds on the {F}ourier spectrum of {${\bf{AC}}^{\bf 0}$}.
\newblock In {\em 32nd {C}omputational {C}omplexity {C}onference}, volume~79.
  2017.

\bibitem[Tal20]{Tal}
Avishay Tal.
\newblock Towards optimal separations between quantum and randomized query
  complexities.
\newblock In {\em 2020 IEEE 61st Annual Symposium on Foundations of Computer
  Science (FOCS)}, pages 228--239, 2020.

\bibitem[Vio09a]{viola-FTTCS09}
Emanuele Viola.
\newblock On the power of small-depth computation.
\newblock {\em Foundations and Trends in Theoretical Computer Science},
  5(1):1--72, 2009.

\bibitem[Vio09b]{Viola-d}
Emanuele Viola.
\newblock The sum of {$d$} small-bias generators fools polynomials of degree
  {$d$}.
\newblock {\em Comput. Complexity}, 18(2):209--217, 2009.

\bibitem[Vio17]{Viola-map}
Emanuele Viola.
\newblock Challenges in computational lower bounds.
\newblock {\em SIGACT News, Open Problems Column}, 48(1), 2017.

\bibitem[Vio21a]{Viola20}
Emanuele Viola.
\newblock {Fourier Conjectures, Correlation Bounds, and Majority}.
\newblock In Nikhil Bansal, Emanuela Merelli, and James Worrell, editors, {\em
  48th International Colloquium on Automata, Languages, and Programming (ICALP
  2021)}, volume 198 of {\em Leibniz International Proceedings in Informatics
  (LIPIcs)}, pages 111:1--111:15, Dagstuhl, Germany, 2021. Schloss Dagstuhl --
  Leibniz-Zentrum f{\"u}r Informatik.

\bibitem[Vio21b]{Viola-diago}
Emanuele Viola.
\newblock New lower bounds for probabilistic degree and {AC}0 with parity
  gates.
\newblock {\em Theory of Computing}, 2021.
\newblock Available at http://www.ccs.neu.edu/home/viola/.

\bibitem[{Wik}21]{Kravchuk:wikipedia}
{Wikipedia contributors}.
\newblock Kravchuk polynomials.
\newblock \url{https://en.wikipedia.org/wiki/Kravchuk_polynomials}, accessed
  April 25, 2021.

\end{thebibliography}
\end{flushleft}

\appendix


\section{Reduction to bound without acceptance probability} \label{sec:reduction}
In this section, we show that given any $L_{1,k}$ Fourier norm bound on a class of functions that is closed under XOR on disjoint variables, such a bound can be automatically ``upgraded'' to a refined bound that depends on the acceptance probability:

\begin{lemma} \label{lemma:xor-reduction-restate}
  Let $\calF$ be a class of $\pmo$-valued functions such that for every $f\in \calF$, the XOR of disjoint copies of $f$ (over disjoint sets of variables) also belongs to $\calF$.
  If $L_{1,k}(\calF) \le b^k$, then for every $f \in \calF$ it holds that $L_{1,k}(f) \le 2e \cdot \frac{1-\abs{\E[f]}}{2} \cdot b^k$.
\end{lemma}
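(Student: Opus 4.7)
The plan is to exploit the closure of $\calF$ under disjoint XOR, together with the multiplicativity of the Fourier transform over independent variable blocks, to bootstrap the universal bound $L_{1,k}(\calF)\le b^k$ into one that scales with the acceptance probability. Fix $f \in \calF$ and write $\mu := \E[f]$. For a positive integer $t$, let $f^{\oplus t}\colon\pmone^{nt}\to\pmone$ denote the product of $t$ disjoint copies of $f$ on fresh blocks of $n$ variables; by hypothesis $f^{\oplus t}\in\calF$, so $L_{1,k}(f^{\oplus t})\le b^k$. Because each character on $\pmone^{nt}$ factorizes across the $t$ blocks, the Fourier transform is multiplicative across blocks:
\[
\widehat{f^{\oplus t}}(S) \;=\; \prod_{i=1}^{t} \hf(S_i),
\]
where $S_i$ denotes the restriction of $S\subseteq[n]\times[t]$ to block $i$. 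In particular $\hf(\emptyset)=\mu$, and $\E[f^{\oplus t}]=\mu^t$.

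Next I plan to lower bound $L_{1,k}(f^{\oplus t})$ in terms of $L_{1,k}(f)$ by organizing the size-$k$ subsets $S$ according to the pattern $(|S_1|,\dots,|S_t|)$ and keeping only those patterns where exactly one block is nonempty. Each of the $t$ such choices of the active block contributes $|\mu|^{t-1}\cdot L_{1,k}(f)$ (the remaining $t-1$ blocks each supply a factor $|\mu|$). Since every discarded pattern contributes a non-negative term,
\[
L_{1,k}(f^{\oplus t}) \;\ge\; t\,|\mu|^{t-1}\,L_{1,k}(f) .
\]
Combining this with the hypothesis $L_{1,k}(f^{\oplus t})\le b^k$ gives the key inequality
\[
L_{1,k}(f) \;\le\; \frac{b^k}{t\,|\mu|^{t-1}} \quad\text{for every positive integer } t .
\]

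The final step is to optimize the right-hand side. Viewed as a real-valued function, $\phi(t):=t|\mu|^{t-1}$ is maximized at $t^\star = 1/\ln(1/|\mu|)$ with value $\phi(t^\star)=1/(e|\mu|\ln(1/|\mu|))$, and the elementary inequality $|\mu|\ln(1/|\mu|)\le 1-|\mu|$ (which follows from $\ln x\le x-1$ at $x=1/|\mu|$) shows that the continuous optimum is exactly $L_{1,k}(f)\le e(1-|\mu|)\,b^k = 2e\cdot\tfrac{1-|\mu|}{2}\,b^k$, matching the target. To realize this with an integer $t$, I plan to split into two cases. When $|\mu|\le 1/e$ (so $t^\star\le 1$) take $t=1$ and use $1-|\mu|\ge 1-1/e$ to verify $b^k\le 2e\cdot\tfrac{1-|\mu|}{2}\,b^k$ directly. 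When $|\mu|>1/e$ take $t=\lceil 1/\ln(1/|\mu|)\rceil\ge 2$; then $(t-1)\ln(1/|\mu|)<1$ forces $|\mu|^{t-1}>1/e$, and $t\ge 1/\ln(1/|\mu|)$ gives $\phi(t)\ge 1/(e\ln(1/|\mu|))$, after which one application of $|\mu|\ln(1/|\mu|)\le 1-|\mu|$ (absorbing a $1/|\mu|$ factor from the rounding) closes the bound with constant $2e$. The main obstacle is purely quantitative: the integer constraint on $t$ introduces a constant-factor loss relative to the continuous optimum, so the case split must be chosen carefully to keep the final constant at $2e$.
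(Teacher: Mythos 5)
Your overall strategy is the same as the paper's: tensor $f$ into $f^{\oplus t}$, use multiplicativity of Fourier coefficients across disjoint blocks to get $L_{1,k}(f^{\oplus t}) \ge t\,|\mu|^{t-1}\,L_{1,k}(f)$, choose $t=\lceil t^\star\rceil$ with $t^\star = 1/\ln(1/|\mu|)$, and invoke $|\mu|\ln(1/|\mu|)\le 1-|\mu|$. (The paper phrases it as a contradiction whereas you argue directly, but that is cosmetic.) However, there is a genuine quantitative gap in your Case~2.

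Writing $\phi(t)=t|\mu|^{t-1}$, you lower-bound $\phi(t)\ge t^\star|\mu|^{t^\star}=1/(e\ln(1/|\mu|))$ by separately using $t\ge t^\star$ and $|\mu|^{t-1}\ge|\mu|^{t^\star}=1/e$. This yields $L_{1,k}(f)\le e\ln(1/|\mu|)\,b^k$, and to reach the target $e(1-|\mu|)b^k$ you would need $\ln(1/|\mu|)\le 1-|\mu|$, which is false for every $|\mu|\in(0,1)$ (indeed $\ln(1/|\mu|)\ge 1-|\mu|$ always). Applying $|\mu|\ln(1/|\mu|)\le 1-|\mu|$ instead gives $L_{1,k}(f)\le e\,\frac{1-|\mu|}{|\mu|}\,b^k$, and the dangling $1/|\mu|$ is not ``absorbed'' by anything --- even restricted to $|\mu|>1/e$ it only gives a constant around $e^2$, not $2e$. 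A concrete counterexample to your chain: at $|\mu|=1/2$ you have $t=2$, $\phi(2)=1$, so the true bound is $L_{1,k}(f)\le b^k\le (e/2)b^k$; but your estimate $\phi(t)\ge 1/(e\ln 2)\approx 0.53$ only gives $L_{1,k}(f)\le 1.88\,b^k$, which exceeds $(e/2)b^k\approx 1.36\,b^k$.

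The fix is to use the two facts jointly via the unimodality of $\phi$: since $\phi$ is decreasing on $[t^\star,\infty)$ and $t^\star\le t\le t^\star+1$, one gets $\phi(t)\ge\phi(t^\star+1)=(t^\star+1)|\mu|^{t^\star}=(t^\star+1)/e$. The extra $+1$ is exactly what is missing in your bound. Then
\[
L_{1,k}(f)\le \frac{e\,b^k}{t^\star+1}=\frac{e\,b^k\ln(1/|\mu|)}{1+\ln(1/|\mu|)},
\]
and the requirement $\frac{\ln(1/|\mu|)}{1+\ln(1/|\mu|)}\le 1-|\mu|$ is, after clearing denominators, literally $|\mu|\ln(1/|\mu|)\le 1-|\mu|$. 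This is precisely what the paper does (its $\Lambda(t^\star+1)$ computation carries the $\bigl(\tfrac{1}{-\ln(1-2\alpha)}+1\bigr)$ prefactor). Your Case~1 ($|\mu|\le 1/e$) is correct as stated.
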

\begin{proof}
  Suppose not, and let $f \in \calF$ be such that $L_{1,k}(f) > 2e \cdot \frac{1-\abs{\E[f]}}{2} \cdot b^k$.
  We first observe that since $L_{1,k}(\calF) \leq b^k$, it must be the case that $1-\abs{\E[f]} \leq 1/e.$
  Let $\alpha := \frac{1-\abs{\E[f]}}{2} \in [0,\frac{1}{{2e}}]$ so that $\abs{\E[f]} = 1-2\alpha \geq 1 - 1/e$.
  Let $f^{\oplus t}$ be the XOR of $t$ disjoint copies of $f$ on $tn$ variables, where the integer $t$ is to be determined below.
  By our assumption, we have $f^{\oplus t} \in \calF$ and thus
  \begin{align*}
      L_{1,k}(f^{\oplus t})
      &\ge \binom{t}{1} \cdot L_{1,0}(f)^{t-1} \cdot L_{1,k}(f)
      \tag{by disjointness} \\
      &= t \cdot (1-2\alpha)^{t-1} \cdot L_{1,k}(f)
      \tag{$L_{1,0}(f)=\E[f]$} \\
      &> t \cdot (1-2\alpha)^{t-1} \cdot 2e \cdot \alpha \cdot b^k =: \Lambda(t).  
  \end{align*}
  We note that if $\alpha=0$ then $\abs{\E[f]}=1$, so all the Fourier weight of $f$ is on the constant coefficient, and hence the claimed inequality holds trivially.  So we subsequently assume that $0<\alpha\leq \frac{1}{2e}$.
  Let $t^* := \frac{1}{-\ln(1 - 2\alpha)} > 0$.
  It is easy to verify that $\Lambda(t)$ is increasing when $t \leq t^*$, and is decreasing when $t \geq t^*$.
  
  We choose $t = \lceil t^* \rceil$.\ignore{ and finish the proof by a case analysis.}
  Since $\alpha \leq \frac{1}{2e} < \frac{e - 1}{2e} \approx 0.3161$, we have $t^* > 1$ and thus
  \begin{align*}
      L_{1,k}(f^{\oplus t})
      > \Lambda(\ceil{t^*}) 
      \ge \Lambda(t^* + 1) 
      &= \left(\frac{1}{-\ln(1 - 2\alpha)} + 1\right) \cdot (1-2\alpha)^{\frac{1}{-\ln(1 - 2\alpha)}} \cdot 2e \cdot \alpha \cdot b^k \\
      &= \left(\frac{2\alpha}{-\ln(1 - 2\alpha)} + 2\alpha\right) \cdot b^k 
      \ge b^k,
  \end{align*}
  where the last inequality holds for every $\alpha \in (0, \frac{e - 1}{2e}]$ and can be checked via elementary calculations.
  \ignore{
  In the other case where $\alpha \geq \frac{e - 1}{2e}$, we have $t^* \leq 1$ and then $t = 1$. As a consequence,
  \begin{align*}
      L_{1,k}(f^{\oplus t})
      > \Lambda(1) 
      = 2e \cdot \alpha \cdot b^k
      \geq (e - 1) \cdot b^k
      > b^k.
  \end{align*}
  To conclude, we always get a contradiction that $L_{1,k}(f^{\oplus t}) > b^k$ for some function $f^{\oplus t} \in \calF$.
  }This contradicts $L_{1,k}(\calF) \leq b^k$, and the lemma is proved.
\end{proof}

\end{document}